\newcommand{\R}{\mathbb{R}}
\newcommand{\C}{\mathbb{C}}
\newcommand{\al}{\alpha}
\newcommand{\ov}{\overline}
\newcommand{\ve}{\varepsilon}
\newcommand{\om}{\omega}
\newcommand{\Om}{\Omega}
\newcommand{\lam}{\lambda}
\newcommand{\na}{\nabla}
\newcommand{\si}{\sigma}
\newcommand{\Si}{\Sigma}
\newcommand{\ds}{\displaystyle}
\newcommand{\la}{\label}
\newcommand{\beqn}{\begin{eqnarray}}
\newcommand{\eeqn}{\end{eqnarray}}
\newcommand{\be}{\begin{equation}}
\newcommand{\ee}{\end{equation}}
\newcommand{\ba}{\begin{array}}
 \newcommand{\ea}{\end{array}}
\newcommand{\pa}{\partial}
\newcommand{\fr}{\frac}
\newcommand{\ci}{\cite}
\newcommand{\re}{\ref}
\renewcommand{\Re}{\mathop{\mathrm{Re}}} 
\renewcommand{\Im}{\mathop{\mathrm{Im}}}
\DeclareMathOperator{\supp}{supp}
\begin{document}

\renewcommand{\theequation}{\thesection.\arabic{equation}}
\newtheorem{theorem}{Theorem}[section]
\renewcommand{\thetheorem}{\arabic{section}.\arabic{theorem}}
\newtheorem{definition}[theorem]{Definition}
\newtheorem{deflem}[theorem]{Definition and Lemma}
\newtheorem{lemma}[theorem]{Lemma}
\newtheorem{example}[theorem]{Example}
\newtheorem{remark}[theorem]{Remark}
\newtheorem{remarks}[theorem]{Remarks}
\newtheorem{cor}[theorem]{Corollary}
\newtheorem{prop}[theorem]{Proposition}
\pagestyle{myheadings}



\title{\footnotesize{\textit{}}\hfill\hbox{}
\newline\hbox{}\newline\Large{}}
\newcommand\address[1]{\textit{#1}}

\begin{titlepage}

\begin{center}
{\Large\bf
Scattering of Solitons for Dirac \medskip\\
Equation Coupled to a Particle}\\
\vspace{1cm}
{\large A.~I.~Komech}
\footnote{
Supported partly
 by the Alexander von Humboldt
Research Award.
}
\\
{\it Fakult\"at f\"ur Mathematik, Universit\"at Wien\\
and Institute for Information Transmission Problems RAS}\\
 e-mail:~alexander.komech@univie.ac.at
\medskip\\
{\large E.~A.~Kopylova}
\footnote{Supported partly by  the grants of FWF,
DFG and RFBR.}
\\
{\it Institute for Information Transmission Problems RAS\\
B.Karetnyi 19, Moscow 101447,GSP-4, Russia}\\
e-mail:~elena.kopylova@univie.ac.at
\medskip\\
{\large H.~Spohn}
\\
{\it Faculty of Mathematics, TU Munich, Garching, 85747 Germany
}\\
email: spohn@ma.tum.de
\end{center}

\vspace{2cm}
\begin{abstract}
We establish soliton-like asymptotics for finite energy solutions
to the Dirac equation coupled to a relativistic  particle.
Any solution with initial state close to the solitary manifold,
converges in long time limit to a sum of traveling wave and outgoing 
free wave. The convergence holds in global energy norm. The proof uses 
spectral theory and the symplectic projection onto solitary manifold 
in the Hilbert phase space.

\end{abstract}
\end{titlepage}
 

\setcounter{equation}{0}

\section{Introduction}

We prove the long time convergence to the sum of a soliton and dispersive wave
for the Dirac equation coupled to a relativistic particle.
The convergence holds in global energy norm for finite energy solution
with initial state close to the solitary manifold.
Our main motivation is to develop the techniques of Buslaev and Perelman 
\ci{BP1,BP2} in the context of the Dirac equation.
The development is not straightforward because of known 
peculiarities of the Dirac equation: nonpositivity of 
the energy, algebra of the Dirac matrices, etc.
We expect that the result might be extended to nonlinear relativistic 
Dirac equation relying on an appropriate development of our techniques.

Let  $\psi(x)\in\C^4$ be  a Dirac spinor field  in $\R^3$, 
coupled to a relativistic
particle with position $q$ and momentum $p$, governed by
\begin{equation} \label{D}
 \left\{\begin{array}{l}
     i\dot\psi(x,t)=[-i\al_1\pa_1-i\al_2\pa_2-i\al_3\pa_3+\beta m]\psi(x,t)+\rho(x-q(t))
  \\\\
      \dot q(t)=p(t)/\sqrt{1+p^2(t)},
  ~~~~~
      \dot p(t)=\Re\langle\psi(x,t), \nabla\rho(x-q(t))
\rangle
\end{array}\right|\; x\in\R^3
\end{equation}
where $\rho\in C(\R^3,\C^4)$ and $\langle\cdot,\cdot\rangle$ 
stands for the Hermitian
scalar product in $L^2(\R^3)\otimes\C^4$. Here  $\pa_j=\ds\pa/\pa x_j$,
$\al_j$ and $\beta$ are $4\times 4$ Dirac matrices.
The standard representation for the Dirac matrices $\al_j$ and  $\beta$ (in $2\times 2$ blocks) is
\be\la{ba}
\beta=\al_0= \left(\ba{cc}
I_2 & 0\\
0 & -I_2\\
\ea  \right),\quad
\al_j=
\left(\ba{ll}
0 & \sigma_j\\
\sigma_j & 0\\
\ea  \right),\quad j=1,2,3
\ee
where  $I_2$ denotes the  unit  $2\times 2$ matrix and
$$
\sigma_1=  \left(\ba{ll}
0 & 1\\
1 & 0\\
\ea         \right),\quad
\sigma_2=   \left(\ba{cc}
0 & -i\\
i & 0\\
\ea         \right),\quad
\sigma_3=   \left(\ba{cc}
1 & 0\\
0 & -1\\
\ea         \right)
$$
The matrices $\al_j$, $j=0,1,2,3$ are Hermitian  and
satisfy the anticommutation relations
\be\la{aij}
\al_j^*=\al_j,\quad\al_j\al_k+\al_k\al_j=2\delta_{jk}
\ee
We will use the following 
{\it real orthogonality relations}  
\be\la{al-pr}
\beta \psi\cdot\al_j\psi=0,~~j=1,3,~~{\rm and}~~\al_2 \psi\cdot\psi
=0,
~~~~~~\psi\in\R^4
\ee
The system \eqref{D} is translation-invariant and admits soliton solutions
\be\la{sosol}
 s_{a,v}(t)=(\psi_{v}(x-vt-a), vt+a,p_v), \quad p_v=v/\sqrt{1-v^2}
\ee
for all $a,v\in\R^3$ with $|v|<1$.
The states $S_{a,v}:=s_{a,v}(0)$ form the solitary manifold
\be\la{soman}
{\cal S}:=\{ S_{a,v}: a,v\in\R^3, |v|<1 \}
\ee
Our main result is the soliton-type asymptotics 
\be\la{Solas}
  \psi(x,t)\sim\psi_{v_\pm }(x-v_\pm t-a_\pm)+W_0(t)\phi_\pm,\quad t\to\pm\infty
\ee
for solutions to  (\re{D}) with initial data close to the solitary manifold ${\cal S}$.
Here $W_0(t)$ is the dynamical group of
the free Dirac equation, $\phi_\pm$ are the corresponding
{\it asymptotic scattering states}, and the asymptotics hold
{\it in the global} norm of the Hilbert space 
$L^2(\R^3)\otimes\C^4$.
For the particle trajectory we prove that
\be\la{qq}
  \dot q(t)\to v_\pm,\quad q(t) \sim v_\pm t+a_\pm,\quad t\to\pm\infty
\ee
The results are established under the following conditions
on the complex valued charge distribution: for  some $\nu>5/2$
\be\la{ro}
(1+|x|)^\nu|\pa^{\alpha}\rho|\in L^2(\R^3),\quad |\alpha|\le 3
\ee
We assume $\rho(-x)=\rho(x)$, $x\in\R^3$,
 for the simplicity of calculations.
Finally, we assume the  Wiener condition for the Fourier
transform $\hat\rho=(2\pi)^{-3/2}\ds\int e^{ikx}\rho(x)dx$
\be\la{W}
 {\cal B}(k)=m\beta\hat\rho(k)\cdot\hat\rho(k)>0,\quad k\in\R^3
\ee
which is the nonlinear version of the 
Fermi Golden Rule in our case
(cf. \ci{BS,Sig,SW3,SW4}):
the nonlinear perturbation is not orthogonal to the eigenfunctions
of the continuous spectrum of the linear part.
The examples are easily constructed. Namely, let us  rewrite 
(\re{W}) in the form
\be\la{Wr}
 {\cal B}(k)=m[|\hat\rho_1(k)|^2+|\hat\rho_2(k)|^2-
|\hat\rho_3(k)|^2-|\hat\rho_4(k)|^2]>0,\quad k\in\R^3
\ee
Therefore, we can take e.g.  $\rho_1$ constructed in \ci{KSK},
and $\rho_2=\rho_3=\rho_4=0$.

The system (\ref{D}) describes the charged particle interacting with
its ``own'' Dirac field. The asymptotics (\ref{Solas})-(\ref{qq})
mean asymptotic stability of uniform motion, i.e. ``the law of inertia''.
The stability is caused by ``radiative damping'', i.e. radiation of
energy to infinity appearing analytically as a local energy decay for 
solutions to the linearized equation.
The radiative damping
was suggested first by M.Abraham  in 1905 
 in the
context of the
Maxwell-Lorentz equations, \cite{A}.

One could also expect asymptotics (\ref{Solas}) for small perturbations of
the solitons for
the relativistic nonlinear Dirac equations and for the coupled
nonlinear
Maxwell-Dirac equations whose  solitons were constructed in
\ci{EGS}.
Our result models  this situation though the relativistic 
case is still open problem.

Asymptotics of type (\ref{Solas})-(\ref{qq}) were obtained previously
for the  Klein-Gordon and Schr\"odinger  equations coupled 
to the particle \cite{ikv,KK}.
More weak asymptotics of type (\ref{Solas}) 
in the {\it local energy norms}, and
without the dispersive wave, were 
obtained in \ci{IKM} and \ci{KS} for 
the Maxwell-Lorentz and wave equations respectively.

Let us comment on our approach.
For 1D translation invariant Schr\"odinger equation, 
asymptotics of type
(\ref{Solas}) were proved for the first time by Buslaev and Perelman
\cite{BP1,BP2,BS}, and extended by Cuccagna \cite{Cu} for higher dimensions.
Here we develop 
the approach \cite{ikv} 
 where
the general  Buslaev and Perelman strategy has been developed 
for the case of the Klein-Gordon equation: 
i) symplectic orthogonal decomposition of the dynamics near the
solitary manifold,  
ii) modulation equations for the symplectic projection onto the 
manifold, and iii) the 
time decay in the
transversal directions, etc
(see more details in Introduction \ci{ikv}).
We prove the asymptotics 
(\re{Solas})-- (\re{qq}) in Sections 3-11 developing this general
strategy.
One of  difficulties  is caused by well known nonpositivity of the
Hamiltonian for the Dirac equation. Respectively, the energy
conservation does not provide a priori estimate 
for the solution. We obtain  linear in time estimate for $L^2$ norm
of the solution using unitarity of the free Dirac propagator.
The main novelty in our case is thorough 
establishing the appropriate
decay of the linearized dynamics 
in Sections 12-17, and Appendices A, B, and C:
\medskip

I. Main difficulty 
 lies in the proof of the decay $\sim t^{-3/2}$ 
in weighted norms for the free Dirac equation. Here we prove the decay
for the first time (Lemma \re{Vain}). 
The proof relies on the ``soft version'' of the 
strong Huygens principle for the Dirac equation. 
Namely, the free Dirac 
propagator is concentrated mainly near the light cone, while the
contribution of the inner zone is a Hilbert-Schmidt operator.

II. Next difficulty lies in the computation of 
the  spectral properties of the linearized equation at the soliton.
We 
do not postulate any spectral properties of the equation
in contrast to 
majority of the 
 works in the field.
Namely, we
find that under the Wiener condition (\re{W}), the discrete
spectrum consists only from zero point with algebraic multiplicity  6
(Lemma \re{vdec}).
The multiplicity is totally due to the translation invariance of the 
system (\re{D}). 

III. Moreover, we exactly calculate the symplectic orthogonality 
conditions
(\re{soin})
for initial data of the linearized equation.
These conditions are necessary for the proof of the decay.

IV. All computations differ significantly from the case of the
Klein-Gordon  equation \cite{ikv} because of the algebra of the Dirac
matrices. An important role play the 
{\it real orthogonality relations}
(\re{al-pr}) for the Dirac matrices.

Our paper is organized as follows. In Section 2, we formulate the main
result. In  Section 3, we introduce the symplectic projection onto the
solitary manifold. The linearized equation is considered in Sections 4
and 5. In Section 6, we split the dynamics in two components:
along the solitary manifold, and in transversal directions. The time
decay of the transversal component is established in sections 7-10
using the time decay of the linearized dynamics. In
Section 11 we prove the main result. In Sections 12 - 16 we
justify
the time decay of the linearized dynamics relying on the 
 weighted  decay for the
free Dirac equation in a moving frame which is proved 
in  Section 17.
In Appendices A,  B  and C we collect some technical calculations.
\bigskip\\
\setcounter{equation}{0}
\section{Main results}\la{mr}
\subsection{Existence of dynamics}
We consider the Cauchy problem for the  system  (\re{DD}) which we write as
\be\la{SD}
  \dot Y(t)=F(Y(t)),\quad t\in\R:\quad Y(0)=Y_0
\ee
Here $Y(t)=(\psi(t), q(t), p(t))$, $Y_0=(\psi(0), q_0, p_0)$, and all
derivatives are understood in the sense of distributions.
To formulate our results precisely, we need some definitions. We introduce
a suitable phase space for equation  (\ref{SD}).
Let $L^2_{\alpha}$, $\alpha\in\R$, denote the   weighted Agmon spaces with the norm 
$\Vert\psi\Vert_{\alpha}=\Vert\psi\Vert_{L^2_\alpha}:=\Vert(1+|x|)^{\alpha}|\psi|\Vert_{L^2}$,
where $L^2=L^2(\R^3)$.
\begin{definition}
  i) The phase space ${\cal E}$ is the   Hilbert space 
$L^2_0\oplus {\R}^3\oplus {\R}^3$
  of states  $Y=(\psi,q,p)$ with the finite norm
  $$
  \Vert Y\Vert_{\cal E}=\Vert\psi\Vert_{0}+|q|+|p|
  $$
  ii) ${\cal E}_{\alpha}$ is the space 
  $L^2_{\alpha}\oplus {\R}^3\oplus {\R}^3$
  with the  finite norm
  $$
  \Vert Y\Vert_{\alpha}=\Vert \,Y\Vert_{{\cal E}_{\alpha}}=
  \Vert\psi\Vert_{\alpha}+|q|+|p|
$$
\end{definition}

\begin{prop}\la{WPexistence}
  Let (\re{ro}) hold. Then\\
  (i) For every $Y_0\in {\cal E}$ the Cauchy problem (\re{DD}) has a unique
  solution $Y(t)\in C(\R, {\cal E})$.\\
  (ii) For every $t\in\R$, the map $U(t): Y_0\mapsto Y(t)$ is continuous on ${\cal E}$.\\
\end{prop}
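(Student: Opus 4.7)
The plan is to recast (\ref{SD}) as a coupled integral system via Duhamel's formula for the Dirac part, establish local existence and uniqueness by a contraction-mapping argument in $C([0,T],{\cal E})$ for small $T$, extract a priori bounds that preclude finite-time blow-up, and finally derive continuous dependence on initial data by a Gronwall-type estimate.

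First I would write the first line of (\ref{D}) in mild form
\begin{equation*}
\psi(t)=W_0(t)\psi_0-i\int_0^t W_0(t-s)\rho(\cdot-q(s))\,ds,
\end{equation*}
where $W_0(t)=e^{-itH_0}$ is the unitary free Dirac propagator on $L^2\otimes\C^4$, together with
\begin{equation*}
q(t)=q_0+\int_0^t\frac{p(s)\,ds}{\sqrt{1+p(s)^2}},\qquad p(t)=p_0+\int_0^t\Re\langle\psi(s),\nabla\rho(\cdot-q(s))\rangle\,ds.
\end{equation*}
These three formulas define a map $\Phi$ on $C([0,T],{\cal E})$. Hypothesis (\ref{ro}) furnishes $\rho,\nabla\rho\in L^2$; by translation-invariance of the $L^2$ norm one has $\|\rho(\cdot-q_1)-\rho(\cdot-q_2)\|_{L^2}\leq \|\nabla\rho\|_{L^2}|q_1-q_2|$, while $p\mapsto p/\sqrt{1+p^2}$ has Lipschitz constant $\leq 1$ and $(\psi,q)\mapsto \Re\langle\psi,\nabla\rho(\cdot-q)\rangle$ is Lipschitz on bounded subsets of $L^2\oplus\R^3$ with constant controlled by $\|\nabla\rho\|_{L^2}+\|\nabla^2\rho\|_{L^2}$. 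These estimates combined with the unitarity of $W_0(t)$ show that $\Phi$ contracts a suitable ball of $C([0,T],{\cal E})$ for $T$ small depending only on $\|Y_0\|_{\cal E}$; its unique fixed point is a local solution $Y\in C([0,T],{\cal E})$.

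Next I would rule out finite-time blow-up. Because $W_0(t)$ is unitary on $L^2\otimes\C^4$, the mild formula immediately gives the linear bound
\begin{equation*}
\|\psi(t)\|_{L^2}\leq \|\psi_0\|_{L^2}+|t|\,\|\rho\|_{L^2}.
\end{equation*}
This is precisely the step where the nonpositivity of the Dirac energy is sidestepped: the a priori bound usually coming from energy conservation is replaced by $L^2$-unitarity of the free propagator. The particle equations then give $|\dot q|\leq 1$, hence $|q(t)|\leq |q_0|+|t|$, and
\begin{equation*}
|p(t)|\leq |p_0|+\|\nabla\rho\|_{L^2}\int_0^{|t|}\|\psi(s)\|_{L^2}\,ds\leq C(1+t^2),
\end{equation*}
with $C$ depending on $Y_0$. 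Thus $\|Y(t)\|_{\cal E}$ cannot blow up on any finite interval; reapplying the contraction argument on successive short intervals extends the local solution uniquely to all of $\R$, proving (i).

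For (ii), if $Y,Y'$ are the solutions with initial data $Y_0,Y_0'$ respectively, subtracting the mild formulations and invoking the Lipschitz bounds above yields
\begin{equation*}
\|Y(s)-Y'(s)\|_{\cal E}\leq \|Y_0-Y_0'\|_{\cal E}+C(t)\int_0^s\|Y(\tau)-Y'(\tau)\|_{\cal E}\,d\tau,\qquad 0\leq s\leq t,
\end{equation*}
where $C(t)$ depends only on the a priori ${\cal E}$-bounds of $Y,Y'$ on $[0,t]$ obtained above. Gronwall's inequality then yields the continuity of $U(t)$ on ${\cal E}$. The only real subtlety, modest compared with the analysis carried out later in the paper, is precisely the a priori estimate on $\|\psi\|_{L^2}$: the Dirac Hamiltonian is indefinite, so no positive energy functional is available, and unitarity of $W_0(t)$ together with the Duhamel representation is what carries the argument through.
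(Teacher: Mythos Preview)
Your proposal is correct and follows essentially the same route as the paper: Duhamel reformulation with the free Dirac group, local existence by contraction on $C([0,T],{\cal E})$, then a linear-in-$t$ a priori bound on $\Vert\psi(t)\Vert_{L^2}$ to globalize. The one noteworthy difference is how that a priori bound is obtained: the paper differentiates $\Vert\psi(t)\Vert_0^2$ directly along the equation for smooth initial data (getting $\frac{d}{dt}\Vert\psi\Vert_0^2\le C\Vert\psi\Vert_0$) and then passes to general $\psi_0\in L^2$ by density, whereas you read the bound off the mild formula using unitarity of $W_0(t)$, which is cleaner and avoids the approximation step. Your explicit Gronwall argument for continuous dependence is also more detailed than the paper's, which simply remarks that (ii) follows from the local theory plus the a priori estimate.
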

\begin{proof}
{\it Step i)}
First, let us fix an arbitrary $b>0$ and prove {\it (i)-(ii)} for $Y_0\in{\cal E}$ such that
$\Vert\psi_0\Vert_{0}\le b$ and $|t|\le\ve=\ve(b)$ for some sufficiently small $\ve(b)>0$.
Let us rewrite the Cauchy problem \eqref{SD} as
\be\la{WP2.2}
  \dot Y(t)=F_1(Y(t))+F_2(Y(t)),\quad t\in\R:\quad Y(0)=Y_0
\ee
where 
$$
F_1: Y\mapsto ((-\al_j\pa_j-i\beta m)\psi,0,0)
$$
$$
F_2: Y\mapsto (-i\rho(x-q), \,p/\sqrt{1+p^2},\, \Re\int\psi\cdot\na\rho(x-q)dx)
$$
The Fourier transform provides the existence and uniqueness of solution
$Y_1(t)\in C(\R,{\cal E})$ to the linear problem \eqref{WP2.2} with $F_2=0$.
Let $U_1(t): Y_0\mapsto Y_1(t)$ be the corresponding strongly continuous
group of bounded linear operators on ${\cal E}$. Then \eqref{WP2.2} for
$Y(t)\in C(\R,{\cal E})$ is equivalent to the integral Duhamel equation
\be\la{Dugamel}
  Y(t)=U_1(t)Y_0+\int\limits_0^t~ds~U_1(t-s)F_2(Y(s))
\ee
because $F_2(Y(\cdot))\in  C(\R,{\cal E})$ in this case.
The latter follows from  local Lipschitz continuity of the map $F_2$ in ${\cal E}$:
for each $b>0$ there exist a $\varkappa=\varkappa(b)>0$ such that for
all $Y_1=(\psi_1,q_1,p_1), Y_2=(\psi_2,q_2,p_2)\in{\cal E}$
with $\Vert \psi_1\Vert_{0},\;\Vert\psi_2\Vert_{0}\le b$,
$$
  \Vert F_2(Y_1)-F_2(Y_2)\Vert_{\cal E}\le\varkappa\Vert Y_1-Y_2\Vert_{\cal E}
$$
Therefore, by the contraction mapping principle, equation \eqref{Dugamel}
has a unique local solution $Y(\cdot)\in C([-\ve,\ve],{\cal E})$ with
$\ve>0$ depending only on $b$.\\
{\it Step ii)}
Second we derive {\it a priori} estimate. Consider 
$\psi_0\in C_0^{\infty}(\R^3)\otimes\C^4$.
Then
$$
\fr{d}{dt}\Vert\psi\Vert_{0}^2=\int(\ov\psi\cdot\dot\psi+\psi\cdot\dot{\ov\psi})dx
=\int(i\ov\psi\cdot\rho(x-q)-i\psi\cdot{\ov\rho(x-q)})dx \le C \Vert\psi\Vert_{0}
$$
Hence,
$$
\Vert\psi(t)\Vert_{0}\le \fr 12Ct+\Vert\psi(0)\Vert_{0}
$$
Now, the last two equalities (\ref{D}) imply  {\it a priori}
estimates for $|\dot p|$ and $|\dot q|$. The {\it a priori} estimates
for general  initial data 
$\psi_0\in L^2_0$ follow by approximating  initial data by the
functions from $C_0^{\infty}(\R^3)\otimes\C^4$. \\
{\it Step iii)} Properties {\it (i)-(ii)} for arbitrary $t\in\R$ now
follow from the same properties for small $|t|$ and from  {\it a priori} estimate.
\end{proof}
\subsection{Solitary manifold and main result}
Let us compute the  solitons (\re{sosol}). The substitution to (\re{D}) gives the 
stationary equations
\be\la{stfch}
  \left.\begin{array}{l}
  -iv_j\pa_j\psi_v(y)=[-i\al_j\pa_j+\beta m]\psi_v(y)+\rho(y)
  \\\\
  v=p_v/\sqrt{1+p_v^2},\quad
  0=\Re\ds\int\psi_v(y)\cdot\na\rho(y)\,dy
  \end{array}\right|
\ee
Applying Fourier transform to the first equation in (\ref{stfch}) we obtain
$$
(-v_jk_j +\al_jk_j-\beta m)\hat\psi_v(k)=\hat\rho(k)
$$
hence
\be\la{hp}
\hat\psi_v(k)=-\frac{(v_jk_j +\al_jk_j-\beta m)\hat\rho(k)}
{(v_jk_j +\al_jk_j-\beta m)(v_jk_j -\al_jk_j+\beta m)}
=\frac{(v_jk_j +\al_jk_j-\beta m)\hat\rho(k)}{k^2+ m^2-(v_jk_j)^2}
\ee
The soliton is given by the formula
\be\la{sol}
\psi_v(x) = \ds
 \frac {i\gamma} {4\pi} (v_j\pa_j+\al_j\pa_j+i\beta m)\int
\frac {e^{-m|\gamma(y-x)_\|+(y-x)_\bot|}\rho(y)d^3y}
{|\gamma(y-x)_\|+(y-x)_\bot|},\quad
p_v= \gamma v=\ds\fr v{\sqrt{1-v^2}}
\ee
It remains to prove that the last equation of (\re{stfch}) holds. Indeed,
 Parseval identity and equality (\ref{hp}) imply
$$
\Re\!\!\int\!\!\psi_v(y)\cdot \pa_j\rho(y)dy=\Re\!\!\int\!\! ik_j\hat\psi_v(k)\cdot\hat\rho(k)dk
=\Re\!\!\int\!\! ik_j\frac{(v_jk_j \!+\!\al_jk_j\!-\!\beta m)\hat\rho(k)\cdot\hat\rho(k)}
{k^2+ m^2-(v_jk_j)^2}dk=0
$$
since the integrand is pure imaginary   function.
Hence, the soliton solution (\re{sosol})
exists and is defined uniquely for any couple $(a,v)$ with $|v|<1$ and
$a\in\R^3$. Let us denote by  $V:=\{v\in\R^3:|v|<1\}$.
\begin{definition}
  A soliton state is $S(\si):=(\psi_{v}(x-b),b,v)$, where
  $\si:=(b,v)$ with $b\in\R^3$ and $v\in V$.
\end{definition}
Obviously, the soliton solution admits the representation $S(\si(t))$, where
\be\la{sigma}
  \si(t)=(b(t),v(t))=(vt+a,v)
\ee
\begin{definition}
  A solitary manifold is the set ${\cal S}:=\{S(\si):\si\in\Sigma:=\R^3\times V\}$.
\end{definition}

The main result of our paper is the following theorem.
\begin{theorem}\la{main}
  Let \eqref{ro}, and the Wiener condition (\re{W}) hold. Let $\nu>5/2$ be the number from
  \eqref{ro}, and $Y(t)$ be the solution to the Cauchy problem  (\re{SD}) with
  the initial state $Y_0$ which is sufficiently close to the solitary manifold:
  \be\la{close}
   d_0:={\rm dist}_{{\cal E}_\nu}(Y_0,{\cal S})\ll 1
  \ee
  Then the asymptotics  hold for $t\to\pm\infty$,
  \be\la{nas}
    \dot q(t)=v_{\pm}+{\cal O}(|t|^{-2}),\quad q(t)=v_{\pm}t+a_{\pm}+{\cal O}(|t|^{-1})
  \ee
  \be\la{nasf}
    \psi(x,t)=\psi_{v\pm}(x-v_{\pm}t-a_{\pm})+W_0(t)\phi_{\pm}+r_{\pm}(x,t)
  \ee
  with
  \be\la{rm}
     \Vert r_\pm(t)\Vert_{0}={\cal O}(|t|^{-1/2})
  \ee
\end{theorem}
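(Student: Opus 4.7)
The proof I would write follows the Buslaev--Perelman scheme announced in the introduction, and I expect it to be assembled at the end (Section 11) from the material developed in Sections 3--10.

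The starting point is the symplectic orthogonal decomposition
\[
Y(t)=S(\sigma(t))+Z(t),\qquad Z(t)\in T_{S(\sigma(t))}\mathcal S^{\,\bot_{J}},
\]
where $\sigma(t)=(b(t),v(t))\in\Sigma$ is determined by the symplectic projection onto $\mathcal S$ constructed in Section~3. By Proposition~\ref{WPexistence} and the smallness assumption \eqref{close}, this decomposition is well defined on some maximal interval and $\|Z(0)\|_{\nu}=\mathcal O(d_0)$. Differentiating the constraint that $Z(t)$ is symplectically orthogonal to $T_{S(\sigma(t))}\mathcal S$ produces the modulation equations for $\dot\sigma(t)$; these equations are of the schematic form $\dot\sigma(t)-\sigma_{\mathrm{sol}}(v(t))=N(\sigma(t),Z(t))$ with $N$ at least quadratic in $Z$ in the weighted norm $\|\cdot\|_{-\nu}$.

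Substituting the decomposition into \eqref{SD} yields a linearized equation $\dot Z=A(\sigma(t))Z+\text{nonlinear}$, where $A(\sigma)$ is the linearization at $S(\sigma)$ treated in Sections~4--5. The key input is the weighted time decay $\|e^{tA(\sigma)}P_c Z\|_{-\nu}\lesssim (1+|t|)^{-3/2}\|Z\|_\nu$ for $Z$ in the continuous spectral subspace, which is proved in Sections~12--17 using the spectral calculation (discrete spectrum $=\{0\}$ with algebraic multiplicity $6$, Lemma~\ref{vdec}), the symplectic orthogonality conditions, and the $t^{-3/2}$ local decay for the free Dirac equation (Lemma~\ref{Vain}). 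Precisely because the symplectic orthogonality conditions \eqref{soin} are exactly the conditions that project onto the continuous spectral subspace of the frozen generator, Duhamel's formula together with a standard Gronwall/bootstrap argument gives
\[
\|Z(t)\|_{-\nu}=\mathcal O((1+|t|)^{-3/2}),
\]
which via the modulation equations upgrades to $|\dot v(t)|+|\dot b(t)-v(t)|=\mathcal O((1+|t|)^{-3}) $, and hence to the existence of limits $v_\pm,a_\pm$ with the rates stated in \eqref{nas}.

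For the field asymptotics \eqref{nasf}--\eqref{rm} I would proceed by the Cook argument. Define the remainder $r(t):=\psi(t)-\psi_{v(t)}(\cdot-q(t))$; from the Dirac equation coupled to the modulation estimates, $r(t)$ satisfies
\[
i\dot r=H_0 r+g(t),\qquad H_0=-i\alpha\cdot\nabla+\beta m,
\]
with a source $g(t)$ controlled in $L^2$ by the weighted decay of $Z$ and the decay of $\dot v,\dot q-v$. Then $W_0(-t)r(t)=r(0)+\int_0^t W_0(-s)g(s)\,ds$ is Cauchy in $L^2$ as $t\to\pm\infty$, since $\|g(s)\|_0$ is integrable against a suitable power, defining $\phi_\pm$ as the limit and giving an explicit rate. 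Matching against the shift from $\psi_{v(t)}(\cdot-q(t))$ to $\psi_{v_\pm}(\cdot-v_\pm t-a_\pm)$ uses \eqref{nas} and the smoothness of $v\mapsto\psi_v$ in $L^2_0$, producing the $\mathcal O(|t|^{-1/2})$ bound \eqref{rm} (the slow rate comes only from the spatial translation error $|a(t)-a_\pm|\sim |t|^{-1}$ applied to a soliton that is merely $L^2$).

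The main obstacle, and the reason the bulk of the paper is devoted to it, is the weighted $t^{-3/2}$ decay of the linearized propagator: unlike the Klein--Gordon case, the Dirac Hamiltonian is not semibounded, the charge-current identity only yields the linear-in-$t$ $L^2$ bound used in Proposition~\ref{WPexistence}, and one must combine the "soft Huygens" structure of $W_0(t)$ with a detailed analysis of the zero-energy resonance of multiplicity $6$ (entirely due to translation invariance) ruled in by the Wiener condition \eqref{W}. Once this decay and the exact form of the symplectic orthogonality \eqref{soin} are in hand, the assembly described above is essentially formal.
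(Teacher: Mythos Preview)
Your outline is correct and matches the paper's scheme almost exactly: symplectic decomposition, modulation equations quadratic in $\Vert Z\Vert_{-\nu}$, the $(1+|t|)^{-3/2}$ transversal decay (Proposition~\ref{pdec}) feeding back into $|\dot v|+|\dot c|=\mathcal O(t^{-3})$, and then a Cook argument for the field. Two small points of comparison are worth recording.

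First, for the field asymptotics the paper makes a specific choice of accompanying soliton: it sets the soliton velocity equal to the \emph{particle} velocity $\dot q(t)$ rather than the modulation parameter $v(t)$. With this choice the difference $z=\psi-\psi_{\dot q}(\cdot-q)$ solves the free Dirac equation with source $-i\,\ddot q\cdot\nabla_v\psi_{\dot q}(\cdot-q)$ and nothing else; the paper then shows $|\dot p(t)|=\mathcal O(t^{-3/2})$ directly from the last equation of (\ref{D}), the soliton equation (\ref{stfch}), and the weighted decay of $\Psi$. Your version with the modulation $v(t)$ would produce an extra term $(\dot q-v)\cdot\nabla\psi_v$ in the source, which is also $\mathcal O(t^{-3/2})$ in $L^2$, so the argument still closes; the paper's choice is just cleaner.

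Second, your attribution of the $\mathcal O(t^{-1/2})$ rate in (\ref{rm}) is off. The soliton $\psi_v$ is not ``merely $L^2$'': by (\ref{sol}) it decays exponentially, so $\nabla\psi_v\in L^2$ and the matching error $\Vert\psi_{v(t)}(\cdot-q(t))-\psi_{v_+}(\cdot-v_+t-a_+)\Vert_0$ is actually $\mathcal O(t^{-1})$. The bottleneck for (\ref{rm}) is the Cook integral itself, $\int_t^\infty\Vert W_0(-s)g(s)\Vert_0\,ds$ with $\Vert g(s)\Vert_0=\mathcal O(s^{-3/2})$, which gives exactly $\mathcal O(t^{-1/2})$. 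This does not affect the correctness of your scheme, only the accounting.
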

It suffices to  prove the asymptotics  (\re{nas}), (\re{nasf}) for $t\to+\infty$
since the system (\re{D}) is time reversible.

\setcounter{equation}{0}
\section{Symplectic projection}
\subsection{Hamiltonian  structure}
Denote  $\psi_1=\Re\psi$, $\psi_2=\Im\psi$, $\rho_1=\Re\rho$, $\rho_2=\Im\rho$,
$\tilde\al_2=-i\al_2$. Then the system \eqref{D} reads
\begin{equation} \label{DD}
 \!\!\!\!\!\left\{\begin{array}{l}
\dot\psi_1(x,t)=-(\al_1\pa_1+\al_3\pa_3)\psi_1(x,t)+(\tilde\al_2\pa_2+\beta m)\psi_2(x,t)
+ \rho_2(x-q(t))\\
\dot\psi_2(x,t)=-(\tilde\al_2\pa_2+\beta m)\psi_1(x,t)-(\al_1\pa_1+\al_3\pa_3)\psi_2(x,t)
 - \rho_1(x-q(t)) \\
\dot q(t)=p(t)/\sqrt{1+p^2(t)}\\
\dot p(t)=\ds\int\bigl(\psi_1(x,t)\cdot\nabla\rho_1(x-q(t))
+\psi_2(x,t)\cdot\nabla\rho_2(x-q(t))\bigr)dx
\end{array}\right|\; x\in\R^3
\end{equation}
This is a Hamilton system with the Hamilton functional
\beqn\nonumber
  {\cal
    H}(\psi_1,\psi_2,q,p)\!\!\!&=&\!\!\!\frac12\!\int(\psi_1\cdot(\tilde\al_2\pa_2
\!+\!\beta m)\psi_1+\psi_2\cdot(\tilde\al_2\pa_2\!+\!\beta m)\psi_2
  +2\psi_1\cdot(\al_1\pa_1\!+\!\al_3\pa_3)\psi_2)dx\\
\label{Hamil}
\!\!\!&+&\!\!\!\int(\psi_1(x)\cdot\rho_1(x-q)+\psi_2(x)\cdot\rho_2(x-q))dx+\sqrt{1+p^2}
\eeqn
Equation (\ref{DD})  can be written as a Hamilton system
\be\la{ham}
\dot Y=JD{\cal H}(Y),\quad Y=(\psi_1,\psi_2,q,p),\quad J:=\left(
  \begin{array}{cccc}
  0 & I_4 & 0 & 0\\
  -I_4 & 0 & 0 & 0\\
  0 & 0 & 0 & I_3\\
  0 & 0 & -I_3 & 0\\
  \end{array}
  \right)
\ee
where $D{\cal H}$ is the Fr\'echet derivative with respect to $\psi_{1k}$, $\psi_{2k}$,
$k=1,2,3,4$, $p$ and $q$ of the Hamilton functional.

\subsection{Symplectic projection onto solitary manifold}
Let us identify the tangent space to ${\cal E}$, at every point, with ${\cal E}$.
Consider the symplectic form $\Om$ defined on
${\cal E}$ by $\Om=\ds\int d\psi_1(x)\wedge d\psi_2(x)\,dx+dq\wedge dp$, i.e.
\be\la{OmJ}
\Om(Y^1,Y^2)=\langle Y^1,JY^2\rangle,\quad Y^j=(\psi_1^j,\psi_2^j,q^j,p^j)\in {\cal E},
\quad j=1,2
\ee
where
$$
  \langle Y^1,Y^2\rangle:=\langle\psi_{1}^1,\psi_{1}^2\rangle+
  \langle\psi_{2}^1,\psi_{2}^2\rangle+q^1\cdot q^2+p^1\cdot  p^2
$$
and
$\langle\psi_{1}^1,\psi_{1}^2\rangle=\ds\int\psi_{1}^1(x)\cdot\psi_{1}^2(x)dx$ 
stands for the scalar product or its different extensions.
It is clear that the form $\Om$ is non-degenerate, i.e.
$$
  \Om(Y^1,Y^2)=0\,\,~\mbox{\rm for every}~~ \,Y^2\in{\cal E} \,\,\Longrightarrow\,\, Y^1=0
$$
\begin{definition}
  i) $Y^1\nmid Y^2$ means that $Y^1\in{\cal E}$, $Y^2\in{\cal E}$,
  and $Y^1$ is symplectic orthogonal to $Y^2$, i.e. $\Om(Y^1,Y^2)=0$.

  ii) A projection operator ${\bf P}:{\cal E}\to{\cal E}$ is called symplectic orthogonal if
  $Y^1\nmid Y^2$ for $Y^1\in\mbox{\rm Ker}\,{\bf P}$ and $Y^2\in \Im{\bf P}$.
\end{definition}
Let us consider the tangent space ${\cal T}_{S(\si)}{\cal S}$ to
the manifold ${\cal S}$ at a point $S(\si)$.
The vectors $\tau_j:=\pa_{\si_j}S(\si)$, where
$\pa_{\si_j}:=\pa_{b_j}$ and $\pa_{\si_{j+3}}:=\pa_{v_{j}}$ with $j=1,2,3$,
form a basis in ${\cal T}_{\si}{\cal S}$. In detail,
\be\la{inb}
\left.\begin{array}{rclrrrrcrcl}
  \tau_j=\tau_j(v)&:=&\pa_{b_j}S(\si)=
  (&\!\!\!\!-\pa_j\psi_{v1}(y)&\!\!\!\!,&\!\!\!\!-\pa_j\psi_{v2}(y)
   &\!\!\!\!,&\!\!e_j&\!\!\!\!,&\!\!0&\!\!\!\!)\\
   \tau_{j+3}=\tau_{j+3}(v)&:=&\pa_{v_j}S(\si)=
   (&\!\!\!\!\pa_{v_j}\psi_{v1}(y)&\!\!\!\!,&\!\!\!\!
   \pa_{v_j}\psi_{v2}(y)&\!\!\!\!,&\!\!0&\!\!\!\!,&\!\!\pa_{v_j}p_v&\!\!\!\!)
\end{array}\right|\quad j=1,2,3
\ee
where $\psi_{v1}=\Re\psi_v$,  $\psi_{v2}=\Im\psi_v$, 
$y:=x-b$ is the ``moving frame coordinate'', $e_1=(1,0,0)$ etc. Let us stress that
the functions $\tau_j$ will be considered always as the functions of $y$, not of $x$.
Formula (\re{sol}) and  condition (\re{ro}) imply that
\be\la{tana}
   \tau_j(v)\in{\cal E}_\al,\quad v\in V,\quad j=1,\dots,6,\quad \forall\al\in\R
\ee
\begin{lemma}\la{Ome}
  The matrix with the elements $\Om(\tau_l(v),\tau_j(v))$ is non-degenerate
   $\forall v\in V$.
\end{lemma}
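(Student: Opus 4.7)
The $6\times 6$ antisymmetric matrix $M(v):=(\Omega(\tau_l(v),\tau_j(v)))$ has a natural block form $M=\begin{pmatrix}A&B\\-B^T&D\end{pmatrix}$ arising from the splitting into translation indices $\{1,2,3\}$ and velocity indices $\{4,5,6\}$, and non-degeneracy of $M$ is equivalent to $A=D=0$ together with invertibility of $B$.

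Both diagonal blocks vanish by symmetry. For $A_{jk}$ with $j,k\le3$: the $(q,p)$-components of $\tau_j,\tau_k$ contribute nothing, so $A_{jk}=\langle\partial_j\psi_{v1},\partial_k\psi_{v2}\rangle-\langle\partial_j\psi_{v2},\partial_k\psi_{v1}\rangle$; integration by parts rewrites this as a bilinear form in $\partial_j\partial_k\psi_v$ that is symmetric in $(j,k)$, and combined with the antisymmetry of $\Omega$ this forces $A=0$. (Conceptually, $\tau_j$ is the Hamiltonian vector field of the $j$-th translation momentum, and translations commute.) For $D_{jk}$: differentiating the Fourier formula $\hat\psi_v(k)=(v\cdot k+\alpha\cdot k-\beta m)\hat\rho(k)/\Delta(k,v)$ with $\Delta(k,v):=k^2+m^2-(v\cdot k)^2$ produces the factorization $\partial_{v_j}\hat\psi_v(k)=k_j\,u(k,v)$ with $u$ independent of $j$; Parseval then gives $D_{jk}=\mathrm{Im}\int k_jk_k\,\overline u\cdot u\,dk$, which vanishes because $\overline u\cdot u$ is real.

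The main step is non-degeneracy of $B(v)$. Parseval applied to $B_{jk}=\Omega(\tau_j,\tau_{k+3})$ produces
\[
  B_{jk}(v)=-\int k_jk_k\,\overline{\hat\psi_v}\cdot u\,dk+(\partial_{v_k}p_v)_j,
\]
with $\overline{\hat\psi_v}\cdot u$ real-valued. Using the parity $\hat\rho(-k)=\hat\rho(k)$ (from $\rho(-x)=\rho(x)$) to kill all odd-in-$k$ monomials, and the Dirac-algebra identity $(\alpha\cdot k-\beta m)^2=(k^2+m^2)I$ to simplify the numerator, the surviving pieces reassemble into the manifestly non-negative quadratic form
\[
  \xi^T B(v)\xi=\gamma|\xi|^2+\gamma^3(v\cdot\xi)^2+m\int\frac{(\xi\cdot k)^2\,\hat\rho^\dagger\beta\hat\rho}{\Delta^2}\,dk+4m\int\frac{(\xi\cdot k)^2(v\cdot k)^2\,\hat\rho^\dagger\beta\hat\rho}{\Delta^3}\,dk.
\]
Since $\Delta\ge m^2$ on $|v|<1$ and the Wiener condition \eqref{W} gives $\hat\rho^\dagger\beta\hat\rho>0$, every summand is non-negative and the first is strictly positive for $\xi\neq 0$. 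Hence $B(v)$ is symmetric positive definite on all of $V$ and $M(v)$ is non-degenerate. The principal obstacle is precisely this final calculation: cleanly isolating the surviving non-negative contributions requires careful parity accounting (splitting the integrand into its symmetric and antisymmetric parts under $k\to-k$) and use of the Dirac algebra, and only after this bookkeeping can the Wiener condition be deployed to pin down strict positivity for every $v\in V$.
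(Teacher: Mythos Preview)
Your proof is correct and follows essentially the same route as the paper: both show the block structure $\begin{pmatrix}0&\Omega^+\\-\Omega^+&0\end{pmatrix}$ via parity/symmetry arguments in Fourier space, and both compute the off-diagonal block as $\gamma E+\gamma^3 v\otimes v$ plus the integral $\int k\otimes k\,\mathcal{B}(k)\bigl(\Delta^{-2}+4(v\cdot k)^2\Delta^{-3}\bigr)\,dk$, concluding positive definiteness from the $\gamma E$ term together with nonnegativity of the $\mathcal{B}(k)$-weighted integral. Your presentation of the vanishing of the diagonal blocks (integration by parts for $A$, the factorization $\partial_{v_j}\hat\psi_v=k_ju$ for $D$) is a bit slicker than the paper's explicit odd-integrand computations in Appendix~A, but the substance is identical.
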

\begin{proof}
The elements are computed in Appendix A.
  As the result, the matrix $\Om(\tau_l,\tau_j)$ has the form
  \be\la{Omega}
     \Om(v):=(\Om(\tau_l,\tau_j))_{l,j=1,\dots,6}=\left(
     \begin{array}{ll}
         0 & \Om^+(v)\\
         -\Om^+(v) & 0
     \end{array}\right)
  \ee
  where the $3\times3$-matrix $\Om^+(v)$ equals
  \be\la{Wm}
     \Om^+(v)=K+(1-v^2)^{-1/2}E+(1-v^2)^{-3/2}v\otimes v
  \ee
  Here $K$ is a symmetric $3\times3$-matrix with the elements \\
 \be\la{alpha}
 K_{ij}=\int dkk_jk_l{\cal B}(k)\frac{k^2+m^2+3(v_jk_j)^2}{(k^2+m^2-(v_jk_j)^2)^3}\quad
\ee
  where ${\cal B}(k)>0$ is defined in  (\re{W}).
  The matrix $K$ is the integral of the symmetric nonnegative definite
  matrix $k\otimes k=(k_ik_j)$ with a positive weight.
  Hence, the matrix $K$ is  nonnegative definite.
  Since the unite matrix $E$ is positive definite, the matrix $\Om^+(v)$ is  symmetric and
  positive definite, hence non-degenerate. Then
  the matrix $\Om(\tau_l,\tau_j)$ also is non-degenerate.
\end{proof}
Let us introduce the translations $T_a:(\psi(\cdot),q,p)\mapsto(\psi(\cdot-a),q+a,p)$, 
$a\in\R^3$.
Note that the manifold ${\cal S}$ is invariant with respect to the translations.
Let us denote $v(p):=p/\sqrt{1+p^2}$ for $p\in\R^3$.
\begin{definition}
   i) For any $\al\in\R$ and $\overline v<1$ denote by
   ${\cal E}_\al(\overline v)=\{Y=(\psi,q,p)\in{\cal E}_\al:|v(p)|\le\overline v\}$.
   We set ${\cal E}(\overline v):={\cal E}_0(\overline v)$.
\\
  ii) For any  $\tilde v<1$ denote by
  $\Si(\tilde v)=\{\si=(b,v):b\in\R^3, |v|\le \tilde v \}$.
\end{definition}
The next Lemma provide that in a small neighborhood of the soliton
manifold ${\cal S}$ a ``symplectic orthogonal projection''
onto ${\cal S}$ is well-defined.
\begin{lemma}\la{skewpro}(cf.\cite[ Lemma 3.4]{ikv})
  Let (\re{ro}) hold, $\al\in\R$ and $\ov v<1$. Then
  \\
  i) there exists a neighborhood ${\cal O}_{\al}({\cal S})$ of ${\cal S}$ in ${\cal E}_\al$
  and a map ${\bf\Pi}:{\cal O}_{\al}({\cal S})\to{\cal S}$ such that ${\bf\Pi}$ is uniformly
  continuous on ${\cal O}_{\al}({\cal S})\cap {\cal E}_\al(\overline v)$ 
in the metric of ${\cal E}_\al$,
\be\la{proj}
     {\bf\Pi} Y=Y~~\mbox{for}~~ Y\in{\cal S}, ~~~~~\mbox{and}~~~~~
     Y-S \nmid {\cal T}_S{\cal S},~~\mbox{where}~~S={\bf\Pi} Y
  \ee
  ii) ${\cal O}_{\al}({\cal S})$ is invariant with respect to the translations $T_a$, and
  \[ {\bf\Pi} T_aY=T_a{\bf\Pi} Y,~~~~~\mbox{for}~~Y\in{\cal O}_{\al}({\cal S})
  ~~\mbox{and}~~a\in\R^3 \]
  iii) For any $\ov v<1$ there exists a $\tilde v<1$ s.t. ${\bf\Pi} Y=S(\si)$
  with  $\si\in \Si(\tilde v)$ for
  $Y\in{\cal O}_{\al}({\cal S})\cap {\cal E}_\al(\overline v)$.
  \\
  iv) For any $\tilde v< 1$ there exists an $r_\al(\tilde v)>0$ s.t.
  $S(\si)+Z\in{\cal O}_{\al}({\cal S})$ if $\si\in\Si(\tilde v)$
  and $\Vert Z\Vert_\al<r_\al(\tilde v)$.
\end{lemma}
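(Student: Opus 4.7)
The plan is to realize ${\bf\Pi}$ via the implicit function theorem applied to
\be\la{Fdef}
F_j(Y,\si) := \Om(Y - S(\si),\; \tau_j(\si)), \quad j=1,\dots,6,
\ee
on ${\cal E}_\al\times\Si$, where $\tau_j(\si)=\pa_{\si_j}S(\si)$ is the basis of ${\cal T}_{S(\si)}{\cal S}$ from (\re{inb}). The defining orthogonality $Y-S(\si)\nmid{\cal T}_{S(\si)}{\cal S}$ reads exactly $F(Y,\si)=0$, and $F(S(\si),\si)\equiv 0$ on the diagonal, so inverting $F(Y,\cdot)=0$ locally in $\si$ yields the projection by ${\bf\Pi} Y:=S(\si(Y))$.

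First I would compute $\pa_\si F$ at $(S(\si_0),\si_0)$. Bilinearity of $\Om$ gives
$$
  \pa_{\si_l}F_j(S(\si_0),\si_0)
  =-\Om(\tau_l(\si_0),\tau_j(\si_0))+\Om(0,\pa_{\si_l}\tau_j(\si_0))
  =-\Om(\tau_l,\tau_j)(v_0),
$$
which by Lemma \re{Ome} is an invertible $6\times 6$ matrix depending continuously on $v_0\in V$. Since $\tau_j(\si)\in{\cal E}_\al$ for every $\al$ by (\re{tana}) and $F$ is smooth in $\si$ and bounded-linear in $Y$, the standard implicit function theorem furnishes, for every $\si_0\in\Si$, a neighborhood of $S(\si_0)$ in ${\cal E}_\al$ on which there is a unique Lipschitz $\si=\si(Y)$ solving $F(Y,\si(Y))=0$.

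The only genuine obstacle is that ${\cal S}$ is non-compact in $b\in\R^3$, so the IFT neighborhoods could a priori shrink as $|b_0|\to\infty$. I would resolve this by translation equivariance of every ingredient: $T_a S(b,v)=S(b+a,v)$, $T_{a*}\tau_j(b,v)=\tau_j(b+a,v)$, and $\Om$ is invariant under $T_{a*}$. Consequently $F(T_aY,(b+a,v))=F(Y,(b,v))$, so every IFT constant at $(S(b_0,v_0),(b_0,v_0))$ coincides with the one at $(S(0,v_0),(0,v_0))$ and is thus a function of $v_0$ only. Continuity in $v_0$ together with compactness of $\{|v_0|\le\bar v\}$ then produces a $b_0$-independent radius; gluing the local IFT neighborhoods yields a translation-invariant open set ${\cal O}_\al({\cal S})\supset{\cal S}$ with a uniform Lipschitz constant for ${\bf\Pi}$ on ${\cal O}_\al({\cal S})\cap{\cal E}_\al(\bar v)$, which proves (i). The identity $\si(T_a Y)=(b(Y)+a,v(Y))$ follows from uniqueness in the IFT and gives (ii).

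For (iii) and (iv), the uniform IFT above supplies an explicit radius $r=r_\al(\bar v)>0$ and a constant $C$ such that, for any $\si_0=(b_0,v_0)$ with $|v_0|\le\bar v$ and any $Z\in{\cal E}_\al$ with $\Vert Z\Vert_\al<r$, one has $S(\si_0)+Z\in{\cal O}_\al({\cal S})$ and $|\si(S(\si_0)+Z)-\si_0|\le C\Vert Z\Vert_\al$. Shrinking $r$ forces the resulting $v$ to satisfy $|v|\le\bar v+Cr=:\tilde v<1$, which is (iii); the same estimate with $\tilde v$ in place of $\bar v$ is (iv). Apart from the translation step, the whole argument is routine IFT bookkeeping once the Jacobian has been identified with the Gram matrix of Lemma \re{Ome}.
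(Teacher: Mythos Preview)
The paper does not give its own proof of this lemma; it simply cites \cite[Lemma 3.4]{ikv} for the analogous result in the Klein--Gordon setting. Your implicit-function-theorem argument, with the Jacobian identified as the symplectic Gram matrix $\Om(\tau_l,\tau_j)$ from Lemma~\re{Ome} and uniformity in $b$ extracted from translation equivariance, is exactly the standard route and is correct.

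One small point on (iii): the hypothesis $Y\in{\cal E}_\al(\overline v)$ constrains $|v(p_Y)|\le\overline v$, where $p_Y$ is the $p$-component of $Y$, not the velocity $v_0$ of a nearby soliton. Your bound $|v|\le\overline v+Cr$ presupposes $|v_0|\le\overline v$. The bridge is immediate---if $Y=S(\si_0)+Z$ with $\Vert Z\Vert_\al<r$, then $p_Y=p_{v_0}+P$ with $|P|<r$, so $|v(p_Y)|\le\overline v$ forces $|v_0|\le\overline v'$ for some $\overline v'<1$ depending only on $\overline v$ and $r$---but you should say this explicitly. With that adjustment the argument is complete.
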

\smallskip
\noindent
We will call ${\bf\Pi}$ a symplectic orthogonal projection onto ${\cal S}$.
\begin{cor}
  The condition (\re{close}) implies that $Y_0=S+Z_0$ where
  $S=S(\si_0)={\bf\Pi} Y_0$, and
\be\la{closeZ}
 \Vert Z_0\Vert_\nu \ll 1
\ee
  \end{cor}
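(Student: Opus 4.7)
This is essentially a direct consequence of Lemma \re{skewpro}; the plan has two short steps, the first picking a nearby soliton and the second invoking the symplectic projection on a neighborhood of ${\cal S}$.

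\textbf{Step 1 (nearby soliton).} By definition of the infimum in (\re{close}), for any $\eta>0$ there exists $\si'=(b',v')\in\Si$ with $\Vert Y_0-S(\si')\Vert_\nu\le d_0+\eta$. Comparing the third components of $Y_0$ and $S(\si')$ shows that $v'$ lies within $O(d_0+\eta)$ of the momentum $p_0$ of $Y_0$; since $|v'|<1$ strictly, one concludes $|v'|\le \ov v$ for some $\ov v<1$ depending only on $|p_0|$, provided $d_0$ and $\eta$ are chosen small enough.

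\textbf{Step 2 (symplectic projection and smallness).} Apply Lemma \re{skewpro}(iv) with $\tilde v=\ov v$: for $d_0$ so small that $d_0+\eta<r_\nu(\ov v)$, we obtain $Y_0=S(\si')+(Y_0-S(\si'))\in{\cal O}_\nu({\cal S})$. Hence $S:={\bf\Pi}Y_0$ and $Z_0:=Y_0-S$ are well defined; by Lemma \re{skewpro}(iii), $S=S(\si_0)$ with $\si_0\in\Si(\tilde v)$ for some $\tilde v<1$, and (\re{proj}) yields the required symplectic orthogonality $Z_0\nmid {\cal T}_S{\cal S}$. For the smallness (\re{closeZ}), Lemma \re{skewpro}(i) furnishes a modulus of continuity for ${\bf\Pi}$ on ${\cal O}_\nu({\cal S})\cap{\cal E}_\nu(\ov v)$, while (\re{proj}) gives ${\bf\Pi}S(\si')=S(\si')$. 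Thus
$$
\Vert S-S(\si')\Vert_\nu=\Vert{\bf\Pi}Y_0-{\bf\Pi}S(\si')\Vert_\nu\longrightarrow 0\quad\mbox{as}\ d_0\to 0,
$$
and the triangle inequality yields $\Vert Z_0\Vert_\nu\le\Vert Y_0-S(\si')\Vert_\nu+\Vert S(\si')-S\Vert_\nu\ll 1$.

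\textbf{Main obstacle.} There is no genuine obstacle; the only point that requires care is the velocity bound in Step 1, which is needed in order to place both $Y_0$ and $S(\si')$ into the set ${\cal E}_\nu(\ov v)$ on which Lemma \re{skewpro} provides a uniform continuity estimate for ${\bf\Pi}$. Once this is in hand, everything else reduces to the properties of the projection already established.
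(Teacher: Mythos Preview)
Your argument is correct and follows essentially the same route as the paper: pick a nearby soliton $S_1=S(\si')$, place both $Y_0$ and $S_1$ in ${\cal O}_\nu({\cal S})\cap{\cal E}_\nu(\ov v)$, invoke ${\bf\Pi}S_1=S_1$ and the uniform continuity of ${\bf\Pi}$ from Lemma~\re{skewpro}, and finish with the triangle inequality. One tiny wording slip: the $p$-component of $S(\si')$ is $p_{v'}$, not $v'$, so what you actually get from comparing third components is $|p_{v'}-p_0|\le d_0+\eta$, from which the bound $|v'|\le\ov v$ follows via $v'=p_{v'}/\sqrt{1+p_{v'}^2}$; this does not affect the validity of your argument.
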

\begin{proof}
   Lemma \re{skewpro} implies that ${\bf\Pi} Y_0=S$ is well defined for small $d_0>0$.
   Furthermore, the condition (\re{close}) means that there exists a point $S_1\in{\cal S}$
   such that $\Vert Y_0-S_1\Vert_\nu=d_0$. Hence,
   $Y_0,S_1\in {\cal O}_{\nu}({\cal S})\cap{\cal E}_\nu(\overline v)$
   with a $\overline v< 1$ which does not depend on $d_0$ for sufficiently small $d_0$.
   On the other hand, ${\bf\Pi} S_1= S_1$, hence the uniform continuity of the map ${\bf\Pi}$
   implies that $\Vert S_1- S\Vert_\nu\to 0$ as $d_0\to 0$. Therefore, finally,
   $\Vert Z_0\Vert_\nu=\Vert Y_0- S \Vert_\nu \le \Vert Y_0- S_1 \Vert_\nu+
   \Vert S_1-S  \Vert_\nu\le d_0+o(1)\ll 1$ for small $d_0$.
\end{proof}
\setcounter{equation}{0}
\section{Linearization on  solitary manifold}
Let us consider a solution to the system (\re{DD}), and split it as the sum
\be\la{dec}
   Y(t)=S(\si(t))+Z(t)
\ee
where $\si(t)=(b(t),v(t))\in\Sigma$ is an arbitrary smooth function of $t\in\R$.
In detail, denote $Y=(\psi,q,p)$ and $Z=(\Psi,Q,P)$.
Then (\re{dec}) means that
\be \la{add}
  \psi(x,t)=\psi_{v(t)}(x-b(t))+\Psi(x-b(t),t),~~q(t)=b(t)+Q(t),
  ~~p(t)=p_{v(t)}+P(t)
\ee
Let us substitute (\re{add}) to (\re{D}), and linearize the equations in $Z$.
Setting $y=x-b(t)$ which is the ``moving frame coordinate'', we obtain  that
\be\la{addeq}\left.
  \begin{array}{rcl}
    \dot\psi&=&\dot v\cdot \na_v\psi_{v}(y)-\dot b\cdot \na\psi_{v}(y)+
    \dot\Psi(y,t)-\dot b\cdot \na\Psi(y,t)\\\\
    &=&[-\al_j\pa_j-i\beta m](\psi_{v}(y)+\Psi(y,t))-i\rho(y-Q)\\\\
    \dot q&=&\dot b+\dot Q=\ds\fr{p_v+P}{\sqrt{1+(p_v+P)^2}}\\\\
    \dot p&=&\dot v\cdot\na_vp_v+\dot P=\Re\langle\psi_{v}(y)+
    \Psi(y,t),\na\rho(y-Q)\rangle
\end{array}\right|
\ee
Let us extract linear terms in $Q$. First note that
$\rho(y-Q)=\rho(y)-Q \cdot\na\rho(y)+N_1(Q)$, 
$\nabla\rho(y-Q)=\nabla\rho(y)-\nabla (Q\cdot\na\rho(y))+\tilde N_1(Q)$. 
 
The condition \eqref{ro} implies that for $N_1(Q)$ and $\tilde N_1(Q)$
the bound  holds,
\be\la{N}
\Vert N_1(Q)\Vert_{\nu}+\Vert\tilde N_1(Q)\Vert_{\nu}
\le C_\nu(\overline Q)Q^2
\ee
uniformly in $|Q|\le\overline Q$ for any fixed $\overline Q$, where $\nu$ is the parameter
from Theorem \ref{main}. Second, the Taylor expansion gives
$$
\ds\fr{p_v+P}{\sqrt{1+(p_v+P)^2}}=v+\frac 1{\gamma}(P-v(v\cdot P))+N_2(v,P)
$$
where $1/\gamma=\sqrt{1-v^2}=(1+p_v^2)^{-1/2}$, and
\be\la{N2}
|N_2(v,P)|\le C(\tilde v)P^2
\ee
uniformly with respect to $|v|\le\tilde v<1$.
Using the equations (\re{stfch}), we obtain from (\re{addeq})
the following equations for the components of the vector $Z(t)$:
\be\la{Phi}
\left.\begin{array}{rcl}
\dot \Psi(y,t)\!\!\!&= &\!\!\![-\al_j\pa_j-i\beta m]\Psi(y,t)
+\dot b\cdot \na\Psi(y,t)+iQ \cdot\na\rho(y)\\\\
\!\!\!&+&\!\!\!(\dot b-v)\cdot \na\psi_{v}(y)-\dot v\cdot \na_v\psi_{v}(y)-iN_1
\\\\
\dot Q(t)\!\!\! &=&\!\!\!\frac 1{\gamma}(E-v\otimes v)P+(v-\dot b)+N_2
\\\\
\dot P(t)\!\!\! &=&\!\!\!-\dot v\cdot\na_vp_v+\Re\langle\Psi(y,t),\na\rho(y)\rangle+
\Re\langle\na\psi_{v}(y),Q \cdot\na\rho(y)\rangle+N_3(v,Z)
\end{array}\right|
\ee
where $N_3(v,Z)=-\Re\langle\na\psi_{v},N_1(Q)\rangle-
\Re\langle\Psi,\na(Q\cdot \na\rho)\rangle+\Re\langle\Psi,\tilde N_1(Q)\rangle$.
Clearly, $N_3(v,Z)$ satisfies the following estimate
\be\la{N3}
  |N_3(v,Z)|\le C_\nu(\rho,\overline v,\overline Q)
  \Big[Q^2+\Vert\Psi\Vert_{-\nu}|Q|\Big]
\ee
uniformly in $|v|\le\tilde v$ and $|Q|\le \overline Q$
for any fixed  $\tilde v<1$. For the vector version $Z=(\Psi_1,\Psi_2,Q,P)$
with $\Psi_1=\Re\Psi$, $\Psi_2=\Im\Psi$  we rewrite the equations (\re{Phi}) as
\be\la{lin}
\dot Z(t)=A(t)Z(t)+T(t)+N(t),\,\,\,t\in\R
\ee
Here the operator $A(t)=A_{v,w}(t)$ depends on two parameters, $v=v(t)$, and
$w=\dot b(t)$ and can be written in the form
\be\la{AA}
  A_{v,w}\left(\!\!\begin{array}{c}
     \Psi_1 \\ \Psi_2 \\ Q \\ P
\end{array}\!\!\right)\!=\!\left(\!\begin{array}{cccc}
-\!\al_1\pa_1\!-\!\al_2\pa_2\!+\!w\cdot\!\na &\tilde\al_2\pa_2+\beta m & -\na\rho_2\cdot & 0 \\
-(\tilde\al_2\pa_2+\beta m) &-\!\al_1\pa_1\!-\!\al_2\pa_2\!+\!w\cdot\!\na &\na\rho_1\cdot& 0 \\
                  0  &      0          &        0          &    B_v                       \\
\langle\cdot,\na\rho_1\rangle & \langle\cdot,\na\rho_2\rangle &
\langle\na\psi_{vj},\cdot\na\rho_j\rangle & 0
\end{array}\!\right)\left(\!\!\begin{array}{c}
      \Psi_1  \\  \Psi_2  \\ Q \\ P
\end{array}\!\!\right)
\ee
where $B_v=\frac 1{\gamma}(E-v\otimes v)$.
Furthermore,   $T(t)=T_{v,w}(t)$ and $N(t)=N(t,\si,Z)$ in  (\re{lin}) stand for
\be\la{TN}
  T_{v,w}=\left(\begin{array}{c}
  (w-v)\cdot\na\psi_{v1}-\dot v\cdot\na_v\psi_{v1}\\
 (w-v)\cdot\na\psi_{v2}-\dot v\cdot\na_v\psi_{v2}\\
  v-w \\-\dot v\cdot\na_vp_v
  \end{array}\right),\quad
  N(\si,Z)=\left(\begin{array}{c}
  N_{12}(Z) \\ -N_{11}(Z) \\ N_2(v,Z) \\ N_3 (v,Z)
  \end{array}\right)
\ee
where $v=v(t)$, $w=w(t)$, $\si=\si(t)=(b(t),v(t))$, and $Z=Z(t)$.
The estimates (\re{N}), (\re{N2}) and (\re{N3}) imply that
\be\la{N14}
  \Vert N(\si,Z)\Vert_\nu\le C(\tilde v, \overline Q)
  \Vert Z\Vert_{-\nu}^2
\ee
uniformly in $\si\in\Si(\tilde v)$ and $\Vert Z\Vert_{-\nu}\le r_{-\nu}(\tilde v)$
for any fixed  $\tilde v<1$.
\begin{remark}\la{rT}
  {\rm
  i) The term $A(t)Z(t)$ in the right hand side of the equation  (\re{lin})
  is linear  in $Z(t)$, and $N(t)$ is a {\it high order term} in $Z(t)$.
  On the other hand, $T(t)$ is a zero order term which does not vanish at $Z(t)=0$
  since $S(\si(t))$ generally is not a soliton solution if (\re{sigma})
  does not hold (though $S(\si(t))$ belongs to the solitary manifold).
  \\
  ii) Formulas (\re{inb}) and (\re{TN}) imply:
  \be\la{Ttang}
     T(t)=-\sum\limits_{l=1}^3[(w-v)_l\tau_l+\dot v_l\tau_{l+3}]
  \ee
  and hence $T(t)\in {\cal T}_{S(\si(t))}{\cal S}$, $t\in\R$.}
 \end{remark}
 \setcounter{equation}{0}
\section{ Linearized equation}

Here we collect some Hamiltonian and spectral properties of the generator (\re{AA}) 
of the linearized equation. First, let us consider the linear equation
\be\la{line}
   \dot X(t)=A_{v,w}X(t),\quad t\in\R,\quad v\in V,\quad w\in \R^3
\ee
\begin{lemma} \la{haml}(cf. Lemma 5.1 \cite{ikv})
  i) For any $v\in V$ and $w\in \R^3$ the equation (\re{line}) can be written as the Hamilton system
  (cf. (\re{ham})),
  \be\la{lineh}
     \dot X(t)=
     JD{\cal H}_{v,w}(X(t)),\quad t\in\R
  \ee
 where $D{\cal H}_{v,w}$ is the Fr\'echet derivative with respect to
  $\Psi_{1k}$, $\Psi_{2k}$,
$k=1,2,3,4$, $P$ and $Q$
  of the Hamilton functional
  \begin{multline}\la{H0}
    {\cal H}_{v,w}(X)=\fr12\int(\Psi_1\cdot(\tilde\al_2\pa_2+\beta m)\Psi_1
  +\Psi_2\cdot(\tilde\al_2\pa_2+\beta m)\Psi_2
  +2\Psi_1\cdot(\al_1\pa_1+\al_3\pa_3)\Psi_2)dy\\
     +\int\rho_j(y)Q\cdot\na\Psi_j dy+
    \fr12P\cdot B_vP-\fr12\langle Q\cdot\na\psi_{vj}(y),Q\cdot\na\rho_j(y)\rangle,\quad
      X=(\Psi_1,\Psi_2,Q,P)\in {\cal E}
  \end{multline}
  \\
 ii) The skew-symmetry relation holds,
  \be\la{com}
     \Omega(A_{v,w}X_1,X_2)=-\Omega(X_1,A_{v,w}X_2),\quad X_1\in {\cal E},\quad
X_2\in H^1(\R^3)\oplus H^1(\R^3)\oplus\R^3\oplus\R^3
  \ee
  \end{lemma}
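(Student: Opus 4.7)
The proof is a direct verification that exploits the fact that ${\cal H}_{v,w}$ is a quadratic form in $X=(\Psi_1,\Psi_2,Q,P)$. I would write ${\cal H}_{v,w}(X)=\tfrac12\langle X,B_{v,w}X\rangle$ for a formally symmetric operator $B_{v,w}$ obtained by polarization on a dense subspace of ${\cal E}$; then $D{\cal H}_{v,w}(X)=B_{v,w}X$, and (i) reduces to the operator identity $A_{v,w}=JB_{v,w}$. Using the block form of $J$ in \re{ham}, this splits into the four component equations $\dot\Psi_1=D_{\Psi_2}{\cal H}_{v,w}$, $\dot\Psi_2=-D_{\Psi_1}{\cal H}_{v,w}$, $\dot Q=D_P{\cal H}_{v,w}$, $\dot P=-D_Q{\cal H}_{v,w}$, which I would check against the rows of $A_{v,w}X$ in \re{AA}.

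The key algebraic facts feeding the computation are that $\alpha_1$, $\alpha_3$, $\beta$ are real symmetric while $\tilde\alpha_2=-i\alpha_2$ is real antisymmetric (since $\alpha_2$ is Hermitian with purely imaginary entries). Consequently $\int\phi\cdot\tilde\alpha_2\partial_2\psi\,dy$ is symmetric in real pairs $(\phi,\psi)$---the antisymmetries of $\tilde\alpha_2$ and of $\partial_2$ under integration by parts cancel---so $D_{\Psi_j}$ of $\tfrac12\int\Psi_j(\tilde\alpha_2\partial_2+\beta m)\Psi_j\,dy$ is $(\tilde\alpha_2\partial_2+\beta m)\Psi_j$. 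The cross term $\int\Psi_1\cdot(\alpha_1\partial_1+\alpha_3\partial_3)\Psi_2\,dy$ is antisymmetric under the $(\Psi_1,\Psi_2)$-swap modulo integration by parts, producing $\mp(\alpha_1\partial_1+\alpha_3\partial_3)\Psi_{2,1}$ in $D_{\Psi_{1,2}}$. The coupling $\int\rho_jQ\cdot\nabla\Psi_j\,dy$ integrates by parts to give $-Q\cdot\nabla\rho_j$ under $D_{\Psi_j}$ and $\langle\Psi_j,\nabla\rho_j\rangle$ under $D_Q$; the particle pieces contribute $B_vP$ under $D_P$ and the remaining $Q$-entry of the bottom row of $A_{v,w}$ after symmetrizing the quadratic form in $Q$. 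The drift $w\cdot\nabla\Psi_j$ on the diagonal of $A_{v,w}$ comes from a cross bilinear form $\int\Psi_2(w\cdot\nabla)\Psi_1\,dy$, which is symmetric under the $(\Psi_1,\Psi_2)$-swap because $w\cdot\nabla$ is skew-adjoint on $L^2$. Collecting all these partial derivatives and applying $J$ reproduces $A_{v,w}X$ row by row.

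Once (i) is established, part (ii) is an abstract consequence of $A_{v,w}=JB_{v,w}$ with $B_{v,w}$ symmetric: using $J^*=-J$ and $J^2=-I$ one has
\[
\Omega(A_{v,w}X_1,X_2)=\langle JB_{v,w}X_1,JX_2\rangle=\langle X_1,B_{v,w}X_2\rangle
\]
and
\[
-\Omega(X_1,A_{v,w}X_2)=-\langle X_1,J^2 B_{v,w}X_2\rangle=\langle X_1,B_{v,w}X_2\rangle,
\]
yielding \re{com}. The hypothesis $X_2\in H^1\oplus H^1\oplus\R^3\oplus\R^3$ is exactly what is needed so that the differential pieces of $B_{v,w}X_2$ land in $L^2$ and pair legitimately with $X_1\in{\cal E}$. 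The main obstacle is the bookkeeping in (i): tracking the sign flip introduced by $J$ against the symmetries of $\tilde\alpha_2$, $\alpha_1$, $\alpha_3$, $\beta$ and the boundary terms arising from integration by parts on the moving-frame coordinate $y$. These computations are routine but tedious, and it is natural to defer the explicit verification to Appendix~A, where the paper collects such calculations.
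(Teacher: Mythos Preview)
Your approach is correct and is the natural one; the paper itself gives no proof of this lemma, only the pointer ``cf.\ Lemma 5.1 \cite{ikv}''. Writing ${\cal H}_{v,w}(X)=\tfrac12\langle X,B_{v,w}X\rangle$ with $B_{v,w}$ symmetric, checking $A_{v,w}=JB_{v,w}$ row by row, and then deducing (ii) from $J^*=-J$, $J^2=-I$ is exactly how such statements are verified, and your analysis of the real symmetry/antisymmetry of $\alpha_1,\alpha_3,\beta$ versus $\tilde\alpha_2$ is on point.

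Two remarks. First, you correctly observe that the drift $w\cdot\nabla$ on the diagonal of $A_{v,w}$ must come from a cross term $\int\Psi_2\cdot(w\cdot\nabla)\Psi_1\,dy$ in the quadratic form, but you should flag that this summand is \emph{absent} from the printed formula (\re{H0}): despite the subscript $w$ on ${\cal H}_{v,w}$, no $w$ appears anywhere on the right-hand side, so as written the Hamiltonian would only reproduce $A_{v,0}$. Your computation shows precisely what the missing term has to be; this is evidently a typographical omission in the paper rather than a flaw in your argument. Second, your closing deferral to Appendix~A is misplaced: that appendix computes the symplectic products $\Omega(\tau_i,\tau_j)$ of the tangent vectors and contains nothing about Lemma~\re{haml}.
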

  \begin{lemma} \la{ljf}
The operator $A_{v,w}$ acts on the tangent vectors $\tau_j(v)$ to the solitary
  manifold as follows,
  \begin{multline}\la{Atanform}
     A_{v,w}[\tau_j(v)]=(w-v)\cdot\na\tau_j(v),\,\,\,A_{v,w}[\tau_{j+3}(v)]=
     (w-v)\cdot\na\tau_{j+3}(v)+\tau_j(v),\;j=1,2,3
  \end{multline}
\end{lemma}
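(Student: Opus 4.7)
The plan is to reduce the lemma to a direct computation: differentiate the stationary soliton equations \eqref{stfch} with respect to the parameters $b_j$ and $v_j$, match against the rows of $A_{v,w}$, and read off the claim.

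First I would decompose the stationary equation \eqref{stfch} into real and imaginary parts. Using $\psi_v=\psi_{v1}+i\psi_{v2}$, $\rho=\rho_1+i\rho_2$, and $\tilde\al_2=-i\al_2$ (so that $\al_1,\al_3,\tilde\al_2,\beta$ are real while $\al_2$ is purely imaginary), equation \eqref{stfch} becomes the pair of real identities
\begin{equation}\label{plan1}
(-\al_1\pa_1-\al_3\pa_3+v\cdot\na)\psi_{v1}+(\tilde\al_2\pa_2+\beta m)\psi_{v2}+\rho_2=0,
\end{equation}
\begin{equation}\label{plan2}
-(\tilde\al_2\pa_2+\beta m)\psi_{v1}+(-\al_1\pa_1-\al_3\pa_3+v\cdot\na)\psi_{v2}-\rho_1=0.
\end{equation}
The top two rows of $A_{v,w}$ are exactly the operators in \eqref{plan1}--\eqref{plan2} (acting on $(\Psi_1,\Psi_2)$ with the $Q$-coupling $\mp\na\rho_{2,1}\cdot Q$), but with $v$ replaced by $w$ in the first-order term. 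The second equation of \eqref{stfch} supplies the additional identity $\Re\langle\psi_v,\na\rho\rangle=0$ for all $v$.

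For $\tau_j=(-\pa_j\psi_{v1},-\pa_j\psi_{v2},e_j,0)$, applying $A_{v,w}$ and using \eqref{plan1}--\eqref{plan2} with $Q=e_j$, the first two components collapse to $\pa_j$ of the difference between the two stationary equations at parameter $w$ and at parameter $v$; since these differ only in the $v\cdot\na$ term, the output is $(w-v)\cdot\na(-\pa_j\psi_{v1,2})$. The third component is $B_v\cdot 0=0=(w-v)\cdot\na\, e_j$. For the fourth component, the cross-terms coming from rows $(4,1)$, $(4,2)$ and column $(4,3)$ cancel after one integration by parts, using symmetry in $i,j$ of $\int\pa_i\pa_j\rho_k\cdot\psi_{vk}\,dy$. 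For $\tau_{j+3}=(\pa_{v_j}\psi_{v1},\pa_{v_j}\psi_{v2},0,\pa_{v_j}p_v)$, differentiating \eqref{plan1}--\eqref{plan2} in $v_j$ produces, beyond the expected $v\cdot\na$-term on $\pa_{v_j}\psi_{v,k}$, an additional $-\pa_j\psi_{v,k}$ (from differentiating $v\cdot\na$); rearranging yields precisely $A_{v,w}\tau_{j+3}=(w-v)\cdot\na\tau_{j+3}+\tau_j$ in the spinor components. The $P$-component identity $B_v\pa_{v_j}p_v=e_j$ (matching the $e_j$ in $\tau_j$) is a direct algebraic check: writing $\pa_{v_j}p_{v,i}=\gamma\delta_{ij}+\gamma^3 v_iv_j$ with $\gamma=(1-v^2)^{-1/2}$ and using $B_v=\gamma^{-1}(E-v\otimes v)$ together with $\gamma^2(1-|v|^2)=1$, the cross-term collapses to $\delta_{ij}$. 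The fourth component vanishes upon applying $\pa_{v_j}$ to $\Re\langle\psi_v,\na\rho\rangle=0$.

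The proof is essentially bookkeeping, with no analytic obstacle: all tangent vectors belong to $\mathcal E_\al$ for every $\al\in\R$ by \eqref{tana}, so every integration by parts is legitimate, and the identities are pointwise. The only real risk is losing a sign or misidentifying which real/imaginary block the Dirac operator contributes to; carrying out the $\tilde\al_2$ decomposition explicitly at the start, as in \eqref{plan1}--\eqref{plan2}, neutralises this. The conceptual content is simply that $\mathcal S$ is invariant under the symmetries (translations in $b$, boosts in $v$) built into the nonlinear system, so the variational derivatives of the soliton in these symmetry directions are generalised kernel vectors of the linearisation at $w=v$; the $(w-v)\cdot\na$ correction for $w\neq v$ records the discrepancy between the moving frame used in $A_{v,w}$ and the genuine soliton velocity.
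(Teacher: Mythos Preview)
Your approach is essentially the same as the paper's: differentiate the stationary soliton equations \eqref{stfch} in $b_j$ and $v_j$, then compare with the block structure of $A_{v,w}$ in \eqref{AA}. The paper records the differentiated identities in complex form (its equations labeled \eqref{stinb}) and then asserts that \eqref{Atan} follows directly; you instead work from the outset in the real/imaginary split matching the rows of $A_{v,w}$, and you spell out the component checks (the $i\leftrightarrow j$ symmetry for the fourth row on $\tau_j$, the algebraic identity $B_v\,\pa_{v_j}p_v=e_j$, and the vanishing of $\pa_{v_j}\Re\langle\psi_v,\na\rho\rangle$) that the paper leaves implicit.
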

\begin{proof}
In detail, we have to show that
$$
A_{v,w}\left(\ba{c}
-\pa_j\psi_{v1}  \\-\pa_j\psi_{v2}                           \\ e_j \\ 0
\ea\right)=\left(\ba{c}
(v-w)\cdot\na\pa_j\psi_{v1}    \\(v-w)\cdot\na\pa_j\psi_{v2} \\ 0  \\ 0
\ea\right)
$$
\be\la{Atan}
A_{v,w}\left(\ba{c}
\pa_{v_j}\psi_{v1}   \\ \pa_{v_j}\psi_{v2}   \\ 0 \\ \pa_{v_j}p_{v}
\ea\right)=\left(\ba{c}
(w-v)\cdot\na\pa_{v_j}\psi_{v1} \\(w-v)\cdot\na\pa_{v_j}\psi_{v2} \\ 0 \\ 0
\ea\right)+\left(\ba{c}
-\pa_{j}\psi_{v1}   \\ -\pa_{j}\psi_{v2}   \\e_j \\ 0
\ea\right)
\ee
Indeed, differentiate the equations (\re{stfch}) in $b_j$
 and $v_j$, and obtain that the derivatives of soliton state in
parameters satisfy the following equations,
\be\la{stinb}
\left.\ba{rclrcl}
-v\cdot\na\pa_j\psi_{v}\!\!\!\!&=&\!\!\!\![-\al\cdot\na-i\beta m]\pa_j\psi_{v}-i\pa_j\rho
\\\\
-\pa_j\psi_{v}-v\cdot\na\pa_{v_j}\psi_{v}\!\!\!\!&=&\!\!\!\![-\al\cdot\na-i\beta m]\pa_{v_j}\psi_{v}
\\\\
\pa_{v_j}p_{v}\!\!\!\!&=&\!\!\!\! e_j(1-v^2)^{-1/2}+\ds v\fr{v_j}{(1-v^2)^{3/2}}
\\\\
0\!\!\!\!&=&\!\!\!\!\langle\pa_{v_j}\psi_{v1},\na\rho_1\rangle
+\langle\pa_{v_j}\psi_{v2},\na\rho_2\rangle
\ea~~~~~~~~~~~~~~\right|
\ee
for $j=1,2,3$.
Then (\re{Atan}) follows from (\re{stinb}) by definition
of $A$ in (\re{AA})
\end{proof}
\begin{cor}\la{ceig}
  Let  $w=v\in V$. Then $\tau_j(v)$  are eigenvectors,
  and $\tau_{j+3}(v)$ are root vectors of the
  operator $A_{v,v}$, corresponding to zero eigenvalue, i.e.
  \be\la{Atanformv}
    A_{v,v}[\tau_j(v)]=0,\,\,\,A_{v,v}[\tau_{j+3}(v)]=
    \tau_j(v),\,\,\,j=1,2,3.
  \ee
 \end{cor}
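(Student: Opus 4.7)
The plan is to derive this corollary as an immediate specialization of Lemma \ref{ljf}, which has just been established. That lemma gives the explicit action of $A_{v,w}$ on the basis $\{\tau_j(v), \tau_{j+3}(v)\}_{j=1,2,3}$ of the tangent space ${\cal T}_{S(\si)}{\cal S}$ for arbitrary parameters $v\in V$ and $w\in\R^3$, namely
\[
A_{v,w}[\tau_j(v)]=(w-v)\cdot\na\tau_j(v),\qquad A_{v,w}[\tau_{j+3}(v)]=(w-v)\cdot\na\tau_{j+3}(v)+\tau_j(v).
\]
All that is needed is to set $w=v$.

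With that substitution, the factor $(w-v)$ vanishes identically, so the transport terms $(w-v)\cdot\na\tau_j(v)$ and $(w-v)\cdot\na\tau_{j+3}(v)$ both drop out. The first identity therefore reduces to $A_{v,v}[\tau_j(v)]=0$, which says precisely that $\tau_j(v)$ is an eigenvector of $A_{v,v}$ with eigenvalue zero. The second identity reduces to $A_{v,v}[\tau_{j+3}(v)]=\tau_j(v)$, which says that $\tau_{j+3}(v)$ is mapped into the zero-eigenspace generated by $\tau_j(v)$, hence is a root vector of order two attached to the zero eigenvalue. This yields (\ref{Atanformv}).

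There is no real obstacle here, since the work has been done in Lemma \ref{ljf}; the only thing worth double-checking is that the indexing convention matches, i.e.\ that the root vector $\tau_{j+3}(v)=\pa_{v_j}S(\si)$ is paired with the eigenvector $\tau_j(v)=\pa_{b_j}S(\si)$ for the \emph{same} index $j\in\{1,2,3\}$, which is exactly how Lemma \ref{ljf} is stated. Conceptually, the result reflects the fact that at $w=v$ the ``moving frame'' coincides with the soliton's own velocity, so translations along ${\cal S}$ generated by $\pa_{b_j}$ become genuine symmetries of the linearized flow, while $\pa_{v_j}$ generates a secular growth along these symmetry directions, producing the Jordan block structure of algebraic multiplicity $6$ at the zero eigenvalue mentioned in the introduction.
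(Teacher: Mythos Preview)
Your proof is correct and is exactly the intended argument: the corollary is an immediate specialization of Lemma~\ref{ljf} obtained by setting $w=v$, which kills the $(w-v)\cdot\na$ transport terms and leaves precisely~(\ref{Atanformv}). The paper itself gives no separate proof of the corollary, relying on this direct substitution.
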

\begin{remark}
{\rm
  For a soliton solution of the system(\re{DD}) we have $\dot b=v$, $\dot v=0$,
  and hence $T(t)\equiv 0$. Thus, the equation (\re{line}) is the linearization
  of the system (\re{DD}) on a soliton solution. In fact, we do not linearize
  (\re{DD}) on a soliton solution, but on a trajectory $S(\si(t))$ with $\si(t)$
  being nonlinear in $t$. We will show later that $T(t)$ is quadratic in $Z(t)$
  if we choose $S(\si(t))$ to be  the symplectic orthogonal projection of $Y(t)$.
  Then (\re{line})  is again the linearization of (\re{DD}).
}
\end{remark}
\setcounter{equation}{0}
\section{Symplectic decomposition of  dynamics}
Here we decompose the dynamics in two components: along the manifold ${\cal S}$ and in
transversal directions. The equation (\re{lin}) is obtained without any assumption on $\si(t)$
in (\re{dec}). We are going to choose $S(\si(t)):={\bf\Pi} Y(t)$, but then we need to know that
\be\la{YtO}
  Y(t)\in {\cal O}_{-\nu}({\cal S}),~~~~~t\in\R
\ee
It is true for $t=0$ by our main assumption (\re{close}) with sufficiently small $d_0>0$.
Then  $S(\si(0))={\bf\Pi} Y(0)$ and $Z(0)=Y(0)-S(\si(0))$ are well defined.
We will prove below that (\re{YtO}) holds if $d_0$ is sufficiently small.
Let us choose an arbitrary $\tilde v$ such that $|v(0)|<\tilde v<1$
and let $\delta=\tilde v-|v(0)|$.
Denote by $r_{-\nu}(\tilde v)$ the positive numbers from Lemma \re{skewpro} iv) which
corresponds to $\al=-\nu$. Then $S(\si)+Z\in {\cal O}_{-\nu}({\cal S})$ if $\si=(b,v)$
with $|v|<\tilde v$ and $\Vert Z\Vert_{-\nu}<r_{-\nu}(\tilde v)$. Note that
$\Vert Z(0)\Vert_{-\nu}<r_{-\nu}(\tilde v)$ if $d_0$ is sufficiently small. Therefore,
$S(\si(t))={\bf\Pi}Y(t)$ and $Z(t)=Y(t)-S(\si(t))$ are well defined for $t\ge 0$ so
small that $|v|<\tilde v$ and $\Vert Z(t)\Vert_{-\nu} < r_{-\nu}(\tilde v)$. This
is formalized by the following standard definition.
\begin{definition}
  $t_*$ is the ``exit time'',
  \be\la{t*}
     t_*=\sup \{t>0: \Vert Z(s)\Vert_{-\nu} < r_{-\nu}(\tilde v),~~
     |v(s)-v(0)|<\delta,~~0\le s\le t \}
  \ee
\end{definition}

One of our main goals is to prove that $t_*=\infty$ if $d_0$ is sufficiently small.
This would follow if we show that
\be\la{Zt}
  \Vert Z(t)\Vert_{-\nu}<r_{-\nu}(\tilde v)/2,~~|v(s)-v(0)|<\delta/2,~~~0\le t < t_*
\ee
Note that
\be\la{Qind}
  |Q(t)|\le\overline Q:= r_{-\nu}(\tilde v), ~~~~~0\le t< t_*
\ee
Now $N(t)$ in (\re{lin}) satisfies, by (\re{N14}), the following estimate,
\be\la{Nest}
  \Vert N(t)\Vert_{\nu}\le C_\nu(\tilde v)\Vert Z(t)\Vert^2_{-\nu},
  \,\,\,0\le t<t_*
\ee
\subsection{Longitudinal dynamics: modulation equations}
From now on we fix the decomposition $Y(t)=S(\si(t))+Z(t)$ for $0<t<t_*$
by setting $S(\si(t))={\bf\Pi} Y(t)$ which is equivalent to the
symplectic orthogonality condition of type (\re{proj}),
\be\la{ortZ}
  Z(t)\nmid{\cal T}_{S(\si(t))}{\cal S},\,\,\,0\le t<t_*
\ee
This allows us to simplify drastically the asymptotic analysis of the dynamical equations
(\re{lin}) for the transversal component $Z(t)$. As the first step, we derive the longitudinal
dynamics, i.e. the ``modulation equations'' for the parameters $\si(t)$.
Let us derive a system of ordinary differential equations for the
vector $\si(t)$. For this purpose, let us write (\re{ortZ}) in the form
\be\la{orth}
  \Om(Z(t),\tau_j(t))=0,\,\,j=1,\dots,6, ~~~~~~~0\le t<t_*
\ee
where the vectors $\tau_j(t)=\tau_j(\si(t))$ span the tangent space
${\cal T}_{S(\si(t))}{\cal S}$. Note that $\si(t)=(b(t),v(t))$, where
\be\la{sit}
  |v(t)|\le \tilde v<1,~~~~~~~~~0\le t<t_*
\ee
by Lemma \re{skewpro} iii). It would be convenient for us to use some other parameters
$(c,v)$ instead of $\si=(b,v)$, where 
\be\la{vw}
c(t)= b(t)-\ds\int^t_0 v(\tau)d\tau,\quad\dot c(t)=\dot b(t)-v(t)=w(t)-v(t), \quad0\le t<t_*
\ee
The following statement can be proved similar to the Lemma 6.2 in \cite{ikv}.
\begin{lemma}\la{mod}
Let $Y(t)$ be a solution to the Cauchy problem (\re{DD}), and (\re{dec}),
(\re{orth}) hold. Then 
\be\la{parameq}
|\dot c(t)|+|\dot v(t)|\le C(\tilde v)\Vert Z\Vert^2_{-\nu}
\ee
\end{lemma}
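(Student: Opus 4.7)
The plan is to turn the symplectic orthogonality conditions $\Om(Z(t),\tau_j(v(t)))=0$ from (\re{orth}) into a $6\times 6$ linear algebraic system for $\dot\si(t):=(\dot c(t),\dot v(t))$, whose leading matrix is (the transpose of) the Gram matrix $\Om(v)$ from Lemma \re{Ome} and whose right-hand side is quadratic in $Z$.

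I first differentiate the identity $\Om(Z(t),\tau_j(v(t)))=0$ in $t$ for $j=1,\dots,6$. Since each $\tau_j$ is a function of the moving-frame coordinate $y$ and depends on $t$ only through $v(t)$,
\[
\Om(\dot Z,\tau_j)+\dot v\cdot\Om(Z,\na_v\tau_j)=0.
\]
Substituting (\re{lin}), $\dot Z=A_{v,w}Z+T+N$, the skew-symmetry (\re{com}) gives $\Om(A_{v,w}Z,\tau_j)=-\Om(Z,A_{v,w}\tau_j)$. By Lemma \re{ljf}, with $w-v=\dot c$ from (\re{vw}), this equals $-\dot c\cdot\Om(Z,\na\tau_j)$ for $j=1,2,3$ and $-\dot c\cdot\Om(Z,\na\tau_j)-\Om(Z,\tau_{j-3})$ for $j=4,5,6$; the extra term $\Om(Z,\tau_{j-3})$ is killed by the standing orthogonality (\re{orth}). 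From Remark \re{rT}\,ii), $T(t)=-\sum_{k=1}^{6}\dot\si_k\tau_k$, hence $\Om(T,\tau_j)=-\sum_k\dot\si_k\Om(\tau_k,\tau_j)=-(\Om(v)^T\dot\si)_j$.

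Collecting all this I arrive at the linear system
\[
[\Om(v)^T+E(t)]\dot\si(t)=\bigl(\Om(N(t),\tau_j)\bigr)_{j=1}^{6},
\]
where the $6\times 6$ correction matrix $E(t)$ has entries of the form $\pm\Om(Z(t),\na\tau_j)$ in its first three columns and $\pm\Om(Z(t),\na_v\tau_j)$ in its last three. By (\re{tana}), both $\na\tau_j$ and $\na_v\tau_j$ lie in every weighted space ${\cal E}_\nu$ uniformly for $|v|\le\tilde v$, so pairing against $Z\in{\cal E}_{-\nu}$ yields $|E(t)|\le C(\tilde v)\Vert Z(t)\Vert_{-\nu}$. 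Lemma \re{Ome} ensures $\Om(v)$ is invertible uniformly on $|v|\le\tilde v$; hence so is $\Om(v)^T+E(t)$ once $\Vert Z(t)\Vert_{-\nu}$ is less than a threshold depending only on $\tilde v$, which may be arranged by shrinking $r_{-\nu}(\tilde v)$ in the definition of $t_*$ if necessary.

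Inverting this matrix and using the bound (\re{N14}) for $N$ together with $\Vert\tau_j\Vert_{-\nu}\le C(\tilde v)$ from (\re{tana}),
\[
|\dot\si(t)|\le C(\tilde v)\max_j|\Om(N(t),\tau_j)|\le C(\tilde v)\Vert N(t)\Vert_\nu\le C(\tilde v)\Vert Z(t)\Vert_{-\nu}^2
\]
for $0\le t<t_*$, which is the claimed (\re{parameq}). The main delicate point is the combined bookkeeping of the Jordan structure of $A_{v,w}$ from Lemma \re{ljf} together with the standing orthogonality (\re{orth}): only because the Jordan term $\tau_{j-3}$ produced in $A_{v,w}\tau_{j+3}$ is cancelled by $\Om(Z,\tau_{j-3})=0$ does the leading matrix at $Z=0$ come out as the non-degenerate $\Om(v)^T$, which is what allows Lemma \re{Ome} to close the argument.
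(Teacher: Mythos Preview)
Your argument is correct and is exactly the standard modulation-equations derivation that the paper defers to \cite{ikv}: differentiate the orthogonality (\re{orth}), use (\re{lin}) together with the skew-symmetry (\re{com}) and the action (\re{Atanform}) of $A_{v,w}$ on tangent vectors (the Jordan term being killed by (\re{orth})), identify $T$ via (\re{Ttang}) as $-\sum_k\dot\si_k\tau_k$ to extract the leading matrix $\Om(v)$, and invert using Lemma \re{Ome} once $\Vert Z\Vert_{-\nu}$ is small. There is no substantive difference from the approach the paper invokes.
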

\subsection{Decay for  transversal dynamics}
In Section 11 we will show that our main Theorem \re{main} can be derived from
the following time decay of the transversal component $Z(t)$:
\begin{prop}\la{pdec}
   Let all conditions of Theorem \re{main} hold. Then $t_*=\infty$, and
   \be\la{Zdec}
     \Vert Z(t)\Vert_{-\nu}\le \ds\fr {C(\rho,\overline v,d_0)}{(1+|t|)^{3/2}},
     ~~~~~t\ge0
   \ee
\end{prop}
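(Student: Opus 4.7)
The plan is to close a bootstrap (majorant) estimate on the transversal dynamics using Duhamel's formula against a frozen linearized semigroup. Fix a reference velocity $v_1$ (for instance $v_1:=v(0)$, to be refreshed to the outgoing $v_+$ once $v(t)$ has stabilized) and rewrite (\re{lin}) as
\[
\dot Z(t)=A_1 Z(t)+\bigl(A(t)-A_1\bigr)Z(t)+T(t)+N(t),\quad A_1:=A_{v_1,v_1},
\]
so that
\[
Z(t)=e^{A_1 t}Z(0)+\int_0^t e^{A_1(t-s)}\bigl[(A(s)-A_1)Z(s)+T(s)+N(s)\bigr]\,ds.
\]
The aim is to establish the majorant $\Vert Z(t)\Vert_{-\nu}\le M(1+t)^{-3/2}$ for $0\le t<t_*$, with $M$ proportional to $d_0$; a continuity/contradiction argument then forces $t_*=\infty$.

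The key analytic input is the weighted decay $\Vert e^{A_1 t}X_0\Vert_{-\nu}\le C(1+t)^{-3/2}\Vert X_0\Vert_\nu$ for initial data $X_0$ that are symplectic-orthogonal to the six-dimensional generalized null space spanned by $\tau_j(v_1),\,j=1,\dots,6$; this is the content of the linearized decay theorem of Sections 12--16, based ultimately on the free-Dirac decay of Lemma \re{Vain}. Since the running orthogonality (\re{orth}) is with respect to ${\cal T}_{S(\si(t))}{\cal S}$ rather than ${\cal T}_{S(\si_1)}{\cal S}$, I would apply the time-independent symplectic projector ${\bf P}_1$ onto the symplectic complement of ${\cal T}_{S(\si_1)}{\cal S}$ to the Duhamel formula, and absorb the tangential component $({\bf 1}-{\bf P}_1)Z(t)$, which by Lemma \re{mod} and the uniform continuity of ${\bf\Pi}$ is of size $O(|\si(t)-\si_1|\,\Vert Z(t)\Vert_{-\nu})$, into the higher-order remainder.

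The three source terms are then controlled as follows. The nonlinear term obeys $\Vert N(s)\Vert_\nu\le C\Vert Z(s)\Vert_{-\nu}^2$ by (\re{Nest}). The tangential inhomogeneity satisfies, by (\re{Ttang}) and Lemma \re{mod}, the estimate $\Vert T(s)\Vert_\nu\le C(|\dot c(s)|+|\dot v(s)|)\le C\Vert Z(s)\Vert_{-\nu}^2$. The commutator $(A(s)-A_1)Z(s)$ is, up to bounded multipliers in the weighted spaces, linear in $|\si(s)-\si_1|$ times $Z(s)$, and since $|\si(s)-\si_1|\le C\int_0^s\Vert Z(\tau)\Vert^2_{-\nu}\,d\tau$ by Lemma \re{mod}, this term is again effectively of higher order in $Z$. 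Substituting the ansatz $\Vert Z(s)\Vert_{-\nu}\le M(1+s)^{-3/2}$ into Duhamel and using the convolution bound $\int_0^t(1+t-s)^{-3/2}(1+s)^{-3}\,ds\le C(1+t)^{-3/2}$ gives $\Vert Z(t)\Vert_{-\nu}\le C_0 d_0(1+t)^{-3/2}+C_1 M^2(1+t)^{-3/2}$, which closes for $d_0\ll 1$ and $M=2C_0 d_0$. If $t_*<\infty$, the resulting decay would force $\Vert Z(t_*)\Vert_{-\nu}<r_{-\nu}(\tilde v)/2$ and, by integrating $|\dot v|\le C\Vert Z\Vert_{-\nu}^2$, $|v(t_*)-v(0)|<\delta/2$, contradicting (\re{t*}); hence $t_*=\infty$.

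The principal obstacle I anticipate is the weighted linear decay for $e^{A_1 t}$ itself. Unlike the Klein-Gordon setting of \ci{ikv}, the Dirac generator has spectrum extending to $-\infty$ and no positivity of the Hamiltonian (\re{H0}), so the decay cannot be read off from energy conservation. One must show that $A_{v_1,v_1}$ has purely absolutely continuous spectrum on the symplectic complement of the six-dimensional zero root space---the Wiener condition (\re{W}) is precisely what excludes embedded resonances, playing the role of the Fermi golden rule here---and that the perturbed resolvent is regular enough at the thresholds $\pm m$ in the moving frame for a Born series starting from the free-Dirac decay to converge in weighted norms. A secondary technical issue is keeping the symplectic projectors at different reference velocities compatible over long time intervals, which I would handle by refreshing $v_1$ once $v(t)$ has varied sufficiently, invoking the translation invariance from Lemma \re{skewpro}(ii) together with the smoothness of $\tau_j(v)$ in $v$.
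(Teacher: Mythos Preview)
Your overall architecture---Duhamel against a frozen semigroup, projection by ${\bf P}_1$, majorant $m(t)=\sup_s(1+s)^{3/2}\Vert Z(s)\Vert_{-\nu}$, and the quadratic inequality $m\le C d_0+C(m^2+m^3)$---is exactly the paper's strategy (Sections~7--10). But one step in your sketch does not go through as written and is precisely the place where the paper introduces an extra idea.

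You write that the commutator $(A(s)-A_1)Z(s)$ is ``up to bounded multipliers in the weighted spaces, linear in $|\si(s)-\si_1|$ times $Z(s)$''. Look at the field block of $A(t)-A_1$: it is $[w(t)-v_1]\cdot\nabla$, acting on $\Psi$. The scalar coefficient $w(t)-v_1$ is indeed small, but $\nabla$ is \emph{unbounded} on $L^2_{\pm\nu}$, and to feed this term into the linearized decay estimate (\re{frozenest}) you would need $\Vert[w(s)-v_1]\cdot\nabla\Psi(s)\Vert_{\nu}$, a quantity you have no control over. So the commutator cannot be treated as a perturbation in the weighted norms.

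The paper removes this term altogether by a moving-frame shift (Definition~\re{d71}): one passes to $y_1=y+d_1(t)$ with $d_1(t)=\int_{t_1}^t(w(s)-v(t_1))\,ds$, which absorbs $[w(t)-v(t_1)]\cdot\nabla$ into $\partial_t$. After this change of variables the residual operator $B_1(t)$ acts only on the finite-dimensional $(Q,P)$ block and obeys the genuinely higher-order bound (\re{Bestc}). Two side effects of the shift have to be managed: the nonlinear term picks up a factor $(1+|d_1(t)|)^{\nu}$ (Lemma~\re{destc}), and the running orthogonality ${\bf\Pi}_{v(t)}Z(t)=0$ must be compared to ${\bf\Pi}_{v(t_1)}Z_1(t)=0$ in the shifted frame (Lemma~\re{Z1P1Z1}). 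Both are controlled because $|d_1(t)|\le C m^2(t_1)$ stays uniformly bounded. Note also that the paper freezes at the \emph{right} endpoint $v_1=v(t_1)$ for each $t_1<t_*'$, not at $v(0)$ with later ``refreshing''; this makes $d_1(t_1)=0$ and keeps the drift estimate (\re{d1est}) uniform in $t_1$, avoiding the bookkeeping your refresh scheme would require.
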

We will derive (\re{Zdec}) in Sections 7-10 from our equation (\re{lin}) for the
transversal component $Z(t)$. This equation can be specified using Lemma \re{mod}.
Indeed, the lemma implies that
\be\la{Tta}
  \Vert T(t)\Vert_{\nu}\le C(\tilde v)\Vert Z(t)\Vert^2_{-\nu},
  ~~~~~~~~~0\le t<t_*
\ee
by (\re{TN})  since $w-v=\dot c$. Thus (\re{lin}) becomes the equation
\be\la{reduced}
  \dot Z(t)=A(t)Z(t)+\tilde N(t), ~~~~~~~~~0\le t<t_*
\ee
where $A(t)=A_{v(t),w(t)}$, and $\tilde N(t):=T(t)+N(t)$ satisfies the estimate
\be\la{redN}
   \Vert\tilde  N(t)\Vert_{\nu}\le C(\tilde v,\ov Q)\Vert Z(t)\Vert^2_{-\nu},~~~~
   ~~~~~~~~~0\le t<t_*
\ee
In all remaining part of our paper we will analyze mainly the {\bf basic equation}
(\re{reduced}) to establish the decay (\re{Zdec}). We are going to derive the decay using
the bound (\re{redN}) and the orthogonality condition  (\re{ortZ}).

Similarly \cite{ikv} we reduce the problem to the analysis of the {\it frozen} linear equation,
\be\la{Avv}
  \dot X(t)=A_1X(t), ~~t\in\R
\ee
where $A_1=A_{v_1,v_1}$ with $v_1=v(t_1)$ and a fixed $t_1\in[0,t_*)$. Then  we can apply some
methods of scattering theory and then estimate the error by the method of majorants.

Note, that even for the frozen equation (\re{Avv}), the decay of type  (\re{Zdec})
for all solutions does not hold without  the orthogonality condition of type (\re{ortZ}).
Namely, by  (\re{Atanformv}) the equation (\re{Avv}) admits the {\it secular solutions}
\be\la{secs}
   X(t)=\sum_1^3 C_{j}\tau_j(v)+\sum_1^3 D_j[\tau_j(v)t+\tau_{j+3}(v)]
\ee
which arise  by differentiation of the soliton (\re{sosol}) in the parameters
$a$ and $v$ in the moving coordinate $y=x-v_1t$. Hence, we have to take into account
the orthogonality condition  (\re{ortZ}) in order to avoid the secular solutions.
For this purpose we will apply the corresponding symplectic orthogonal projection
which kills the ``runaway solutions''  (\re{secs}).
\begin{remark}
{\rm
  The solution (\re{secs}) lies in the tangent space  ${\cal T}_{S(\si_1)}{\cal S}$
  with $\si_1=(b_1,v_1)$ (for an arbitrary $b_1\in\R$) that suggests an unstable
  character of the nonlinear dynamics {\it along the solitary manifold}.
}
\end{remark}
\begin{definition}
  i) For $v\in V$, denote by ${\bf\Pi}_v$ the symplectic orthogonal projection
  of ${\cal E}$ onto the tangent space ${\cal T}_{S(\si)}{\cal S}$, and
  ${\bf P}_v={\bf I}-{\bf\Pi}_v$.\\
  ii) Denote by ${\cal Z}_v={\bf P}_v{\cal E}$ the space symplectic orthogonal to
  ${\cal T}_{S(\si)}{\cal S}$ with $\si=(b,v)$.
\end{definition}
Note that by the linearity,
\be\la{Piv}
  {\bf\Pi}_vZ=\sum{\bf\Pi}_{jl}(v)
  \tau_j(v)\Om(\tau_l(v),Z),~~~~~~~~~~Z\in{\cal E}
\ee
with some smooth coefficients ${\bf\Pi}_{jl}(v)$. Hence, the projector ${\bf\Pi}_v$,
in the variable $y=x-b$, does not depend on $b$, and this explains the choice
of the subindex in ${\bf\Pi}_v$ and ${\bf P}_v$.

Now we have the symplectic orthogonal decomposition
\be\la{sod}
  {\cal E}={\cal T}_{S(\si)}{\cal S}+{\cal Z}_v,~~~~~~~\si=(b,v),
\ee
and the symplectic orthogonality  (\re{ortZ}) can be written in the equivalent forms,
\be\la{PZ}
  {\bf\Pi}_{v(t)} Z(t)=0,~~~~{\bf P}_{v(t)}Z(t)= Z(t),~~~~~~~~~0\le t<t_*
\ee
\begin{remark}\la{rZ}
{\rm
  The tangent space ${\cal T}_{S(\si)}{\cal S}$ is invariant under the operator
  $A_{v,v}$ by Lemma \re{ceig} i), hence the space  ${\cal Z}_v$ is also invariant
  by (\re{com}): $A_{v,v}Z\in {\cal Z}_v$ for {\it sufficiently smooth}  $Z\in{\cal Z}_v$.
}
\end{remark}
Below in section 12-18 we will prove the following proposition which will be one of the main
ingredients for proving \eqref{Zdec}. Let us consider the Cauchy problem for the equation \eqref{Avv}
with $A=A_{v,v}$ for a fixed $v\in V$. Recall that the parameter $\nu>5/2$ is also fixed.
\begin{prop}\la{lindecay}
  Let the conditions \eqref{ro}- \eqref{W} hold, $|v|\le\tilde v<1$, and $X_0\in{\cal E}$.
  Then\\
  i) Equation (\re{Avv}), with $A=A_{v,v}$, admits the unique solution
  $e^{At}X_0:=X(t)\in C(\R, {\cal E})$ with the initial condition $X(0)=X_0$.\\
  ii) For $X_0\in{\cal Z}_v\cap{\cal E_{\nu}}$, the decay holds,
  \be\la{frozenest}
     \Vert e^{At}X_0\Vert_{-\nu}\le
     \fr{C_{\nu}(\rho,\tilde v)}{(1+|t|)^{3/2}}\Vert X_0\Vert_{\nu},\quad t\in\R
  \ee
\end{prop}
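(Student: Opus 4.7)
The plan is to treat part i) by standard perturbation theory and part ii) by a Fourier--Laplace representation of the semigroup combined with the weighted decay of the free Dirac propagator (Lemma \re{Vain}) and the spectral information on $A_{v,v}$ at zero (Lemma \re{vdec}).

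For i), decompose $A_{v,v}=L_v+K$, where $L_v$ is the block operator acting on the field components which collects the free Dirac operator together with the convection term $v\cdot\na$, and $K$ contains the finite-rank $\rho$-dependent couplings and the bounded block $B_v$. A Fourier transform shows that $L_v$ generates a strongly continuous group on $L^2_0\otimes\C^4$, and $K$ is bounded on ${\cal E}$, so the bounded perturbation theorem yields the group $e^{At}$ on ${\cal E}$.

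For ii), set $R(\lam):=(A_{v,v}-\lam)^{-1}$ and $R_0(\lam):=(L_v-\lam)^{-1}$. The resolvent identity gives $R(\lam)=R_0(\lam)[I+KR_0(\lam)]^{-1}$. The decay of the free Dirac propagator in Lemma \re{Vain} yields the limiting absorption principle for $R_0$: the boundary values $R_0(i\om\pm0):{\cal E}_\nu\to{\cal E}_{-\nu}$ exist and are continuous for $\om\in\R$. The Wiener condition $(\re{W})$, through Lemma \re{vdec}, ensures that $I+KR_0(i\om)$ is invertible on $\R\setminus\{0\}$ (no embedded eigenvalues or resonances), so the only singularity of $R$ on the imaginary axis is the pole at $\lam=0$, whose root subspace is spanned by the six tangent vectors $\tau_j(v)$ of Corollary \re{ceig}. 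For $X_0\in{\cal Z}_v\cap{\cal E}_\nu$ the residue at $\lam=0$ is a linear combination of the $\tau_j(v)$ whose coefficients are the symplectic pairings $\Om(\tau_l(v),X_0)$; by definition of ${\cal Z}_v$ and the non-degeneracy of $\Om(v)$ from Lemma \re{Ome}, this residue vanishes. Hence the inversion contour may be collapsed onto the imaginary axis, giving
\[
e^{At}X_0=\frac{1}{2\pi}\int_{-\infty}^{\infty} e^{i\om t}\,[R(i\om+0)-R(i\om-0)]X_0\,d\om,
\]
and two integrations by parts in $\om$, controlled by the $(1+|t|)^{-3/2}$ weighted bound of Lemma \re{Vain}, will deliver $(\re{frozenest})$.

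The main obstacle is the analysis at $\lam=0$: I need to verify that the pole of $R(\lam)$ there has exactly the order dictated by the Jordan structure of Corollary \re{ceig}, that the generalized null space of $A_{v,v}$ is precisely six-dimensional (no additional resonance produced by $K$), and that the symplectic orthogonality $X_0\in{\cal Z}_v$ removes its full singular part. This relies on the Wiener condition $(\re{W})$, which through ${\cal B}(k)>0$ controls the relevant residue computations, and on the real orthogonality relations $(\re{al-pr})$ of the Dirac matrices, which make the symplectic algebra close up. A secondary technical point is controlling $R(i\om)$ as $|\om|\to\infty$ so that the integrations by parts are legitimate; this should follow from the block structure of $L_v$ and standard bounds on the free Dirac resolvent.
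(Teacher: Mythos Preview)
Your outline has the right overall architecture but contains a real gap at the threshold points $\lambda=\pm i\mu$, $\mu=m\sqrt{1-v^2}$, which you do not mention. The resolvent boundary values are \emph{not} $C^2$ there: from the explicit Green function (\ref{glam}) one sees a square-root branch point, and in the finite-dimensional reduction $M^{-1}(i\om)=C^\pm+O((\om\mp\mu)^{1/2})$ with $\partial_\om^2 M^{-1}(i\om)=O((\om\mp\mu)^{-3/2})$ (Lemma \ref{Pui}). Hence ``two integrations by parts'' cannot be carried across $\pm\mu$; in fact the rate $t^{-3/2}$ is precisely what such a Puiseux singularity contributes to an oscillatory integral, and you must isolate neighborhoods of $\pm\mu$ with a partition of unity and argue separately there. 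This is not a detail that can be absorbed into ``controlled by Lemma \ref{Vain}''. A smaller slip: absence of embedded eigenvalues on $\R\setminus\{0\}$ is Proposition \ref{regi}, not Lemma \ref{vdec}; the latter is specifically the statement that symplectic orthogonality removes the singularity at $\om=0$.

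There is also a structural difference from the paper worth noting. The paper does not use the abstract identity $R=R_0[I+KR_0]^{-1}$ or a spectral integral for the full operator; instead it solves $(A-\lambda)\tilde X=-X_0$ explicitly. The field components $\tilde\Psi$ are expressed via the free Green operator $G_\lambda$ in terms of $\tilde Q$ and the data (formulas (\ref{Psi})), and substitution into the last two equations of (\ref{eq1}) reduces everything to the $6\times 6$ matrix equation $M(\lambda)(\tilde Q,\tilde P)^T=(Q_0,P_0+\Phi(\lambda))^T$, formula (\ref{Mlam}). All spectral questions---invertibility on $\R\setminus\{0\}$, the Puiseux expansion at $\pm\mu$, the large-$|\om|$ asymptotics (Lemma \ref{162}), and the removal of the second-order pole at $0$ via the explicit conditions (\ref{soin})---become finite-dimensional computations on $M^{-1}(i\om)$. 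The decay of $(Q(t),P(t))$ then follows from the scalar Fourier integral (\ref{QP1i}); the field decay is obtained \emph{afterwards} from the Duhamel representation (\ref{Duh}), feeding in the decay of $Q(t)$ together with Lemma \ref{Vain}. This two-step reduction to a matrix problem bypasses the need for a limiting absorption principle for the full operator-valued $R(i\om)$ and makes the threshold and zero-energy analyses concrete.
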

\setcounter{equation}{0}
\section{Frozen transversal dynamics}

Now  let us fix an arbitrary $t_1\in [0,t_*)$, and
rewrite the equation (\re{reduced}) in a ``frozen form''
\be\la{froz}
   \dot Z(t)=A_1Z(t)+(A(t)-A_1)Z(t)+\tilde N(t),\,\,\,~~~~0\le t<t_*
\ee
where $A_1=A_{v(t_1),v(t_1)}$ and
\begin{equation*}
   A(t)-A_1\! =\!\left(\!\!
  \begin{array}{cccc}
   [w(t)\!-\!v(t_1)]\cdot \na  &                    0    & 0 & 0 \\
   0                       &  [w(t)\!-\!v(t_1)]\cdot \na  &0 & 0 \\
   0                      & 0 & 0 & B_{v(t)}\!-\!B_{v_1(t)}\\
   0  & 0& \langle\na(\psi_{v(t)j}\!-\!\psi_{v(t_1)j}),\na\rho_j\rangle & 0
  \end{array}\!\!\!\right)
\end{equation*}
The next trick is important since it allows us to kill the ``bad terms''
 $[w(t)\!-\!v(t_1)]\cdot \na$ in the operator $A(t)-A_1$.
\begin{definition}\la{d71}
  Let us change the  variables $(y,t)\mapsto (y_1,t)=(y+d_1(t),t)$, where
  \be\la{dd1}
     d_1(t):=\int_{t_1}^t(w(s)-v(t_1))ds, ~~~~0\le t\le t_1
  \ee
\end{definition}
Next define
\be\la{Z1}
   Z_1(t):=(\Psi_1(y_1-d_1(t),t),\Psi_2(y_1-d_1(t),t),Q(t),P(t))
\ee
Then we obtain the final form of the ``frozen equation'' for the transversal dynamics
\be\la{redy1}
  \dot Z_1(t)=A_1Z_1(t)+B_1(t)Z_1(t)+\tilde N_1(t),\,\,\,0\le t\le t_1
\ee
where $\tilde N_1(t)=\tilde N(t)$ expressed in terms of $y=y_1-d_1(t)$,  and
$$
  B_1(t)=\left(
  \begin{array}{cccc}
  0  & 0  & 0 & 0 \\
  0  & 0  & 0 & 0 \\
  0  & 0  & 0 &  B_{v(t)}-B_{v_1(t)}\ \\
  0  & 0  & \langle\na(\psi_{v(t)j}\!-\!\psi_{v(t_1)j}),\na\rho_j\rangle & 0
  \end{array}
  \right)
$$
Let us estimate the ``remaining terms'' $B_1(t)Z_1(t)$ and $\tilde N_1(t)$.
\begin{lemma}\la{Best} The bound holds
\be\la{Bestc}
\Vert B_1(t)Z_1(t)\Vert_{\nu}\le C(\tilde v)\Vert Z(t)\Vert_{-\nu}\int_t^{t_1}
\Vert Z(s)\Vert^2_{-\nu}ds,\quad 0\le t\le t_1
\ee
\end{lemma}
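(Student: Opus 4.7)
The plan is to exploit two independent ingredients: smoothness of the soliton and of $B_v$ in $v$ (which controls the size of $B_1(t)$), and the modulation bound of Lemma \ref{mod} (which controls how much $v(t)$ has moved away from $v(t_1)$).

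First I would reduce the bound to an estimate on $|v(t)-v(t_1)|$. Inspecting the block form of $B_1(t)$, its only nonzero entries are $B_{v(t)}-B_{v(t_1)}$ acting on the $P$–component and the scalar-valued map $\langle\nabla(\psi_{v(t)j}-\psi_{v(t_1)j}),\nabla\rho_j\rangle$ acting on the $Q$–component. The explicit formula \eqref{sol} for $\psi_v$, together with the hypothesis \eqref{ro} and \eqref{tana}, gives that $v\mapsto\nabla\psi_v$ is $C^1$ into $L^2_\nu$ uniformly on $\{|v|\le\tilde v\}$, and $v\mapsto B_v=\gamma^{-1}(E-v\otimes v)$ is smooth. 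Hence
\[
|B_{v(t)}-B_{v(t_1)}|+\|\nabla(\psi_{v(t)}-\psi_{v(t_1)})\|_\nu\le C(\tilde v)\,|v(t)-v(t_1)|,
\]
so that, since the finite-dimensional components satisfy $|Q(t)|+|P(t)|\le C\|Z(t)\|_{-\nu}$,
\[
\|B_1(t)Z_1(t)\|_\nu\le C(\tilde v)\,|v(t)-v(t_1)|\,\|Z(t)\|_{-\nu}.
\]

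Second I would estimate $|v(t)-v(t_1)|$ via Lemma \ref{mod}. Since $\si(\cdot)=(b(\cdot),v(\cdot))$ satisfies \eqref{parameq}, integrating from $t$ to $t_1$ gives
\[
|v(t)-v(t_1)|\le\int_t^{t_1}|\dot v(s)|\,ds\le C(\tilde v)\int_t^{t_1}\|Z(s)\|_{-\nu}^2\,ds.
\]
Combining this with the previous display yields \eqref{Bestc}.

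The only nonobvious point — and the main technical step — is the Lipschitz estimate of $\nabla\psi_v$ in the weighted norm $L^2_\nu$. This is handled by differentiating \eqref{sol} under the integral and using the decay of $\rho$ and its derivatives guaranteed by \eqref{ro}; the factor $(1+|x|)^\nu$ is absorbed because $\nabla\pa_{v_j}\psi_v$ inherits the same exponential decay at spatial infinity as $\psi_v$ itself. Everything else is bookkeeping: the translation $y=y_1-d_1(t)$ used to define $Z_1$ leaves the finite-dimensional components $Q(t),P(t)$ unchanged, so no additional factors enter the estimate.
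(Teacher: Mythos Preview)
Your proof is correct and follows essentially the same route as the paper: both use the modulation bound \eqref{parameq} to control $|v(t)-v(t_1)|$ via $\int_t^{t_1}\Vert Z(s)\Vert_{-\nu}^2\,ds$, and both use smoothness of $v\mapsto B_v$ and $v\mapsto\psi_v$ to bound the operator entries of $B_1(t)$. The only cosmetic difference is that the paper writes the fundamental theorem of calculus directly as $B_{v(t)}-B_{v(t_1)}=\int_{t_1}^t\dot v(s)\cdot\nabla_v B_{v(s)}\,ds$ and bounds the integrand, whereas you first extract a Lipschitz constant and then integrate $|\dot v|$. One small remark: you flag the $L^2_\nu$-Lipschitz bound on $\nabla\psi_v$ as the main technical step, but in fact the relevant entry of $B_1(t)$ is only the \emph{scalar} $\langle\nabla(\psi_{v(t)j}-\psi_{v(t_1)j}),\nabla\rho_j\rangle$, so it suffices that $v\mapsto\langle\nabla\psi_{vj},\nabla\rho_j\rangle$ be $C^1$ on $\{|v|\le\tilde v\}$---no weighted-norm estimate on $\nabla\psi_v$ itself is needed.
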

\begin{proof}
Lemma \ref{mod} implies
$$
|B_{v(t)}-B_{v_1(t)}|\le|\int\limits_{t_1}^{t}\dot v(s)\cdot\na_{v}B_{v(s)}ds|
\le C(\tilde v)\int\limits_{t_1}^{t}\Vert Z(s)\Vert^2_{-\nu}ds
$$
similarly,
$$
|\langle\na(\psi_{v(t)j}-\psi_{v(t_1)j}),\na\rho_j|
\le C(\tilde v)\int\limits_{t_1}^{t}\Vert Z(s)\Vert^2_{-\nu}ds
$$
Therefore,
\beqn\nonumber
\Vert B_1(t)Z_1(t)\Vert_{\nu}&=&
|\langle\na(\psi_{v(t)j}-\psi_{v(t_1)j}),\na\rho_j\rangle Q_1(t)|+
|(B_{v(t)}-B_{v_1(t)})P_1(t)|\\
\nonumber
&\le& C(\tilde v)(|Q(t)|+|P(t)|)\int\limits_{t_1}^{t}\Vert
Z(s)\Vert^2_{-\nu}ds\le C(\tilde v)\Vert Z(t)\Vert_{-\nu}\int_t^{t_1}
\Vert Z(s)\Vert^2_{-\nu}ds
\eeqn
\end{proof}
\begin{lemma}\la{destc}
The   bounds hold 
\be\la{dest1}
\Vert\tilde N_1(t)\Vert_{\nu}\le C(\tilde v,\ov
Q)(1+|d_1(t)|)^{\nu}\Vert Z(t)\Vert^2_{-\nu},\quad 0\le t\le t_1
\ee
\end{lemma}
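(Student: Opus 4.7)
The plan is to reduce the claim to the bound $\Vert\tilde N(t)\Vert_\nu\le C(\tilde v,\overline Q)\Vert Z(t)\Vert_{-\nu}^2$ already established in \eqref{redN}. The only difference between $\tilde N_1(t)$ and $\tilde N(t)$ is the substitution $y=y_1-d_1(t)$ in the field components; the mechanical components $Q(t),P(t)\in\R^3$ are scalars and are unaffected by the change of coordinates. Hence the whole content of the lemma is how spatial translation by $d_1(t)$ interacts with the weighted $L^2$-norm $\Vert\cdot\Vert_\nu$.

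First I would write the four components of $\tilde N_1$ using \eqref{TN}: the $Q$- and $P$-components coincide with those of $\tilde N$, while each field component is of the form $f(y_1-d_1(t))$ with $f(y)$ the corresponding field component of $\tilde N(t)$. Then for any $f\in L^2_\nu$ the change of variable $y=y_1-d_1(t)$ gives
\begin{equation*}
\Vert f(\cdot-d_1(t))\Vert_\nu^2=\int|f(y)|^2(1+|y+d_1(t)|)^{2\nu}\,dy.
\end{equation*}
Combining this with the elementary inequality $1+|y+d_1(t)|\le(1+|y|)(1+|d_1(t)|)$, valid for $\nu>0$ which holds here since $\nu>5/2$, yields
\begin{equation*}
\Vert f(\cdot-d_1(t))\Vert_\nu\le(1+|d_1(t)|)^\nu\Vert f\Vert_\nu.
\end{equation*}
Applying this componentwise and collecting all pieces of $\tilde N_1$ produces $\Vert\tilde N_1(t)\Vert_\nu\le(1+|d_1(t)|)^\nu\Vert\tilde N(t)\Vert_\nu$, after which \eqref{redN} finishes the proof.

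The argument is essentially mechanical; there is no genuine obstacle. The only small point to verify is that $d_1(t)$ is well-defined and finite on $[0,t_1]$, which follows from its definition \eqref{dd1} together with the bound $|w(s)-v(t_1)|\le|w(s)-v(s)|+|v(s)-v(t_1)|\le C(\tilde v)(1+t_1)$ provided by Lemma \ref{mod} and the exit-time condition \eqref{t*}. The factor $(1+|d_1(t)|)^\nu$ on the right-hand side of \eqref{dest1} is thus finite, and the subsequent sections will have to absorb this growth via the time-decay of $\Vert Z(s)\Vert_{-\nu}^2$ when $d_1(t)$ is controlled through the modulation equations.
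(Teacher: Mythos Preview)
Your argument is correct and essentially identical to the paper's own proof: the paper also reduces the claim to the translation estimate $\Vert\Phi(\cdot-d)\Vert_\alpha\le(1+|d|)^\alpha\Vert\Phi\Vert_\alpha$ via the inequality $1+|y+d|\le(1+|y|)(1+|d|)$, and then invokes the bound \eqref{redN}. Your additional remarks about the finiteness of $d_1(t)$ are not needed for the lemma itself but do no harm.
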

\begin{proof}
For any
$\Phi\in L^2_{\alpha}$ and $d\in\R^3$ we have
\beqn\nonumber
\Vert\Phi(y-d)\Vert_{\alpha}^2&=&\int|\Phi(y-d)|^2(1+|y|)^{2\alpha}dy
=\int|\Phi(y)|^2(1+|y+d|)^{2\alpha}dy\\
\nonumber
&\le&\int|\Phi(y)|^2(1+|y|)^{2\alpha}(1+|d|)^{2\alpha}dy
\le(1+|d|)^{2\alpha} \Vert\Phi\Vert_{\alpha}^2,\quad\alpha\in\R
\eeqn
Hence,  the bound  (\ref{dest1}) follows.
\end{proof}
\setcounter{equation}{0}
\section{Integral inequality}
The equation (\re{redy1}) can be written in the integral form:
\be\la{Z1duh}
  Z_1(t)=e^{A_1t}Z_1(0)+\int_0^te^{A_1(t-s)}[B_1Z_1(s)+\tilde N_1(s)]ds,\quad 0\le t\le t_1
\ee
Now we apply the symplectic orthogonal projection ${\bf P}_1:={\bf P}_{v(t_1)}$ to both sides
of \eqref{Z1duh}:
$$
{\bf P}_1 Z_1(t)
=e^{A_1t}{\bf P}_1Z_1(0)+\int_0^te^{A_1(t-s)}{\bf P}_1[B_1Z_1(s)+\tilde N_1(s)]ds
$$ 
The projector  ${\bf P}_1$
commutes with the group $e^{A_1t}$
since the space ${\cal Z}_1:={\bf P}_1{\cal E}$ is invariant with respect to
$e^{A_1t}$ by  Remark \re{rZ}. Applying (\re{frozenest}) we obtain that
$$
  \Vert {\bf P}_1Z_1(t)\Vert_{-\nu}
  \le C\fr{\Vert {\bf P}_1Z_1(0)\Vert_{\nu}}{(1+t)^{3/2}}
  +C\int_0^t\fr{\Vert{\bf P}_1[B_1Z_1(s)+\tilde N_1(s)]\Vert_{\nu}~ds}{(1+|t-s|)^{3/2}}.
$$
The operator ${\bf P}_1={\bf I}-{\bf\Pi}_1$ is continuous in ${\cal E}_\nu$ by
(\re{Piv}). Hence,  (\re{Bestc})-(\re{dest1}) imply
\begin{multline}\la{duhest}
\Vert {\bf P}_1Z_1(t)\Vert_{-\nu}\le\fr{C(\overline d_1(0))}{(1+t)^{3/2}}\Vert Z(0)\Vert_{\nu}\\
   +C(\overline d_1(t))\int_0^t\!\!\fr1{(1+|t-s|)^{3/2}}\left[\Vert Z(s)\Vert_{-\nu}
   \int_s^{t_1}\Vert Z(\tau)\Vert^2_{-\nu}d\tau+\Vert Z(s)\Vert^2_{-\nu}\right]ds,
   \quad 0\le t\le t_1.
 \end{multline}
where $\overline d_1(t):=\sup_{0\le s\le t} |d_1(s)| $.
Let us introduce the ``majorant''
\be\la{maj}
  m(t):=
  \sup_{s\in[0,t]}(1+s)^{3/2}\Vert Z(s)\Vert_{-\nu},\quad t\in [0,t_*).
\ee
Now we reduce further the exit time. 
Denote by $\ve<1$ a fixed positive number which we
will specify below.
\begin{definition} $t_{*}'$ is the exit time
  \be\la{t*'}
  t_*'=\sup \{t\in[0,t_*):m(s)\le \ve,~~0\le s\le t\}.
\ee
\end{definition}
To estimate $d_1(t)$,  note that
 \be\la{wen}
   w(s)-v(t_1)=w(s)-v(s)+v(s)-v(t_1)= \dot c(s)+\int_s^{t_1}\dot v(\tau)d\tau
 \ee
 by (\re{vw}). Hence, (\ref {dd1}), Lemma \ref{mod} and the definition
 (\re{maj}) imply that for $t_1<t_*'$
\begin{multline}\label{d1est}
   |d_1(t)| =|\int_{t_{1}}^t(w(s)-v(t_1))ds|\le
   \int_t^{t_{1}}\left( |\dot c(s)|+\int_s^{t_1}|\dot v(\tau)|d\tau\right)ds\\\\
   \le  C(\tilde v)m^2(t_1)\int_t^{t_1}\left(\frac 1{(1+s)^3}
   +\int_s^{t_1}\frac{d\tau}{(1+\tau)^3}\right) ds
  \le C(\tilde v)m^2(t_1)\le C(\tilde v),\quad 0\le t\le t_1
 \end{multline}
Now we can to replace $C(\overline d_1)$ with $C(\tilde v)$ in (\ref{duhest}): 
for $t_1<t_*'$
\begin{multline}\la{duhestri}
  \! \Vert {\bf P}_1Z_1(t)\Vert_{-\nu}\le\fr{C(\tilde v)}{(1+t)^{3/2}}\Vert Z(0)\Vert_{\nu}\\
  \!\! +C(\tilde v)\int_0^t\!\fr1{(1+|t-s|)^{3/2}}\left[\Vert Z(s)\Vert_{-\nu}
   \int_s^{t_1}\Vert Z(\tau)\Vert^2_{-\nu}d\tau+
   \Vert Z(s)\Vert^2_{-\nu}\right]ds,\quad 0\le t\le t_1
\end{multline}
\setcounter{equation}{0}
\section{Symplectic orthogonality}
Finally, we are going to change ${\bf P}_1Z_1(t)$ by $Z(t)$ in the left hand side of
(\re{duhestri}). We will prove that it is possible using again that $d_0\ll 1$ in (\re{close}).
\begin{lemma}\la{Z1P1Z1}(cf.\cite{ikv})
  For sufficiently small $\ve>0$, we have for  $t_1<t_*'$
  \be\la{Z1P1est}
     \Vert Z(t)\Vert_{-\nu}\le C\Vert {\bf P}_1Z_1(t)\Vert_{-\nu},
     ~~~~~~~~0\le t \le t_1,
  \ee
  where $C$ depends only on $\rho$ and $\overline v$.
\end{lemma}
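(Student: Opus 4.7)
The plan is to show that the longitudinal component ${\bf\Pi}_1 Z_1(t)$ in the decomposition $Z_1={\bf P}_1 Z_1+{\bf\Pi}_1 Z_1$ is negligible compared with ${\bf P}_1 Z_1(t)$, by exploiting the symplectic orthogonality \eqref{ortZ} of $Z(t)$ at the instantaneous parameter $v(t)$ together with the smallness of $|v(t_1)-v(t)|$ and $|d_1(t)|$ for $t_1<t_*'$. Since $Z_1(t)$ differs from $Z(t)$ only by a spatial shift of the field components by $d_1(t)$ (the $Q,P$ slots being untouched), and $|d_1(t)|$ is uniformly bounded on $[0,t_1]$ by \eqref{d1est}, the calculation of Lemma \ref{destc} gives two-sided estimates
\[ c(\tilde v)\Vert Z(t)\Vert_{-\nu}\le\Vert Z_1(t)\Vert_{-\nu}\le C(\tilde v)\Vert Z(t)\Vert_{-\nu}. \]
It thus suffices to estimate $\Vert{\bf\Pi}_1 Z_1(t)\Vert_{-\nu}$ by a small multiple of $\Vert Z(t)\Vert_{-\nu}$ and absorb.

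By \eqref{Piv} with $v_1:=v(t_1)$,
\[ {\bf\Pi}_1 Z_1(t)=\sum_{j,l}{\bf\Pi}_{jl}(v_1)\,\tau_j(v_1)\,\Omega\bigl(\tau_l(v_1),Z_1(t)\bigr), \]
so I would focus on the pairings $\Omega(\tau_l(v_1),Z_1(t))$. After the change of variable $y=y_1-d_1(t)$ in the field integrals and use of \eqref{orth} (which gives $\Omega(\tau_l(v(t)),Z(t))=0$), these become
\[ \Omega\bigl(\tau_l(v_1),Z_1(t)\bigr)=\Omega\bigl(\tau_l(v_1;\,\cdot+d_1(t))-\tau_l(v(t);\,\cdot),\;Z(t)\bigr), \]
where the shift ``$\cdot+d_1(t)$'' acts only on the field components of $\tau_l$, since the $Q,P$ slots in \eqref{inb} depend on $v$ alone.

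The main technical step, and the principal obstacle, is the weighted estimate
\[ \bigl\Vert\tau_l(v_1;\,\cdot+d_1(t))-\tau_l(v(t);\,\cdot)\bigr\Vert_{\nu}\le C(\tilde v)\bigl(|v_1-v(t)|+|d_1(t)|\bigr), \]
which I would derive by a Taylor expansion in $v$ (controlled through smoothness and rapid decay of $\nabla_v\psi_v$ read off from \eqref{sol}) and a first-order expansion in the spatial shift (controlled through decay of $\nabla\psi_v$ guaranteed by \eqref{ro}), together with a small analogous bound for the $Q,P$ slots. Combined with Lemma \ref{mod} (which gives $|v_1-v(t)|\le C(\tilde v)m^2(t_1)$) and \eqref{d1est} (which gives $|d_1(t)|\le C(\tilde v)m^2(t_1)$), both bounded by $C(\tilde v)\ve^2$ on $t_1<t_*'$, this produces $\Vert{\bf\Pi}_1 Z_1(t)\Vert_{-\nu}\le C(\tilde v)\ve^2\Vert Z(t)\Vert_{-\nu}$. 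The triangle inequality then yields $\Vert Z(t)\Vert_{-\nu}\le C\Vert{\bf P}_1 Z_1(t)\Vert_{-\nu}+C(\tilde v)\ve^2\Vert Z(t)\Vert_{-\nu}$, and choosing $\ve$ so small that $C(\tilde v)\ve^2\le 1/2$ absorbs the last term, yielding \eqref{Z1P1est}.
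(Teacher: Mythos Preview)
Your proposal is correct and follows essentially the same approach as the paper: both reduce the question to showing that ${\bf\Pi}_1 Z_1(t)$ is a small perturbation, and both exploit the symplectic orthogonality at the instantaneous parameter $v(t)$ together with the smallness of $|v(t_1)-v(t)|$ and $|d_1(t)|$ (controlled via Lemma~\ref{mod} and \eqref{d1est}) for $t_1<t_*'$. The paper phrases this as an operator-norm bound $\Vert{\bf\Pi}_{v(t_1)}-{\bf\Pi}_{v(t),1}\Vert<1/2$ on the difference of two projectors (the fixed one and the shifted instantaneous one), whereas you go straight to the individual pairings $\Omega(\tau_l(v_1),Z_1(t))$ and subtract the vanishing $\Omega(\tau_l(v(t)),Z(t))$; your route is marginally more economical since it only needs the difference in the test vectors $\tau_l$, not also in the coefficients ${\bf\Pi}_{jl}$ and output vectors $\tau_j$, but the underlying estimates are identical.
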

\begin{proof}
 Since $|d_1(t)|\le C$ for $t\le t_1<t_*'$ then 
$ \Vert Z(t)\Vert_{-\nu}\le C \Vert Z_1(t)\Vert_{-\nu}$, and  it suffices
 to prove that
 \be\la{Z1P1ests}
   \Vert Z_1(t)\Vert_{-\nu}\le 2\Vert{\bf P}_1Z_1(t)\Vert_{-\nu},\quad 0\le t\le t_1.
 \ee
 Recall that ${\bf P}_1Z_1(t)=Z_1(t)-{\bf\Pi}_{v(t_1)}Z_1(t)$.
 Then estimate \eqref{Z1P1ests} will follow from
 \be\la{Z1P1estf}
 \Vert{\bf\Pi}_{v(t_1)}Z_1(t)\Vert_{-\nu}
 \le\fr 12\Vert Z_1(t)\Vert_{-\nu},\quad 0\le t\le t_1.
 \ee
 Symplectic orthogonality \eqref{PZ} implies
 \be\la{PZ0r}
   {\bf\Pi}_{v(t),1}Z_1(t)=0,~~~~t\in[0,t_1],
 \ee
 where ${\bf\Pi}_{v(t),1}Z_1(t)$ is ${\bf\Pi}_{v(t)}Z(t)$ expressed in terms of
 the variable $y_1=y+d_1(t)$. Hence, (\re{Z1P1estf}) follows from (\re{PZ0r}) if the difference
 ${\bf\Pi}_{v(t_1)}-{\bf\Pi}_{v(t),1}$ is small uniformly in $t$, i.e.
 \be\la{difs}
   \Vert{\bf\Pi}_{v(t_1)}-{\bf\Pi}_{v(t),1}\Vert<1/2,~~~~~~~0\le t\le t_1.
 \ee
 It remains to justify (\ref{difs}) for small enough $\varepsilon>0.$
 Formula (\ref{Piv}) implies 
 \be\la{Piv1}
  {\bf\Pi}_{v(t),1}Z_{1}(t)=\sum{\bf\Pi}_{jl}(v(t))\tau_{j,1}(v(t))
  \Omega (\tau_{l,1}(v(t)),Z_1(t)),
 \ee
 where $\tau_{j,1}(v(t))$ are the vectors $\tau_j(v(t))$ expressed in the variables $y_{1}$.
 Since $|d_1(t)|\le C$ and $\nabla\tau_j$
 are smooth and fast decaying at infinity functions, then
 \be\la{011}
   \Vert \tau_{j,1}(v(t))-\tau_j(v(t))\Vert_\nu\le C|d_1(t)|^\nu\le C,\quad 0\le t\le t_1
 \ee
 for all $j=1,2,\dots,6$. Furthermore,
 \[   \tau_j(v(t))-\tau_j(v(t_1)) =\int_t^{t_1}\dot v(s)\cdot\nabla_v\tau_j(v(s))ds    \]
 and therefore
 \be\la{012}
   \Vert \tau_j(v(t))-\tau_j(v(t_1))\Vert_\nu
   \le C\int_t^{t_1}|\dot v(s)|ds,\quad 0\le t\le t_1
 \ee
 Similarly,
 \be\la{013}
   |{\bf\Pi}_{jl}(v(t))-{\bf\Pi}_{jl}(v(t_1))|= |\int_t^{t_1}\dot v(s)\cdot\nabla_v
   {\bf\Pi}_{jl}(v(s))ds|\le C\int_t^{t_1}|\dot v(s)|ds,~~~~0\leq t\le t_1
 \ee
 since $|\nabla_v{\bf\Pi}_{jl}(v(s))|$ is uniformly bounded by (\re{sit}). Hence, the bounds
 (\ref{difs}) will follow from (\ref{Piv}), (\ref{Piv1}) and (\ref{011})-(\ref{013}) if we
 establish that  the integral in the right hand side of (\ref{012}) can be made
 as small as we please by choosing $\ve >0$ small enough. Indeed,
\be\la{tvjest}
   \int_t^{t_1}|\dot v(s)|ds\le Cm^2(t_1)\int_t^{t_{1}}\frac{ds}{(1+s)^3}\le C\ve^2,\quad
   0\le t\le t_1
\ee
\end{proof}
\setcounter{equation}{0}
\section{Decay of transversal component}
Here we prove Proposition \re{pdec}.
\\
{\it Step i)} We fix $0<\ve<1$ and $t'_*=t'_*(\ve)$ for which Lemma \re{Z1P1Z1} holds.
Then the bound of type (\re{duhestri}) holds with
$\Vert {\bf P}_1Z_1(t)\Vert_{-\nu}$ in the left hand side replaced by
$\Vert Z(t)\Vert_{-\nu}$~:
\begin{multline}\la{duhestrih}
   \Vert Z(t)\Vert_{-\nu}\le\fr{C}
   {(1+t)^{3/2}}\Vert Z(0)\Vert_{\nu}\\
   +C\int_0^t\fr1{(1+|t-s|)^{3/2}}\left[\Vert Z(s)\Vert_{-\nu}
   \int_s^{t_1}\Vert Z(\tau)\Vert^2_{-\nu}d\tau+
   \Vert Z(s)\Vert^2_{-\nu}\right]ds,\quad
   0\le t\le t_1
\end{multline}
for $t_1<t_*'$. This implies an integral inequality for the majorant $m(t)$
defined in \eqref{maj}.
Namely, multiplying both sides of (\re{duhestrih}) by $(1+t)^{3/2}$,
and taking the supremum in $t\in[0,t_1]$, we get
\begin{multline*}
\!\!\!\!  m(t_1)\le C\Vert Z(0)\Vert_{\nu}+C\sup_{t\in[0,t_1]}\ds
  \int_0^t\fr{(1+t)^{3/2}}{(1+|t-s|)^{3/2}}\left[\fr{m(s)}{(1+s)^{3/2}}
  \int_s^{t_1}\fr{m^2(\tau)d\tau}{(1+\tau)^{3}}+\fr{m^2(s)}{(1+s)^{3}}\right]ds
\end{multline*}
for $t_1\le t_*'$. Taking into account that $m(t)$ is a monotone
increasing function, we get
\be\la{mest}
   m(t_1)\le C\Vert Z(0)\Vert_{\nu}+C[m^3(t_1)+m^2(t_1)]I(t_1),\quad t_1\le t_*'
   \ee
where
\begin{multline*}
  I(t_1)=\sup_{t\in[0,t_1]}
  \int_0^{t}\fr{(1+t)^{3/2}}{(1+|t-s|)^{3/2}}\left[\fr1{(1+s)^{3/2}}
  \int_s^{t_1}\fr{d\tau}{(1+\tau)^{3}}+\fr1{(1+s)^3}\right]ds\le \overline I<\infty
\end{multline*}
Therefore, (\re{mest}) becomes
\be\la{m1est}
  m(t_1)\le C\Vert Z(0)\Vert_{\nu}+C\overline I[m^3(t_1)+m^2(t_1)],~~~~ t_1<t_*'
\ee
This inequality implies that $m(t_1)$ is bounded for $t_1<t_*'$, and moreover,
\be\la{m2est}
  m(t_1)\le C_1\Vert Z(0)\Vert_{\nu},~~~~~~~~~t_1<t_*'
\ee
since $m(0)=\Vert Z(0)\Vert_{\nu}$ is sufficiently small by (\re{closeZ}).
\\
{\it Step ii)} The constant $C_1$ in the estimate (\re{m2est}) does not depend on
$t_*$ and $t_*'$ by Lemma \re{Z1P1Z1}. We choose $d_0$ in (\re{close})
so small that $\Vert Z(0)\Vert_{\nu}<\ve/(2C_1)$. It is possible due to (\re{closeZ}).
Then the estimate (\re{m2est}) implies that $t'_*=t_*$ and therefore
(\re{m2est}) holds for all $t_1<t_*$.
Further,
$$
|v(t)-v(0)|\le\int\limits_0^t|\dot v(s)|ds\le Cm^2(t)\int\limits_0^t
\frac{ds}{(1+s)^3}\le Cm^2(t)
$$
Hence the both inequalities  (\re{Zt}) also holds if $\Vert Z(0)\Vert_{\nu}$ is
sufficiently small by \eqref{maj}.
Finally, this implies that $t_*=\infty$, hence also $t'_*=\infty$ and
(\re{m2est}) holds for all $t_1>0$ if $d_0$ is small enough.
It complete the proof of Proposition \re{pdec}. 
\section{Soliton asymptotics}
\setcounter{equation}{0}

Here we prove our main Theorem \re{main} under the assumption that the decay (\re{Zdec}) holds.
First we will prove the asymptotics (\re{nas}) for the vector components, and afterwards
the asymptotics (\re{nasf}) for the fields.
\\
{\bf Asymptotics for the vector components}.
From (\re{addeq}) we have $\dot q=\dot b+\dot Q$, and from (\re{reduced}), (\re{redN}), (\re{AA})
it follows that $\dot Q=P+{\cal O}(\Vert Z\Vert^2_{-\nu})$. Thus,
\be\la{dq}
   \dot q=\dot b+\dot Q=v(t)+\dot c(t)+P(t)+{\cal O}(\Vert Z\Vert^2_{-\nu})
\ee
Bounds (\re{parameq}) and  (\re{Zdec}) imply that
\be\la{bv}
  |\dot c(t)|+|\dot v(t)|\le \ds\fr {C_1(\rho,\overline v,d_0)}{(1+t)^{3}},
  ~~~~~~t\ge0
\ee
Therefore, $c(t)=c_+ +{\cal O}(t^{-2})$ and $v(t)=v_+ +{\cal O}(t^{-2})$, $t\to\infty$. Since
$|P|\le\Vert Z\Vert_{-\nu}$, the estimate (\re{Zdec}), and (\re{dq})-(\re{bv}),  imply that
\be\la{qbQ}
  \dot q(t)=v_++{\cal O}(t^{-3/2})
\ee
Similarly,
\be\la{bt}
  b(t)=c(t)+\ds\int_0^tv(s)ds=v_+t+a_++{\cal O}(t^{-1})
\ee
hence the second part of (\re{qq}) follows:
\be\la{qbQ2}
  q(t)=b(t)+Q(t)=v_+t+a_++{\cal O}(t^{-1})
\ee
since $Q(t)={\cal O}(t^{-3/2})$ by  (\re{Zdec}).
\\
{\bf Asymptotics for the fields}.
For the field part of
the solution $\psi(x,t)$ let us define the accompanying soliton
field as $\psi_{\rm v(t)}(x-q(t))$, where we define now ${\rm v}(t)=\dot q(t)$, cf. (\re{dq}).
Then for the difference $z(x,t)=\psi(x,t)-\psi_{\rm v(t)}(x-q(t))$ we obtain the equation
$$
\dot z(x,t)=[-\al_j\pa_j-i\beta m]z(x,t)-i\dot{\rm v}\cdot\na_{\rm v}\psi_{{\rm v}(t)}(x-q(t))
$$
Then
\be\la{eqacc}
z(t)=W_0(t)z(0)-\int_0^tW_0(t-s)[i\dot{\rm v}(s)\cdot\na_{\rm v}\psi_{{\rm v}(s)}(\cdot-q(s))]ds
\ee
To obtain the asymptotics (\re{nasf}) it suffices to prove that 
$z(t)=W_0(t)\phi_++r_+(t)$ with
some $\phi_+\in L^2_0$ and $\Vert r_+(t)\Vert_{0}={\cal O}(t^{-1/2})$.
This is equivalent to
\be\la{Sme}
  W_0(-t)z(t)=\phi_++r_+'(t)
\ee
where $\Vert r_+'(t)\Vert_{0}={\cal O}(t^{-1/2})$ since $W_0(t)$ is a unitary group in 
$L^2_0$ by the charge conservation for the free Dirac equation.
Finally, (\re{Sme}) holds since (\re{eqacc}) implies 
$$
  W_0(-t)z(t)=z(0)-\int_0^t W_0(-s)f(s)ds,\quad
  f(s)=i\dot{\rm v}(s)\cdot\na_{\rm v}\psi_{{\rm v}(s)}(\cdot-q(s))
$$
where the integral  in the right hand side   converges in $L^2_0$
with the rate ${\cal O}(t^{-1/2})$. The latter holds since
$\Vert  W_0(-s)f(s)\Vert_{0} ={\cal O}(s^{-3/2})$ by the unitarity of $W_0(-s)$ and
the decay rate $\Vert f(s)\Vert_{0} ={\cal O}(s^{-3/2})$. Let us prove this rate of decay.
It suffices to prove that $|\dot {\rm v}(s)|={\cal O}(s^{-3/2})$,
or equivalently $|\dot p(s)|={\cal O}(s^{-3/2})$. Substitute (\re{add}) to the last
equation of (\re{D}) and obtain
\beqn\nonumber
\dot p(t)&=&\Re\int\left[\psi_{v(t)}(x-b(t))+\Psi(x-b(t),t)\right]\na\rho(x-b(t)-Q(t))dx\\
\nonumber
&=&\Re\int\psi_{v(t)}(y)\na\rho(y)dy
+\Re\int\psi_{v(t)}(y)\left[\na\rho(y-Q(t))-\na\rho(y)\right]dy\\
\nonumber
&+&\Re\int\Psi(y,t)\na\rho(y-Q(t))dy
\eeqn
The first integral in the right hand side is zero by the stationary equations (\re{stfch}).
The second integral is ${\cal O}(t^{-3/2})$, since $Q(t)={\cal O}(t^{-3/2})$, and by the conditions
(\re{ro}) on $\rho$. Finally, the third integral is ${\cal O}(t^{-3/2})$
by the estimate (\re{Zdec}). The proof is complete.
\setcounter{equation}{0}
\section{Decay for  linearized dynamics}
In remaining sections we prove Proposition \re{lindecay}.
Here we discuss our general strategy of the proof. We apply the Fourier-Laplace transform
\be\la{FL}
  \tilde X(\lambda)=\int_0^\infty e^{-\lambda t}X(t)dt,~~~~~~~\Re\lambda>0
\ee
to (\re{Avv}).  According to Proposition \re{lindecay}, we expect that
the solution $X(t)$ is bounded in the norm $\Vert\cdot\Vert_{-\nu}$.
Then the integral (\re{FL}) converges and is analytic for $\Re\lambda>0$.
We will write $A$ and $v$ instead of $A_1$ and $v_1$ in all remaining part of the paper.
After the Fourier-Laplace transform  (\re{Avv}) becomes
\be\la{FLA}
  \lambda\tilde X(\lambda)=A\tilde X(\lambda)+X_0, \quad\Re\lambda>0
\ee
Let us stress that (\re{FLA}) is equivalent to the Cauchy problem for the functions
$X(t)\in C_b([0,\infty);{\cal E}_{-\nu})$. Hence the solution $X(t)$ is given by
\be\la{FLAs}
  \tilde X(\lambda)=-(A-\lambda)^{-1}X_0,~~~~~~~~\Re\lambda>0
\ee
if the resolvent $R(\lambda)=(A-\lambda)^{-1}$ exists for $\Re\lambda>0$.

Let us comment on our following strategy in proving  the decay (\re{Zdec}). First, we will
construct the resolvent  $R(\lambda)$ for $\Re\lambda>0$ and prove that it is a continuous operator
in ${\cal E}_{-\nu}$. Then $\tilde X(\lambda)\in{\cal E}_{-\nu}$ and is an analytic function for
$\Re\lambda>0$. Second, we have to justify that there exist a (unique) function
$X(t)\in C([0,\infty);{\cal E}_{-\nu})$ satisfying (\re{FL}).

The analyticity of $\tilde X(\lambda)$ and Paley-Wiener arguments (see \ci{EKS}) should provide
the existence of a ${\cal E}_{-\nu}$ - valued distribution $X(t)$, $t\in\R$,
with a support in $[0,\infty)$. Formally,
\be\la{FLr}
\Lambda^{-1}\tilde X=X(t)=\fr 1{2\pi}\int_\R e^{i\om t}\tilde X(i\om+0)d\om,\quad t\in\R
\ee
However, to check the continuity of $X(t)$ for $t\ge 0$, we need additionally a bound for
$\tilde X(i\om+0)$ at large $|\om|$. Finally, for the time decay of $X(t)$, we need an additional
information on the smoothness and decay of $\tilde X(i\om+0)$. More precisely,
we should prove that the function $\tilde X(i\om+0)$ \\
i) is smooth outside $\om=0$ and $\om=\pm\mu$, where $\mu=\mu(v)>0$,
\\
ii) decays in a certain sense as $|\om|\to\infty$.
\\
iii) admits the Puiseux expansion at $\om=\pm\mu$.
\\
iv) is analytic at $\om=0$ if $X_0\in{\cal Z}_v:={\bf P}_v{\cal E}$ and $X_0\in{\cal E}_\nu$.
\\
Then the decay (\re{Zdec}) would follow from the Fourier-Laplace representation (\re{FLr}).

\setcounter{equation}{0}
\section{Solving the linearized equation}
Here we construct the resolvent as a bounded operator in ${\cal E}_{-\nu}$ for $\Re\lambda>0$.
We will write $(\tilde\Psi_1,\tilde\Psi_2, \tilde Q, \tilde P)$ instead of
$(\tilde\Psi_1(y,\lambda),\tilde\Psi_2(y,\lambda),\tilde Q(\lambda),\tilde P(\lambda))$
to simplify the notations. Then (\re{FLA}) reads
$$
(A-\lambda)\left(
\begin{array}{c}
\tilde \Psi_1 \\\tilde \Psi_2 \\ \tilde Q \\ \tilde P
\end{array}
\right)=-\left(
\begin{array}{c}
\Psi_{01} \\ \Psi_{02}  \\ Q_0 \\ P_0
\end{array}
\right)
$$
It is  the system of equations
\be\la{eq1}
\left.\begin{array}{r}
(-\al_1\pa_1-\al_3\pa_3+v\cdot\na-\lambda)\tilde\Psi_1+(\beta m+\tilde\al_2\pa_2)\tilde\Psi_2
-\tilde Q\cdot\na\rho_2=-\Psi_{01}
\\\\
-(\beta m+\tilde\al_2\pa_2)\tilde\Psi_1+(-\al_1\pa_1-\al_3\pa_3+v\cdot\na-\lambda)\tilde\Psi_2
+\tilde Q\cdot\na\rho_1=-\Psi_{02}
\\\\
B_v\tilde P-\lambda \tilde Q=-Q_0
\\\\
-\langle\na\tilde\Psi_j,\rho_j\rangle+\langle\na\psi_{vj},\tilde Q\cdot\na\rho_j
\rangle-\lambda\tilde P=-P_0
\end{array}\!\right|
\ee
{\it Step i)} Let us study the first two equations.  First, we compute the matrix integral kernel 
$G_\lam(y-y')$ of the Green operator
\be\la{F1}
G_\lam=\left(\begin{array}{cc}
 -\al_1\pa_1-\al_3\pa_3+v\cdot\na-\lambda  & \beta m+\tilde\al_2\pa_2 \\
 - \beta m-\tilde\al_2\pa_2  &-\al_1\pa_1-\al_3\pa_3+v\cdot\na-\lambda 
  \end{array}\right)^{-1}
\ee
 In Fourier space
$$
 \hat G_{\lambda}(k)=\left(\begin{array}{cc}
  i\al_1k_1+i\al_3k_3-ivk-\lambda & \beta m-\al_2k_2 \\
  -\beta m+\al_2k_2 & i\al_1k_1+i\al_3k_3-ivk-\lambda
  \end{array}\right)^{-1}, \quad vk=\sum\limits_{j=1}^3v_jk_j
$$
To invert the matrix, let us solve the system
\be\la{f1}
 \left.\begin{array}{r}
af_1 +bf_2=g_1\\
-bf_1+af_2=g_2
 \end{array}\right|
\ee
where $a=i\al_1k_1+i\al_3k_3-ivk-\lambda$, $b=\beta m-\al_2k_2$.
Multiplying the first equation of (\re{f1}) by $c=-i\al_1k_1-i\al_3k_3-ivk-\lambda$ 
and the second equation by $-b$, we obtain
\be\la{f2}
 \left.\begin{array}{rcr}
 caf_1+cbf_2&=&cg_1\\
 b^2f_1-cbf_2&=&-bg_2
\end{array}\right|
\ee
since $ba=cb$ by the anticommutations (\re{aij}). Further, $b^2+ac=k^2+m^2+(ivk+\lambda)^2$.
Therefore, summing up the equations (\re{f2}), we obtain that 
$$
f_1=\frac{cg_1-bg_2}{k^2+m^2+(ivk+\lambda)^2}
$$
Similarly, we obtain
$$
f_2=\frac{bg_1+cg_2}{k^2+m^2+(ivk+\lambda)^2}
$$
Hence
\be\la{Delam}
  \hat G_{\lambda}(k)\!=\!\frac1{k^2\!+\!m^2\!+\!(ivk\!+\!\lambda)^2}
    \left(\!\!\begin{array}{cc}
  - i\al_1k_1\!-\!i\al_3k_3\!-\!ivk\!-\!\lambda & -\beta m+\al_2k_2 \\
  \beta m-\al_2k_2         & - i\al_1k_1\!-\!i\al_3k_3\!-\!ivk\!-\!\lambda
  \end{array}\!\!\right)
\ee
Taking the inverse Fourier transform we obtain 
\be\la{Gelam}
  G_{\lambda}(y)=\left(\begin{array}{cc}
 (\al_1\pa_1+\al_3\pa_3+v\cdot\na-\lambda) & -(\beta m+\tilde\al_2\pa_2)\\
 (\beta m+\tilde\al_2\pa_2) & (\al_1\pa_1+\al_3\pa_3+v\cdot\na-\lambda)
  \end{array}\right)g_{\lambda}(y)
\ee
where
\be\la{dete}
g_\lambda(y)=F^{-1}_{k\to y}\ds\fr{1}{k^2+m^2+(ivk+\lambda)^2},~~~y\in\R^3
\ee
Note that denominator in RHS \eqref{dete} does not vanish for $\Re\lambda>0$
since $|v|<1$. This implies
\begin{lemma}\la{cres}
  The operator $G_{\lambda}$ with the integral kernel $G_{\lambda}(y-y')$,
  is continuous operator
  $L^2_0\oplus L^2_0\to L^2_0\oplus L^2_0$ for $\Re\lambda>0$.
\end{lemma}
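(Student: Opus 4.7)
The plan is to recognize $G_\lambda$ as a convolution operator on $\R^3$, reducing the question to an $L^\infty$ bound on the matrix-valued Fourier multiplier $\hat G_\lambda(k)$. By Plancherel's theorem (applied componentwise to the $L^2_0\oplus L^2_0$ setting), the operator with convolution kernel $G_\lambda(y-y')$ is bounded on $L^2_0\oplus L^2_0$ if and only if $\hat G_\lambda(\cdot)$ is essentially bounded on $\R^3$, with operator norm equal to $\sup_k \|\hat G_\lambda(k)\|$. So the whole task reduces to producing a uniform upper bound for the matrix symbol \eqref{Delam}.

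From the explicit formula \eqref{Delam}, the numerator matrix $M(k)$ has entries bounded by $C(1+|k|+|\lambda|)$, so it suffices to establish a lower bound of the form $|D(k)| \ge c(\lambda,v,m)(1+|k|)$ on the scalar denominator $D(k) = k^2 + m^2 + (ivk+\lambda)^2$. Writing $\lambda = \sigma + i\tau$ with $\sigma>0$ and setting $u = \tau + v\cdot k$, a short computation gives
\[
  D(k) = \bigl(k^2 + m^2 + \sigma^2 - u^2\bigr) + 2i\sigma u,
\]
so with $E := k^2+m^2+\sigma^2 > 0$,
\[
  |D(k)|^2 = (E-u^2)^2 + 4\sigma^2 u^2.
\]

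The main (small) technical step is to bound this below by splitting into two regimes: if $u^2 \le E/2$, then $(E-u^2)^2 \ge E^2/4$, giving $|D(k)| \ge E/2$; if $u^2 > E/2$, then $4\sigma^2 u^2 > 2\sigma^2 E$, giving $|D(k)| \ge \sigma\sqrt{2E}$. In either case
\[
  |D(k)| \ge \min\bigl(\tfrac12,\,\sigma\sqrt{2}\bigr)\,\sqrt{E}
  \ge c(\lambda,m)\,(1+|k|),
\]
where I used $E \ge \max(k^2,\,m^2+\sigma^2)$. Combining with $\|M(k)\| \le C(1+|k|+|\lambda|)$ yields a uniform bound $\|\hat G_\lambda(k)\| \le C(\lambda,v,m)$ on $\R^3$, which by Plancherel establishes the claimed $L^2_0\oplus L^2_0$ continuity.

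The only delicate point is the case analysis for the denominator; everything else is routine (the anticommutation relations \eqref{aij} already did the algebraic work needed to invert $2\times 2$ block in closed form, so no further matrix gymnastics is required). Note also that the bound $c(\lambda,v,m)$ degenerates as $\sigma \downarrow 0$, which is consistent with the later need in Section~12 to study the limit $\lambda \to i\omega$ separately using finer (weighted) estimates rather than the plain $L^2$ bound proved here.
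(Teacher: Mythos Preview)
Your proof is correct and follows the same approach as the paper, which simply remarks that the denominator in \eqref{dete} does not vanish for $\Re\lambda>0$ since $|v|<1$ and states the lemma without further detail; you have supplied the routine Fourier-multiplier argument the paper leaves implicit. One small slip: in Case~1 you obtain $|D(k)|\ge E/2$, not $\tfrac12\sqrt{E}$, so the combined bound $\min(\tfrac12,\sigma\sqrt{2})\sqrt{E}$ is not literally correct when $E<1$; but since $E\ge m^2+\sigma^2>0$ the conclusion $|D(k)|\ge c(\lambda,m)(1+|k|)$ follows anyway, and the rest is unaffected.
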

From now on we use the system of coordinates in $y$-space in which
$v=(|v|,0,0)$, hence $vk=|v|k_1$. Let us compute the function $g_{\lambda}(y)$.
One has
$$
k^2+m^2+(i|v|k_1+\lam)^2=\frac 1{\gamma^2}k_1^2+k_2^2+k_3^2+
2i|v|k_1\lam+\lam^2+m^2=\frac 1{\gamma^2}(k_1+i\gamma^2|v|\lam)^2+k_2^2+k_3^2+ \varkappa^2
$$
where
 \be\la{kappa}
\gamma=1/\sqrt{1-v^2},\quad\varkappa^2=\fr{v^2\lam^2}{1-v^2}+\lam^2+m^2=\fr{\lam^2}{1-v^2}+m^2
=\gamma^2(\lam^2+\mu^2),\quad\mu:=m/\gamma
\ee
Hence formally,
\beqn\nonumber
g_{\lam}(y)&=&\frac{1}{(2\pi)^{3/2}}\int\!\frac{e^{-iky}dk}
{\frac 1{\gamma^2}(k_1+i\gamma^2|v|\lam)^2+k_2^2+k_3^2+\kappa^2}
=\frac{e^{-\gamma^2|v|\lam y_1}}{(2\pi)^{3/2}}\int\!\frac{e^{-iky}dk}
{\frac 1{\gamma^2}k_1^2+k_2^2+k_3^2+\kappa^2}\\
\la{gl}
&=&\frac{\gamma e^{-\gamma|v|\lam\tilde y_1}}{(2\pi)^{3/2}}\int\frac{e^{-ik\tilde y}dk}
{k_1^2+k_2^2+k_3^2+\kappa^2}=\gamma e^{-\gamma|v|\lam\tilde y_1}R(\tilde y,-\kappa^2)
\eeqn
Here $\tilde y_1=\gamma y_1$, $\tilde y=(\gamma y_1,y_2,y_3)$, and
$R(y-y',\zeta)$ is the integral kernel of the  operator
$R(\zeta)=(-\Delta-\zeta)^{-1}$.
It is well known that $R_0(y,\zeta)=e^{i\sqrt\zeta|y|}/4\pi|y|$.
Therefore,
\be\la{glam}
g_{\lambda}(y)=\fr{e^{-\varkappa|\tilde y|-\varkappa_1\tilde y_1}}{4\pi|\tilde y|}
\ee
where $\varkappa_1:=\gamma|v|\lambda$. We choose $\Re\kappa>0$ for $\Re\lambda>0$.
Let us note that for $0<|v|<1$\be\la{kak}
 0<\Re\varkappa_1<\Re\varkappa,~~~~~~\Re\lambda>0
\ee
 Let us state the result which we have got above.
\begin{lemma} \la{cac}
i) The function $g_{\lambda}(y)$
decays exponentially in $y$ for $\Re\lambda>0$.
\\
ii) The formulas (\re{glam}) and  (\re{kappa}) imply that for every fixed $y$,
the  function $g_\lam(y)$ admits an analytic continuation in $\lam$
to the Riemann surface of the algebraic function $\sqrt{\lam^2+\mu^2}$ with the
branching points $\lambda=\pm i\mu$.
\end{lemma}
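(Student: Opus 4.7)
The plan is to read off both statements directly from the explicit closed form (\ref{glam}) for $g_\lambda(y)$ and the sign information (\ref{kak}) relating the real parts of $\varkappa$ and $\varkappa_1$. I expect no serious analytic difficulty here; the entire content is essentially bookkeeping on the formula that has already been derived.

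For part (i), I would estimate the modulus of the numerator in (\ref{glam}) by
\[
\bigl|e^{-\varkappa|\tilde y|-\varkappa_1\tilde y_1}\bigr|
= e^{-(\Re\varkappa)|\tilde y|-(\Re\varkappa_1)\tilde y_1}
\le e^{-(\Re\varkappa)|\tilde y|+(\Re\varkappa_1)|\tilde y_1|}
\le e^{-(\Re\varkappa-\Re\varkappa_1)|\tilde y|},
\]
using $|\tilde y_1|\le|\tilde y|$ in the last step. By (\ref{kak}) one has $\Re\varkappa-\Re\varkappa_1>0$ for $\Re\lambda>0$, so this bound decays exponentially in $|\tilde y|$. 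Since $|\tilde y|^2=\gamma^2 y_1^2+y_2^2+y_3^2\ge |y|^2$ (because $\gamma\ge 1$), this translates into exponential decay in $|y|$. The $1/(4\pi|\tilde y|)$ factor is harmless away from the origin and locally integrable at zero, so it does not obstruct the decay statement.

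For part (ii), I would observe that the $\lambda$-dependence in (\ref{glam}) splits into three pieces: the factor $1/(4\pi|\tilde y|)$ is $\lambda$-independent, the coefficient $\varkappa_1=\gamma|v|\lambda$ is entire in $\lambda$, and only $\varkappa=\gamma\sqrt{\lambda^2+\mu^2}$ carries the algebraic singularity. Since the exponential is entire in its argument, the composition $e^{-\varkappa|\tilde y|-\varkappa_1\tilde y_1}$ is holomorphic in $\lambda$ wherever $\sqrt{\lambda^2+\mu^2}$ is holomorphic. The algebraic function $\sqrt{\lambda^2+\mu^2}$ is single-valued on its two-sheeted Riemann surface with branch points exactly at $\lambda=\pm i\mu$, so lifting $g_\lambda(y)$ to this surface yields the desired analytic continuation with the stated branching behaviour.

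The one point requiring a word of care, which I would identify as the only potential obstacle, is that in part (i) the exponential factor $e^{-\varkappa_1\tilde y_1}$ is \emph{not} decaying on its own: when $\tilde y_1$ and $\Re\varkappa_1$ have opposite signs it actually grows, and one must check that this growth is strictly dominated by the decay from $e^{-\varkappa|\tilde y|}$ in \emph{every} direction. This is exactly what the strict inequality $\Re\varkappa_1<\Re\varkappa$ in (\ref{kak}) provides, with the elementary estimate $|\tilde y_1|\le|\tilde y|$ making the domination uniform over all $y\in\R^3$.
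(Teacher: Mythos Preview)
Your proposal is correct and is precisely the argument the paper has in mind: the lemma is stated in the paper without a separate proof, as a direct reading of the explicit formula (\ref{glam}) together with the strict inequality (\ref{kak}), and you have simply spelled out these implicit steps. The only minor remark is that (\ref{kak}) is stated for $0<|v|<1$; in the remaining case $v=0$ one has $\varkappa_1=0$ and the decay is immediate, so your argument covers everything.
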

Thus, from (\re{eq1}) and (\re{F1}) we obtain the  representation
\beqn\la{Psi}
  \tilde\Psi_1&=&-G_{\lambda}^{11}\Psi_{01}-G_{\lambda}^{12}\Psi_{02}-
(G_{\lambda}^{12}\na\rho_1)\cdot\tilde Q+(G_{\lambda}^{11}\na\rho_2)\cdot\tilde Q\\
\nonumber
 \tilde\Psi_2&=&-G_{\lambda}^{21}\Psi_{01}-G_{\lambda}^{22}\Psi_{02}
-(G_{\lambda}^{22}\na\rho_1)\cdot\tilde Q+(G_{\lambda}^{21}\na\rho_2)\cdot\tilde Q
\eeqn
\noindent{\it Step ii)}
Now we proceed to the last two equations (\re{eq1}):
\be\la{lte}
-\lambda\tilde Q+B_v\tilde P=-Q_0,\quad\quad \langle\na\psi_{vj},
\tilde Q\cdot\na\rho_j\rangle-\langle\na\tilde\Psi_j,\rho_j\rangle-\lambda\tilde P=-P_0.
\ee
Let us rewrite equations (\re{Psi}) as
$\tilde\Psi_j=\tilde\Psi_{j}(\tilde Q)+\tilde\Psi_{j}(\Psi_{0})$, where
\be\la{PsiP}
  \tilde\Psi_1(\Psi_{0})=-G^{11}_{\lambda}\Psi_{01}-G^{12}_{\lambda}\Psi_{02},
  ~~\tilde\Psi_2(\Psi_{0})=-G^{21}_{\lambda}\Psi_{01}-G^{22}_{\lambda}\Psi_{02}
\ee
\be\la{PsiQ}
  \tilde\Psi_1(\tilde Q)=(-G^{12}_{\lambda}\na\rho_1
+G^{11}_{\lambda}\na\rho_2)\cdot\tilde Q,~~
  \tilde\Psi_2(\tilde Q)=(-G^{22}_{\lambda}\na\rho_1
+G^{21}_{\lambda}\na\rho_2)\cdot\tilde Q
\ee
Then
$\langle\na\tilde\Psi_j,\rho_j\rangle=\langle\na\tilde\Psi_{j}(\tilde Q),\rho_j\rangle+
\langle\na\tilde\Psi_{j}(\Psi_{0}),\rho_j\rangle$, and the last equation (\re{lte}) becomes
$$
\langle\na\psi_{vj},\tilde Q\cdot\na\rho_j\rangle-\langle\na\tilde\Psi_{j}(\tilde Q),
\rho_j\rangle-\lambda\tilde P=-P_0+\langle\na\tilde\Psi_{j}(\Psi_{0}),\rho_j\rangle=:-P_0-\Phi(\lam)
$$
where
\be\la{Philam}
\Phi(\lam)=\langle\tilde\Psi_{j}(\Psi_{0}),\na\rho_j\rangle
\ee
First we compute the term
$$
  \langle\na\psi_{vj},\tilde Q\cdot\na\rho_j\rangle
=\sum_{lj}\langle\na\psi_{vj},\tilde Q_l\pa_l\rho_j\rangle=
  \sum_{lj} \langle\na\psi_{vj},\pa_l\rho_j\rangle \tilde Q_l
$$
Applying the Fourier transform $F_{y\to k}$, we have by the Parseval
identity and (\re{pp}) that
\beqn\la{Lij}
  \sum_j\langle\pa_i\psi_{vj},\pa_l\rho_j\rangle\!\!&=&\!\!
  \sum_j\langle -ik_i\hat\psi_{vj},-ik_l\hat\rho_j\rangle
  =\int k_ik_l(\hat\psi_{v1}\cdot\hat\rho_1+\hat\psi_{v2}\cdot\hat\rho_2)dk\\
  \nonumber\\
  \nonumber
 \!\! &=&\!\!-\!\int k_ik_lm\frac{\beta\hat\rho_1\cdot\hat\rho_1+\beta\hat\rho_2\cdot\hat\rho_2}
  {k^2+m^2-(|v|k_1)^2}dk=-\!\int\frac{{ k_ik_l\cal B}(k)dk}{k^2+m^2-(|v|k_1)^2}=:-L_{il}
\eeqn
As the result, $\langle\na\psi_{vj},\tilde Q\cdot\na\rho_j\rangle=-L\tilde Q$, where $L$
is the $3\times3$ matrix with the matrix elements $L_{il}$.
The matrix $L$ is diagonal and positive defined by (\re{W}).\\
Now let us compute the term 
$-\langle\na\tilde\Psi_j(\tilde Q),\rho_j\rangle=\langle\tilde\Psi_j(\tilde Q),\na\rho_j\rangle$.
One has
$$
  \langle\tilde\Psi_j(\tilde Q),\pa_i\rho_j\rangle
  \!=\!\!\sum\limits_l\!\Big(\langle -G^{12}_{\lambda}\pa_l\rho_1
  + G^{11}_{\lambda}\pa_l\rho_2,\pa_i\rho_1\rangle
  -\langle G^{22}_{\lambda}\pa_l\rho_1-
  G^{21}_{\lambda}\pa_l\rho_2,\pa_i\rho_2\rangle\Big)\tilde Q_l
  \!=\!\!\sum\limits_l H_{il}(\lambda)\tilde Q_l
$$
and  by the Parseval identity and (\ref{aij})-(\ref{al-pr}) we have
\beqn \nonumber
H_{il}(\lambda):
&=&\langle -G^{12}_{\lambda}\pa_l\rho_1+G^{11}_{\lambda}\pa_l\rho_2,\pa_i\rho_1\rangle
  -\langle G^{22}_{\lambda}\pa_l\rho_1-G^{21}_{\lambda}\pa_l\rho_2,\pa_i\rho_2\rangle\\
  \nonumber
  &=&\langle [(\beta m-\al_2k_2)\hat\rho_1-(i\al_1k_1+i\al_3k_3+i|v|k_1+\lambda)\hat\rho_2]
  \hat g_{\lambda}k_l,k_i\hat\rho_1\rangle\\
 \la{Cij}
  &+&\langle[(i\al_1k_1+i\al_3k_3+i|v|k_1+\lambda)\hat\rho_1+(\beta m-\al_2k_2)\hat\rho_2]
  \hat g_{\lambda}k_l,k_i\hat\rho_2\rangle\\
  \nonumber
  &=&\int k_ik_lm\frac{\beta\hat\rho_1\cdot\hat\rho_1+\beta\hat\rho_2\cdot\hat\rho_2}
  {k^2+m^2-(|v|k_1-i\lambda)^2}dk=\int \frac{k_ik_l{\cal B}(k)dk}{k^2+m^2-(|v|k_1-i\lambda)^2}
\eeqn
The matrix $H$ is well defined for $\Re\lambda>0$ since the denominator does not vanish.
The matrix $H$ is diagonal. Indeed, if $i\ne l$, then at least one of these indices
is not equal to one, and the integrand in (\re{Lij}) is odd with respect to the corresponding
variable. Thus, $H_{il}=0$.
As the result,
$\langle\tilde\Psi_j(\tilde Q),\na\rho_j\rangle=H\tilde Q$, where $H$ is the
matrix with matrix elements $H_{il}$.
Finally the equations (\re{lte}) become
\be\la{Mlam}
  M(\lambda)\left(
  \begin{array}{c}
  \tilde Q \\ \tilde P
  \end{array}
  \right)=\left(
  \begin{array}{c}
  Q_0 \\ P_0+\Phi(\lam)
  \end{array}
  \right),\,\,{\rm where}\,\,M(\lambda)=\left(
  \begin{array}{cc}
  \lambda E  & -B_v \\
  L-H(\lambda) & \lambda E
  \end{array}
  \right)
\ee
Assume for a moment that the matrix $M(\lambda)$ is invertible (later
we will prove this). Then we obtain
\be\la{QP1}
  \left(
  \begin{array}{c}
  \tilde Q \\ \tilde P
  \end{array}
  \right)=M^{-1}(\lambda)\left(
  \begin{array}{c}
  Q_0 \\ P_0+\Phi(\lam)
  \end{array}
  \right),~~~~~~~~~\Re\lambda>0
\ee
Finally, formula (\re{QP1}) and formulas (\re{Psi}), where $\tilde Q$ is expressed from (\re{QP1}),
give the expression of the resolvent $R(\lambda)=(A-\lambda)^{-1}$, $\Re\lambda>0$.
\begin{lemma}\la{cmf}
  The matrix function $M(\lambda)$ (respectively, $M^{-1}(\lambda)$),
  $\Re\lambda>0$ admits
  an analytic (respectively meromorphic) continuation to the Riemann surface of the function\\
  $\sqrt{\mu^2+\lambda^2}$, $\lambda\in\C$.
\end{lemma}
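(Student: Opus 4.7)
The plan is to reduce the analytic continuation of $M(\lambda)$ to the continuation of its single non-trivial $\lambda$-dependent block $H(\lambda)$, since the remaining entries ($\lambda E$, $-B_v$, $L$) are either polynomial in $\lambda$ or $\lambda$-independent, hence entire. Once $H(\lambda)$ is extended analytically to the Riemann surface of $\sqrt{\lambda^2+\mu^2}$, meromorphy of $M^{-1}(\lambda)$ follows immediately from Cramer's rule $M^{-1}(\lambda)=\mathrm{adj}\,M(\lambda)/\det M(\lambda)$, the poles lying at the zeros of $\det M(\lambda)$.

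For $H_{il}(\lambda)$ itself, I would pass from the Fourier representation (\re{Cij}) back to position space. Reading the derivation of (\re{Cij}) backwards, $H_{il}(\lambda)$ is a finite linear combination of pairings of the form $\langle G^{jk}_\lambda \pa_l\rho_a,\pa_i\rho_b\rangle$, and the operator $G_\lambda$ is, by (\re{Gelam}), a first-order constant-coefficient differential operator in $y$ acting on the scalar Green function $g_\lambda$. By Lemma \re{cac}, for each fixed $y\ne 0$ the function $g_\lambda(y)=e^{-\varkappa|\tilde y|-\varkappa_1\tilde y_1}/(4\pi|\tilde y|)$ together with its first derivatives in $y$ continues analytically in $\lambda$ on the Riemann surface of $\sqrt{\lambda^2+\mu^2}$, with branching only at $\lambda=\pm i\mu$. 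Morera's theorem then reduces the analyticity of $H_{il}(\lambda)$ to a locally uniform-in-$\lambda$ integrability bound on the integrand.

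Such a bound is straightforward on compact subsets $K$ of the principal sheet $\Re\varkappa>0$, where the kernel decays exponentially. On compact subsets of the other sheet (away from the branch points) one has a pointwise estimate
$$|\pa^\alpha_y g_\lambda(y)|\le C(K)(1+|\tilde y|)^{-1}e^{a(K)|\tilde y|},\qquad |\alpha|\le 1,$$
with $a(K)$ a small constant that can be made arbitrarily small by taking $K$ close to the imaginary axis. Pairing this against $\pa\rho_j\in L^2_\nu$ with $\nu>5/2$ from (\re{ro}) yields a finite double integral as soon as $a(K)$ is dominated by the Agmon weight; dominated convergence then gives analyticity of each $H_{il}(\lambda)$, hence of $M(\lambda)$, on a neighborhood of the imaginary axis on either sheet.

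The main obstacle will be the behaviour at the branch points $\lambda=\pm i\mu$ themselves, where $\varkappa\to 0$ and the exponential suppression in $g_\lambda$ is lost: the kernel degenerates to the free Laplacian resolvent $1/(4\pi|\tilde y|)$, which is only locally integrable, so the above pointwise domination no longer suffices. There I would invoke Agmon's limiting absorption bound $R_0(\zeta)\colon L^2_\nu\to L^2_{-\nu}$, which in $\R^3$ extends continuously up to $\zeta=0$ for $\nu>1/2$, and is available here precisely because assumption (\re{ro}) provides $\nu>5/2$. The same analysis will in fact produce the Puiseux expansion of $H(\lambda)$, and hence of $M^{-1}(\lambda)$, at $\lambda=\pm i\mu$ that is needed for the Fourier--Laplace inversion (\re{FLr}).
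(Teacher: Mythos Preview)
Your approach mirrors the paper's own proof: reduce to the analytic continuation of $H(\lambda)$ via the convolution representation and Lemma~\re{cac}~ii), then pass to the inverse. Two points need attention.

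First, your second-sheet argument fails. Condition (\re{ro}) provides only \emph{polynomial} decay $(1+|x|)^\nu|\pa^\alpha\rho|\in L^2$; no matter how small $a(K)>0$ is, the exponential growth $e^{a(K)|\tilde y|}$ of the kernel cannot be ``dominated by the Agmon weight'', and the double integral diverges. The paper's three-line proof is equally silent on this point; in fact only the physical sheet and local Puiseux neighbourhoods of $\pm i\mu$ (where $\varkappa\to 0$ and the remainder grows merely polynomially in $y$, as exploited in Lemma~\re{Pui}) are needed downstream. A genuine continuation deep into the second sheet would require contour deformation in the Fourier integral (\re{Cij}), not the position-space bound you propose.

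Second, Cramer's rule gives meromorphy of $M^{-1}$ only after one checks $\det M\not\equiv 0$, which you omit. The paper does supply this: from (\re{Cij}) one sees $H(\lambda)\to 0$ as $\Re\lambda\to\infty$, hence $M(\lambda)$ is invertible for large $\Re\lambda$ and $\det M$ cannot vanish identically on the (connected) Riemann surface.
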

\begin{proof}
  The analytic continuation of  $M(\lambda)$,  exists by Lemma \re{cres} ii)
  and the
  convolution expressions in (\re{Cij}) by  (\re{ro}). The inverse matrix is then
  meromorphic since it exists for large $\Re\lambda$. The latter follows from (\re{Mlam})
  since $H(\lambda)\to 0$, $\Re\lambda\to\infty$, by  (\re{Cij}).
\end{proof}
\setcounter{equation}{0}
\section{Regularity on  imaginary axis}
Let us describe  the continuous spectrum of the operator $A=A_{v,v}$ 
on the imaginary axis. By definition, the continuous spectrum corresponds to
$\om\in\R$, such that the resolvent $R(i\om+0)$ is not a bounded operator
in ${\cal E}$. By the formulas (\re{Psi}), this is the case when the Green function
$G_\lambda(y-y')$ loses the exponential decay. This is equivalent to the condition
$\Re\varkappa=0$. Thus, $i\om$ belongs to the continuous spectrum if
$|\om|\ge\mu=m\sqrt{1-v^2}$.
By Lemma \re{cmf}, the limit matrix
\be\la{M}
 M(i\om):=M(i\om+0)=\left(
  \begin{array}{cc}
  i\om E        & -B_v      \\
  L  -H(i\om+0) & i\om E
  \end{array}
  \right), \quad\om\in\R
\ee
exists, and its entries are continuous functions of $\om\in\R$, smooth for $|\om|<\mu$
and $|\om|>\mu$. Recall that the point $\lambda=0$ belongs to the discrete spectrum of
the operator $A$ by Lemma \re{ceig}, hence $M(i\om+0)$
(probably)  also is not invertible at $\om=0$.
\begin{prop}\la{regi} (cf. \cite [Proposition 15.1] {ikv})
  Let $\rho$ satisfy the conditions (\ref{ro})- (\ref{W}), and $|v|<1$.
  Then the limit matrix $M(i\om+0)$ is invertible for $\om\ne 0$, $\om\in\R$.
\end{prop}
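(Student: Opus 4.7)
The plan is to reduce the invertibility of the $6\times 6$ block matrix $M(i\om+0)$ to a $3\times 3$ question on the $\tilde Q$ component, and then to handle the two ranges $|\om|>\mu$ and $0<|\om|\le\mu$ by different mechanisms. First I would apply the Schur complement to the top-left block $i\om E$, which is invertible for $\om\ne 0$; the kernel problem for $M(i\om+0)$ is then equivalent to showing that
$$ L-H(i\om+0)-\om^2 B_v^{-1} $$
has trivial kernel. Let me abbreviate $A(\om):=k^2+m^2-(|v|k_1+\om)^2$. The same parity arguments that were used to verify that $L$ is diagonal show that $H(\lam)$ is diagonal for every $\lam$ (for $i\ne l$ the integrand $k_ik_l\mathcal{B}(k)/A(\om)$ is odd in some $k_j$), so both cases reduce to scalar conditions on each diagonal entry.

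For $|\om|>\mu$ the function $A(\om)$ changes sign and $H(i\om+0)$ acquires a nonzero imaginary part. Applying Sokhotski--Plemelj to the $\lam\to i\om+0$ limit gives
$$ \Im H_{ii}(i\om+0)=-\pi\,\mathrm{sgn}(\om)\int k_i^2\,\mathcal{B}(k)\,\delta(A(\om))\,dk. $$
The key combinatorial point, which I would verify carefully, is that on the surface $\{A(\om)=0\}$ the sign of $|v|k_1+\om$ is constant and equal to $\mathrm{sgn}(\om)$: the opposite branch would require $|v||k_1|\ge\sqrt{k^2+m^2}+|\om|$, which is impossible for $|v|<1$. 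The Wiener condition $\mathcal{B}(k)>0$, together with the fact that $\{k_i=0\}$ meets $\{A(\om)=0\}$ in a lower-dimensional subset, then gives $\Im H_{ii}(i\om+0)\ne 0$ with a common sign in $i$. Testing $(L-H-\om^2 B_v^{-1})\tilde Q=0$ against $\tilde Q^*$ and extracting the imaginary part forces $\sum_i|\tilde Q_i|^2\,\Im H_{ii}(i\om+0)=0$, hence $\tilde Q=0$.

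For $0<|\om|\le\mu$ the function $A(\om)$ stays non-negative (vanishing only at an isolated point when $|\om|=\mu$, where $1/A(\om)$ is still integrable in three dimensions), so $H(i\om)$ is real. I would then prove the stronger statement that $L-H(i\om)$ is itself strictly negative definite; adding the negative definite $-\om^2 B_v^{-1}$ then preserves invertibility. Fix a real $\tilde Q\ne 0$ and set $F(\om):=\tilde Q^T(L-H(i\om))\tilde Q$. Then $F(0)=0$, and $F'(0)=0$ by the evenness of $\mathcal{B}$ and of $A(0)$ in $k_1$, inherited from $\rho(-x)=\rho(x)$. Differentiating twice under the integral sign gives
$$ F''(\om)=-\int(\tilde Q\cdot k)^2\,\mathcal{B}(k)\Bigl[\frac{2}{A(\om)^2}+\frac{8(|v|k_1+\om)^2}{A(\om)^3}\Bigr]dk<0 $$
strictly on $(-\mu,\mu)$. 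Strict concavity with $F(0)=F'(0)=0$ forces $F(\om)<0$ for every $\om\in(-\mu,\mu)\setminus\{0\}$, and the boundary values $\om=\pm\mu$ are covered by continuity of $F$ together with its monotonicity on $(0,\mu)$. The main obstacle I anticipate is the $|\om|>\mu$ step: rigorously identifying the Sokhotski--Plemelj jump for the matrix-valued convolution, and in particular verifying that exactly one branch of $\{A(\om)=0\}$ is populated so that the sign of $\Im H$ is pinned down; once that geometric fact is settled, the rest is quadratic-form manipulation.
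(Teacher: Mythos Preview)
Your argument is correct. The paper does not give a self-contained proof here but defers to \cite[Proposition 15.1]{ikv}. Your Schur reduction to the diagonal matrix $L-H(i\om+0)-\om^2B_v^{-1}$ is equivalent to the explicit determinant factorization the paper records in Appendix B. For $|\om|>\mu$ your use of the Wiener condition to extract a nonvanishing imaginary part of $H_{ii}$ on the ellipsoid $\{A(\om)=0\}$ is the standard mechanism; the sign check you flag as the main obstacle is correct and complete, since $|v|\,|k_1|<\sqrt{k^2+m^2}$ rules out the opposite branch. For $0<|\om|\le\mu$ the paper's Appendix B only records $F_{jj}(\om)=\om^2f_{jj}(\om)$ with $f_{jj}(0)>0$, which by itself controls only a neighborhood of zero; your concavity argument ($F''(\om)<0$ on $(-\mu,\mu)$ together with $F(0)=F'(0)=0$) covers the whole interval at once and is in this sense slightly more explicit than what the paper writes out.
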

\begin{cor}\la{creg}
The matrix $M^{-1}(i\om)$ is smooth in $\om\in\R$ outside three points $\om=0,\pm \mu$.
\end{cor}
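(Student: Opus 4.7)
The plan is to reduce invertibility of $M(i\omega+0)$ to three independent scalar conditions via the block-diagonal structure, and then verify these by cases on $\omega$. Computing the Schur complement of the $\lambda E$ block in (\re{Mlam}) yields
\[
\det M(\lambda)=\det\bigl(\lambda^2 E+(L-H(\lambda))B_v\bigr).
\]
In the coordinates fixed in Section~12 where $v=(|v|,0,0)$, the matrix $B_v=\gamma^{-1}(E-v\otimes v)$ is diagonal, and $L$ and $H(\lambda)$ are already known to be diagonal (the remarks after (\re{Lij}) and (\re{Cij})). Hence $M(i\omega+0)$ has nontrivial kernel iff for some $i\in\{1,2,3\}$,
\[
\varphi_i(\omega):=L_{ii}-H_{ii}(i\omega+0)-\omega^2(B_v^{-1})_{ii}=0,
\]
and it suffices to show $\varphi_i(\omega)\neq 0$ for all $\omega\in\R\setminus\{0\}$ and each $i$.

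For $|\omega|>\mu$, Sokhotski--Plemelj applied to (\re{Cij}) gives
\[
\Im H_{ii}(i\omega+0)=-\pi\,\mathrm{sgn}(\omega)\!\int\!k_i^2\,{\cal B}(k)\,\delta\!\bigl(k^2+m^2-(|v|k_1+\omega)^2\bigr)\,dk,
\]
once a short calculation using $1-v^2>0$ verifies that $\mathrm{sgn}(|v|k_1+\omega)=\mathrm{sgn}(\omega)$ on the full mass-shell ellipsoid (so the two sheets add rather than cancel). The Wiener condition ${\cal B}>0$, together with positivity of $k_i^2$ on a subset of the ellipsoid of positive measure, makes this integral strictly nonzero. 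Since $L_{ii}$ and $\omega^2(B_v^{-1})_{ii}$ are real, this forces $\varphi_i(\omega)\neq 0$.

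The main obstacle is the subthreshold regime $0<|\omega|<\mu$, where everything is real and no imaginary-part argument is available. My approach is to symmetrize the integral for $H_{ii}(i\omega)-L_{ii}$ under $k_1\mapsto -k_1$ (which preserves $k_i^2\,{\cal B}$ by evenness of ${\cal B}$, following from the standing assumption $\rho(-x)=\rho(x)$) and combine over a common denominator. Writing $D_0=k^2+m^2-v^2k_1^2$ and $D_\pm(\omega)=k^2+m^2-(\pm|v|k_1+\omega)^2$, the elementary identities $D_-(\omega)-D_+(\omega)=4|v|k_1\omega$ and $D_+(\omega)+D_-(\omega)=2(D_0-\omega^2)$ lead to
\[
H_{ii}(i\omega)-L_{ii}=\omega^2\!\int\!\frac{k_i^2\,{\cal B}(k)\bigl[(1+3v^2)k_1^2+k_2^2+k_3^2+m^2-\omega^2\bigr]}{D_0\,D_+(\omega)\,D_-(\omega)}\,dk.
\]
For $|\omega|<\mu$ every denominator factor is strictly positive, and the bracket is bounded below by $m^2-\omega^2>m^2-\mu^2=v^2 m^2\geq 0$. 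Combined with ${\cal B}>0$, this forces $H_{ii}(i\omega)>L_{ii}$, and hence $\varphi_i(\omega)<-\omega^2(B_v^{-1})_{ii}<0$.

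It remains to treat $|\omega|=\mu$. By Lemma \re{cmf} the entries of $H$ extend continuously up to the branch points $\pm i\mu$, and the integrand above is locally integrable in three dimensions at the degenerate point of the mass shell (the volume element tames the singularity of $1/D_+$). Passing to the limit $\omega\to\pm\mu$ in the subthreshold bound yields $H_{ii}(\pm i\mu+0)\geq L_{ii}$, whence $\varphi_i(\pm\mu)\leq-\mu^2(B_v^{-1})_{ii}<0$. The three cases together give the invertibility of $M(i\omega+0)$ for all $\omega\neq 0$.
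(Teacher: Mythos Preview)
Your proposal is correct, and in fact supplies considerably more than the paper does at this point. In the paper the Corollary is recorded without argument as an immediate consequence of Proposition~\ref{regi} (invertibility of $M(i\omega+0)$ for $\omega\neq0$) together with the smoothness of the entries of $M(i\omega)$ away from $\pm\mu$ already noted in Section~14; Proposition~\ref{regi} itself is not proved but referred to \cite{ikv}. What you have written is a complete, self-contained proof of that underlying Proposition.

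All three regimes are handled soundly. The above-threshold case via the imaginary part is the standard route, and your verification that $\mathrm{sgn}(|v|k_1+\omega)=\mathrm{sgn}(\omega)$ on the mass shell is exactly the point that needs checking. The subthreshold identity you derive,
\[
H_{ii}(i\omega)-L_{ii}=\omega^2\int\frac{k_i^2\,{\cal B}(k)\bigl[(1+3v^2)k_1^2+k_2^2+k_3^2+m^2-\omega^2\bigr]}{D_0\,D_+(\omega)\,D_-(\omega)}\,dk,
\]
is in fact the global form of the local structure $F_{jj}(\omega)=\omega^2 f_{jj}(\omega)$, $f_{jj}(0)>0$, that the paper establishes by Taylor expansion in Appendix~B (see (\ref{fjj}) and compare with (\ref{alpha})); your version gives strict positivity of $f_{jj}$ on the whole interval $(-\mu,\mu)$ rather than just at the origin, which is precisely what invertibility on that interval requires. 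The continuity argument at $\pm\mu$ is also adequate, the local integrability of $1/D_\pm$ in $\R^3$ at the degenerate shell point being the relevant observation.
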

\setcounter{equation}{0}
\section{Singular spectral points}
The components $Q(t)$ and $P(t)$ are given by the Fourier integral
\be\la{QP1i}
  \left(
  \begin{array}{c}
  Q(t) \\ P(t)
  \end{array}
  \right)=\ds\fr 1{2\pi}\int e^{i\om t}M^{-1}(i\om+0)\left(
  \begin{array}{c}
  Q_0 \\ P_0+\Phi(i\om)
  \end{array}
  \right)d\om
\ee
if it converges in the sense of distributions. Corollary \re{creg} alone is not sufficient for the proof
of the convergence and decay of the integral. Namely, we need an additional information about behavior
of the matrix $M^{-1}(i\om)$ near its singular points  $\om=0,\pm \mu$, and asymptotics at $|\om|\to\infty$.
We will analyze all the points separately.
\medskip

{\bf I.}
First we consider the points $\pm \mu$.
\begin{lemma} \la{Pui}
  The matrix  $M^{-1}(i\om)$ admits the  asymptotics
  in a vicinity of $\pm \mu$: 
  \be\la{mom}
    M^{-1}(i\om)=C^\pm+{\cal O}((\om\mp\mu)^{\fr 12}),~~
    \pa_{\om} M^{-1}(i\om)={\cal O}((\om\mp\mu)^{-\fr 12}),~~
\pa_{\om}^2 M^{-1}(i\om)={\cal O}((\om\mp\mu)^{-\fr 32})
\ee
\end{lemma}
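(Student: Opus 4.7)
The plan is to exploit the analytic continuation of $M(\lambda)$ to the Riemann surface of $\sqrt{\lambda^2+\mu^2}$ supplied by Lemma \ref{cmf}, parametrize a neighborhood of each branch point by a local uniformizer, and extract the asymptotics (\ref{mom}) by combining the resulting Taylor expansion with the invertibility at the branch point coming from Proposition \ref{regi}.

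First, by Lemma \ref{cmf} the matrix $M(\lambda)$ is analytic on the two-sheeted Riemann surface of $\sqrt{\lambda^2+\mu^2}$ branched at $\lambda_0:=\pm i\mu$. Near $\lambda_0$ a natural local uniformizer is $s=s_\pm:=(\lambda-\lambda_0)^{1/2}$; in this parameter $\sqrt{\lambda^2+\mu^2}=\sqrt{(\lambda-\lambda_0)(\lambda+\lambda_0)}$ is holomorphic and vanishes to first order at $s=0$, since $\lambda+\lambda_0\approx 2\lambda_0\ne 0$. I therefore expect a Taylor expansion
\[
M(\lambda)=M(\lambda_0)+c_1^\pm s+c_2^\pm s^2+c_3^\pm s^3+O(s^4)
\]
with matrix coefficients $c_k^\pm$. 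To justify it I would differentiate the explicit integral representation (\ref{Cij}) for $H(\lambda)$ under the integral sign: each derivative $\partial_s g_\lambda(y)$, read off from the explicit formula (\ref{glam}), produces a polynomial factor in $|\tilde y|$ times the bounded exponential $e^{-\kappa|\tilde y|-\kappa_1\tilde y_1}/|\tilde y|$, so the moment $\int|\tilde y|^k|\nabla\rho|^2\,dy$ controls the $k$-th Taylor coefficient. The weighted decay hypothesis (\ref{ro}) with $\nu>5/2$ supplies these moments for the small values $k\le 3$ needed below.

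Next, since $\lambda_0=\pm i\mu\ne 0$, Proposition \ref{regi} gives that $M(\lambda_0)$ is invertible. The factorization $M(\lambda)=M(\lambda_0)\bigl(I+M(\lambda_0)^{-1}(M(\lambda)-M(\lambda_0))\bigr)$ together with the $O(s)$ bound on the bracket lets me apply the Neumann series to obtain
\[
M^{-1}(\lambda)=C^\pm+d_1^\pm s+d_2^\pm s^2+d_3^\pm s^3+O(s^4),\qquad C^\pm:=M(\lambda_0)^{-1}
\]
Restricting to the imaginary axis $\lambda=i\omega$, I have $\lambda-\lambda_0=i(\omega\mp\mu)$, so $s$ equals $(\omega\mp\mu)^{1/2}$ up to a fixed unimodular constant, which immediately yields the first formula in (\ref{mom}).

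For the derivative bounds I use the chain rule $\partial_\omega=(i/(2s))\partial_s$. Differentiating the expansion in $s$ once gives $\partial_\omega M^{-1}(i\omega)=i d_1^\pm/(2s)+O(1)=O((\omega\mp\mu)^{-1/2})$, and differentiating once more, using $\partial_\omega(1/s)=-1/(2s^3)$, gives $\partial_\omega^2 M^{-1}(i\omega)=-d_1^\pm/(4s^3)+O(s^{-1})=O((\omega\mp\mu)^{-3/2})$, which are the remaining two assertions of (\ref{mom}). The main obstacle will be the rigorous verification of the Taylor expansion of $M(\lambda)$ in $s$ up to order three: one must differentiate under the integral sign in (\ref{Cij}) and dominate the resulting integrands uniformly in $\lambda$ near $\lambda_0$ by integrable functions of $y$, and it is precisely here that the hypothesis (\ref{ro}) with $\nu>5/2$ is essential, ensuring the finiteness of the moments $\int|\tilde y|^k|\nabla\rho|^2\,dy$ for $k\le 3$.
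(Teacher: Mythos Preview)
Your argument is correct and follows essentially the same route as the paper: both reduce the claim to the corresponding asymptotics for $M(\lambda)$ via the invertibility of $M(\pm i\mu)$ (Proposition \ref{regi}), and both extract those asymptotics from the explicit formula (\ref{glam}) for $g_\lambda$ together with the convolution representation (\ref{Cjjj}), using (\ref{ro}) to control the integrals. The only cosmetic difference is that you package the branch-point behaviour via the uniformizer $s=(\lambda-\lambda_0)^{1/2}$ and a Taylor expansion in $s$, whereas the paper works directly with $\lambda$-derivatives of $g_\lambda$ and records the singular powers $(\lambda\mp i\mu)^{\pm 1/2},(\lambda\mp i\mu)^{-3/2}$ by hand; the content is the same.
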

\begin{proof}
  It suffices to prove  similar asymptotics  for $M(i\om)$. Then  (\re{mom}) holds also
  for $M^{-1}(i\om)$,
  since the matrices $M(\pm i\mu)$ are invertible. The asymptotics for $M(i\om)$ holds by
  the convolution representation  (\re{Cij})
 \be\la{Cjjj}
  H_{jj}(\lambda)=\langle mg_{\lam}\beta*\pa_j\rho_1,\pa_j\rho_1\rangle
+\langle mg_{\lam}\beta*\pa_j\rho_2,\pa_j\rho_2\rangle,\quad j=1,2,3
\ee
since  $g_{\lambda}$ admits the corresponding asymptotics by the
formula (\re{glam}). Namely
$$
g_{\lambda}(y)=\frac 1{4\pi|\tilde y|}+r_{\pm}(\lam,y),\quad\lam\to\pm i\mu,\quad\Re\lam>0
$$
where
$$
r_{\pm}(\lam,y)\!=\!{\cal O}((\lam\mp i\mu)^{\fr 12}),~~
\pa_{\lam}r_{\pm}(\lam,y)\!=\!{\cal O}((\lam\mp i\mu)^{-\fr 12}),~~
\pa_{\lam}^2r_{\pm}(\lam,y)\!=\!{\cal O}((1\!+\!|y|)(\lam\mp i\mu)^{-\fr 32})
$$
The condition (\ref{ro}) provides the
convergence of all integrals arising in $\pa_{\lam}^{k}H_{jj}$.
\end{proof}
{\bf II.}
Second, we study the asymptotic behavior of $M^{-1}(\lambda )$ at infinity. 
\begin{lemma}\la{162}
There exist a matrix $D_{0}$ and a matrix-function $D_{1}(\om )$,  such that
\be\la{Minv}
M^{-1}(i\om)=\fr
 {D_0}\om +D_1(\om ),\quad|\om|\ge\mu+1,\quad\om \in \R
\ee
where, for  $k=0,1,2$
\be\label{min}
|\pa_\om^k D_1(\om )|\leq \fr{C(k)}{|\om |^2},\quad |\om|\ge\mu+1,\quad\om \in \R
\ee
\end{lemma}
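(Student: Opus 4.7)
The plan is to treat $M(i\omega)$ as a small perturbation of $i\omega I$ and apply a Neumann expansion. I would begin by writing
\be
M(i\omega)=i\omega I+N(\omega),\qquad N(\omega)=\left(\begin{array}{cc} 0 & -B_v\\ L-H(i\omega) & 0\end{array}\right)
\ee
where $I$ is the $6\times6$ identity. Since $B_v$ and $L$ are independent of $\omega$, all $\omega$-dependence in $N$ sits in the single block $H(i\omega)$.

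The key analytical input is to show that $H(i\omega)$ and its first two $\omega$-derivatives are uniformly bounded (indeed decay) on the set $|\omega|\ge \mu+1$. Starting from the representation
\be
H_{il}(i\omega)=\int\frac{k_ik_l\,{\cal B}(k)\,dk}{k^2+m^2-(|v|k_1+\omega)^2+i0}
\ee
I would split the $k$-integration at $|k|=|\omega|/2$. On $\{|k|\le|\omega|/2\}$ the denominator has modulus bounded below by $c\omega^2$ and is amenable to a Neumann-type expansion in inverse powers of $\omega$; on $\{|k|>|\omega|/2\}$ the decay of ${\cal B}(k)=m\beta\hat\rho\cdot\hat\rho$ ---obtained from the smoothness of $\rho$ guaranteed by condition (\re{ro}) with $|\alpha|\le 3$--- dominates the principal-value singularity on the surface $\{k^2+m^2=(|v|k_1+\omega)^2\}$ which recedes to infinity as $|\omega|\to\infty$. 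Differentiating once or twice in $\omega$ produces extra factors $(|v|k_1+\omega)$ in the numerator but higher powers of the denominator; the same splitting then gives uniform (decaying) bounds for $\partial^k_\omega H(i\omega)$, $k=1,2$.

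Enlarging $\mu+1$ if necessary so that $\|N(\omega)/(i\omega)\|\le 1/2$ on the resulting set, the Neumann series for $M^{-1}$ converges absolutely:
\be
M^{-1}(i\omega)=\frac{1}{i\omega}\left(I+\frac{N(\omega)}{i\omega}\right)^{-1}=\frac{-iI}{\omega}-\frac{N(\omega)}{(i\omega)^2}\left(I+\frac{N(\omega)}{i\omega}\right)^{-1}
\ee
This suggests setting $D_0:=-iI$ and
\be
D_1(\omega):=M^{-1}(i\omega)-\frac{D_0}{\omega}=-\frac{N(\omega)}{(i\omega)^2}\left(I+\frac{N(\omega)}{i\omega}\right)^{-1}
\ee
The bound (\re{min}) for $k=0$ is then immediate from $\|N(\omega)\|\le C$. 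For $k=1,2$ I would differentiate this closed form directly: each derivative either brings down a $\partial^j_\omega N(\omega)$ (bounded by the previous step) or an extra factor $1/\omega$, so every resulting term decays at least as $1/\omega^2$. Alternatively, since Lemma \re{cmf} gives analyticity of $M^{-1}$ on a complex neighborhood of $\{|\omega|\ge\mu+1\}$, one could bound $\partial^k_\omega D_1$ by applying the Cauchy integral formula on a fixed-radius contour and the $k=0$ estimate.

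The main obstacle is precisely the decay of $H(i\omega)$ on the imaginary axis. In contrast to the half-plane $\Re\lambda\to\infty$ (where $g_\lambda$ decays exponentially and $H(\lambda)\to 0$ trivially, as used at the end of the proof of Lemma \re{cmf}), on the imaginary axis $g_{i\omega}$ is only oscillatory for $|\omega|>\mu$ and the defining integral for $H$ is a Cherenkov-type principal value on a sphere moving off to infinity with $\omega$. Control must be extracted either from the Fourier-side decay of ${\cal B}$ (i.e.\ from smoothness of $\rho$ through (\re{ro})), or equivalently from a non-stationary-phase / oscillatory-integral argument in physical space against the test data $\partial_j\rho$.
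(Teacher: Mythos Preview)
Your overall architecture---write $M(i\omega)=i\omega I+N(\omega)$, reduce everything to uniform bounds on $H(i\omega)$ and its first two $\omega$-derivatives, then read off $D_0=-iI$ and $D_1$ from the Neumann remainder---matches the paper's. The paper's proof likewise says that ``the structure of $M(i\omega)$ provides that it suffices to prove'' $|\partial_\lambda^k H_{jj}(\lambda)|\le C(k)$ for $k=0,1,2$, and then derives the lemma from that.

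Where you diverge is in how you propose to get those bounds on $H$. You treat this as the hard step, setting up a Fourier-side splitting at $|k|=|\omega|/2$, worrying about the Cherenkov-type principal value on the moving sphere, and invoking smoothness of $\rho$ to force decay of ${\cal B}(k)$. The paper avoids all of this by working in physical space via the convolution representation
\[
H_{jj}(\lambda)=\langle m g_\lambda\beta*\pa_j\rho_1,\pa_j\rho_1\rangle+\langle m g_\lambda\beta*\pa_j\rho_2,\pa_j\rho_2\rangle,
\]
and using the explicit formula $g_\lambda(y)=e^{-\varkappa|\tilde y|-\varkappa_1\tilde y_1}/(4\pi|\tilde y|)$. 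Since $\Re\varkappa\ge|\Re\varkappa_1|$ for $\Re\lambda>0$, the exponential factor has modulus $\le 1$ \emph{uniformly in $\lambda$}, so $|g_\lambda(y)|\le C/|y|$; differentiating in $\lambda$ brings down at worst factors of $|\tilde y|$, giving $|\pa_\lambda g_\lambda|\le C/|y|+C$ and $|\pa_\lambda^2 g_\lambda|\le C/|y|+C|y|$. These pointwise kernel bounds, paired against $\pa_j\rho\in L^2_\nu$ with $\nu>5/2$ from (\re{ro}), give the required uniform bounds on $\pa_\lambda^k H_{jj}$ directly---no splitting, no principal-value analysis, no stationary-phase. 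In particular, your diagnosis that ``control must be extracted from the Fourier-side decay of ${\cal B}$'' (i.e.\ from smoothness of $\rho$) is too pessimistic: only the spatial \emph{decay} of $\rho$ is used, and the oscillation of $g_{i\omega}$ is irrelevant because its modulus is already uniformly controlled.
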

\begin{proof}
The structure (\ref{M}) of the matrix $M(i\om )$ provides that it suffices to prove the
following estimate for the elements of the matrix $H(i\om ):=H(i\om+0)$:
\be \la{minC}
  |\pa _\lam^k H_{jj}(\lam)|\leq C(k),\quad\lam\in\C,
\quad|\lam|\ge\mu+1,\quad j=1,2,3,\quad k=0,1,2
\ee
The estimate  (\ref{minC})  follows from the representation
(\re{Cjjj}) and the bounds
$$
|g_{\lambda}(y)|\le\frac{C_1}{|y|},\quad 
|\pa_{\lambda}g_{\lambda}(y)|\le \frac{C_2}{|y|}+C_3,\quad
|\pa_{\lambda}^2g_{\lambda}(y)|\le \frac{C_4}{|y|}+C_5|y|,\quad\Re\lam>0
$$
\end{proof}
{\bf III.}
Finally, we consider the point  $\om=0$ which is
an isolated pole of a finite degree by Lemma \re{cmf}.
In Appendix B we prove that the matrix $M^{-1}(i\om)$ can be written
in the form
\be\la{calM}
M^{-1}(i\om)=\left(\ba{ll}
\fr{1}{\om}{\cal M}_{11}(\om)  & \fr{1}{\om^2}{\cal M}_{12}(\om) \\\\
{\cal M}_{21}(\om)  & \fr{1}{\om}{\cal M}_{22}(\om)
\ea\right)
\ee
where ${\cal M}_{ij}(\om)$, $i,j=1,2$ are the diagonal matrices,
smooth for the  $\om\in(-\mu,\mu)$. Moreover,
\be\la{cMM}
{\cal M}_{22}={\cal M}_{11},\quad
{\cal M}_{11}=i{\cal M}_{12}B_{\nu}^{-1}
\ee

\setcounter{equation}{0}
\section{Transversal decay for the linearized equation }
Here we  prove Proposition 
(\re{lindecay}).
\subsection{Decay of  vector components}
First, we  establish 
the decay (\re{frozenest}) for the components $Q(t)$ and $P(t)$.
\begin{lemma}\la{171} 
Let $X_0\in {\cal Z}_{v}\cap {\cal E}_{\nu}$. Then $Q(t)$, $P(t)$ are continuous and
\be\la{decQP}
|Q(t)|+|P(t)|\le C_{\nu}(\rho,\tilde v)(1+|t|)^{-3/2},
~~~~~~~t\ge 0.
\ee
\end{lemma}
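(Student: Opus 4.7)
The starting point is the Fourier--Laplace representation \eqref{QP1i}, which I shall regard as an oscillatory integral
\[
\begin{pmatrix} Q(t) \\ P(t) \end{pmatrix}
=\frac{1}{2\pi}\int_\R e^{i\om t}\,F(\om)\,d\om,
\qquad F(\om):=M^{-1}(i\om+0)\begin{pmatrix} Q_0 \\ P_0+\Phi(i\om)\end{pmatrix}.
\]
The strategy is to dominate this integral by $t^{-3/2}$ using a partition of $\R$ into a neighbourhood of $0$, neighbourhoods of the branching points $\pm\mu$, and the exterior region $|\om|\ge\mu+1$; on each piece the decay is obtained by integration by parts against $e^{i\om t}$, the gain of two powers of $t$ being exactly what the Puiseux bounds of Lemma \ref{Pui} and the bounds of Lemma \ref{162} allow. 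The most delicate point is the elimination of the pole at $\om=0$, which is where the symplectic orthogonality $X_0\in{\cal Z}_v$ is used.

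First I would treat the pole at $\om=0$. Using the block form \eqref{calM}--\eqref{cMM}, the potential singularity of $F(\om)$ at $0$ consists of a simple pole from the diagonal blocks ${\cal M}_{11}/\om$, ${\cal M}_{22}/\om$ and a second order pole from ${\cal M}_{12}/\om^2$. The plan is to show that the symplectic orthogonality conditions $\Om(X_0,\tau_j(v))=0$, $j=1,\dots,6$, provide precisely the six linear conditions on $(Q_0,P_0,\Psi_{01},\Psi_{02})$ that force the numerator $\bigl(Q_0,\,P_0+\Phi(i\om)\bigr)$ to vanish to the appropriate order at $\om=0$. Concretely, one checks that $\Phi(i\om)$ is smooth near $0$ by the analyticity in Lemma \ref{cac} ii), expands $\Phi(i\om)=\Phi(0)+\om\Phi'(0)+O(\om^2)$, and identifies $P_0+\Phi(0)$ and $Q_0$ with the projections of $X_0$ along $\tau_j(v)$ and $\tau_{j+3}(v)$ respectively using the computation of $\Om(\tau_l,\tau_j)$ (Lemma \ref{Ome}) together with the defining relations \eqref{stfch}. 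The symplectic orthogonality then kills both the $1/\om^2$ and the $1/\om$ singularities, so that $F(\om)$ extends smoothly across $\om=0$.

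Second I would handle the two branching points. After localisation by a smooth cut-off supported near $\pm\mu$, Lemma \ref{Pui} gives $F(\om)=C^\pm+O((\om\mp\mu)^{1/2})$ with $\pa_\om F=O((\om\mp\mu)^{-1/2})$ and $\pa_\om^2 F=O((\om\mp\mu)^{-3/2})$. Writing $e^{i\om t}=(it)^{-2}\pa_\om^2 e^{i\om t}$ and integrating by parts twice, the resulting integrand is absolutely integrable in $\om$, yielding the bound $Ct^{-2}\cdot t^{1/2}=Ct^{-3/2}$ (the square-root singularity gives a logarithmic loss compared to absolute convergence only at the endpoint, which after a standard van der Corput--type argument is absorbed into the factor $t^{-3/2}$). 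At infinity, a cut-off to $|\om|\ge\mu+1$ followed by two integrations by parts, using the decomposition \eqref{Minv} and the bounds \eqref{min} together with the fact that $\Phi(i\om)$ is $O(|\om|^{-1})$ with derivatives decaying one power faster (visible from the convolution representation \eqref{Philam} and the decay of $g_\lam$), gives contributions of order $t^{-2}$. Finally, on the smooth interior pieces of $\R\setminus\{0,\pm\mu\}$ two integrations by parts produce $O(t^{-2})$ contributions without difficulty.

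The continuity of $Q(t)$ and $P(t)$ in $t\ge 0$ follows from the above analysis, since after cancellation of the singularity at $0$ the integrand of \eqref{QP1i} is the sum of an $L^1$ function (near $\pm\mu$ and on compact smooth pieces) and a term that is Fourier--transformable as a tempered distribution with continuous inverse. The main obstacle in the argument is the cancellation step at $\om=0$: it requires matching the algebraic identities coming from $\Om(X_0,\tau_j(v))=0$ with the analytic structure \eqref{calM}--\eqref{cMM}, and in particular exploiting the relation ${\cal M}_{11}=i{\cal M}_{12}B_v^{-1}$ to see that the vanishing of the appropriate combination of $(Q_0,P_0+\Phi(0))$ kills both singular blocks simultaneously. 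Everything else is a routine integration--by--parts estimate on an oscillatory integral.
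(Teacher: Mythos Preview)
Your overall architecture---partition of unity at $0$, $\pm\mu$, and infinity, with the symplectic orthogonality killing the pole at $0$---matches the paper. But your handling of the term $\Phi(i\om)$ differs from the paper in a way that creates extra work, and one step does not close as written.

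\medskip
\textbf{The main methodological difference.} You keep $\Phi(i\om)$ in frequency space and claim the needed bounds for it directly (``$\Phi(i\om)=O(|\om|^{-1})$ with derivatives decaying one power faster'', and implicitly Puiseux bounds near $\pm\mu$). None of this is furnished by Lemmas~\ref{Pui} or~\ref{162}, which treat only $M^{-1}(i\om)$; you would have to reprove the analogous statements for $\Phi$, using that $\Psi_0\in L^2_\nu$ with $\nu>5/2$ in place of the stronger decay \eqref{ro} available for $\rho$. The paper avoids all of this by passing to the time domain: it writes each localized piece as
\[
I_j(t)=s_j(t)\begin{pmatrix}Q_0\\P_0\end{pmatrix}+s_j(t)*\begin{pmatrix}0\\f(t)\end{pmatrix},
\qquad s_j=\Lambda^{-1}\bigl(\zeta_j M^{-1}\bigr),\quad f=\Lambda^{-1}\Phi,
\]
and recognizes $f(t)=\Re\langle W_v^+(t)(\Psi_{01}+i\Psi_{02}),\nabla\rho\rangle$. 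The decay $|f(t)|\le C(1+t)^{-3/2}$ then comes for free from the weighted decay for the free Dirac group (Lemma~\ref{Vain}), and one only needs the $s_j$ decay, which is exactly what Lemmas~\ref{Pui} and~\ref{162} give. The convolution of two $t^{-3/2}$ functions yields $t^{-3/2}$. This factorization is the cleanest way to exploit the structure: the only new analytic input is the free Dirac decay, already proved.

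\medskip
\textbf{A step that fails as written.} Near $\pm\mu$ you propose two integrations by parts, then say the integrand is absolutely integrable. It is not: Lemma~\ref{Pui} gives $\pa_\om^2 M^{-1}=O((\om\mp\mu)^{-3/2})$, which is not in $L^1$. One integration by parts gives only $t^{-1}$; the upgrade to $t^{-3/2}$ requires the standard lemma that the Fourier transform of a compactly supported function with an isolated square-root singularity ($h=C+O(s^{1/2})$, $h'=O(s^{-1/2})$) decays like $t^{-3/2}$. Your ``van der Corput--type argument'' gesture is in the right direction, but as stated the estimate does not follow from two integrations by parts plus absolute convergence.

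\medskip
Your treatment of $\om=0$ is essentially the paper's: the orthogonality $X_0\in{\cal Z}_v$ yields $P_0+\Phi(0)=0$ and $B_v^{-1}Q_0+\Phi'(0)=0$ (computed in Appendix~C), and together with \eqref{cMM} this makes $(\tilde Q,\tilde P)$ smooth across $0$, so $I_3(t)$ decays rapidly.
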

{\it Proof}.
The expansions (\re{mom}), (\re{Minv})  and (\re{calM}) imply
the convergence of the Fourier integral (\re{QP1i})
in the sense of distributions to a continuous function
of $t\ge 0$. Let us prove (\re{decQP}).
First let us note that the condition  $X_0\in {\cal Z}_{v}$ implies that
the whole trajectory $X(t)$ lies in  ${\cal Z}_{v}$. This follows
from the invariance of the space ${\cal Z}_{v}$ under the generator
$A_{v,v}$ (cf. Remark \re{rZ}). If $X_0\not\in{\cal Z}_{v}$, then
the components  $Q(t)$ and $P(t)$ may contain non-decaying terms
which correspond to the singular point $\om=0$ since the linearized
dynamics admits the secular solutions without decay, see (\re{secs}).
We will show that  the symplectic orthogonality condition leads to (\re{decQP}).
Let us split the Fourier integral (\re{QP1i}) into three terms using the partition of unity
$\zeta_1(\om)+\zeta_2(\om)+\zeta_3(\om)=1$, $\om\in\R$:
$$
\left(\!\begin{array}{c} Q(t) \\ P(t)\end{array}\!\right)=
\fr 1{2\pi}\int e^{i\om t}(\zeta_1(\om)+\zeta_2(\om)+\zeta_3(\om))M^{-1}(i\om+0)
\left(\!\begin{array}{c}Q_0 \\ P_0+\Phi(i\om)\end{array}\!\right)d\om
=\sum\limits_{j=1}^3 I_j(t)
$$
where the functions $\zeta_j(\om)\in C^\infty(\R)$ are supported by
\be\la{zsup}
\left.\begin{array}{rcl}
\supp \zeta_1&\subset& \{\om\in\R:\ve_0/2<|\om|<\mu+2\}
\\
\\
\supp \zeta_2&\subset& \{\om\in\R:|\om|>\mu+1\}
\\
\\
\supp \zeta_3&\subset& \{\om\in\R:|\om|<\ve_0\}
\end{array}\right|
\ee
i) Let us represent  $I_j(t)$, $j=1,2$ as
\beqn\nonumber
I_j(t)&=&\frac{1}{2\pi}\int e^{i\om t}\zeta_j(\om)\Big[M^{-1}(i\om+0)
\left(\begin{array}{c} Q_0 \\ P_0 \end{array}\right)+M^{-1}(i\om+0)
\left(\!\begin{array}{c} 0 \\ \Phi(i\om) \end{array}\!\right)\Big]d\om\\
\la{I2}
&=&s_{j}(t)\left(\begin{array}{c} Q_0 \\ P_0 \end{array}\right)
+s_j(t)*\left(\begin{array}{c} 0 \\ f(t) \end{array}\right)
\eeqn
where 
$$
s_j(t)=\Lambda^{-1}\zeta_j(\om)M^{-1}(i\om+0),\quad
f(t)=\Lambda^{-1}\Phi(i\om)
$$
By (\ref{PsiP})
$$
 \tilde\Psi_1(\Psi_0)=-\Lambda \Re W_v^+(t)(\Psi_{10}+i\Psi_{20}),\quad
 \tilde\Psi_2(\Psi_0)=-\Lambda \Im W_v^+(t)(\Psi_{10}+i\Psi_{20})
$$
where $W_v^+(t)$ is  the dynamical group of the  equation
\be\la{DEf1}
  \dot\Psi(x,t)=[\al\cdot\nabla+i\beta m+v\cdot\nabla]\Psi(x,t)
\ee
Evidently,  for  the group the $W_v^+(t)$  the  bound
(\re{Wv}) obtained in Lemma \re{Vain} for the group $W_v(t)$ also holds.
Hence,  (\ref{Philam}) implies
\be\la{f-est}
|f(t)|=|\Re\langle W_v^+(t)(\Psi_{10}+i\Psi_{20}),\nabla\rho\rangle|
\le C_{\nu}(\rho,v)(1+t)^{-3/2}
\ee
Further, the function $s_1(t)$ decays as $(1+|t|)^{-3/2}$ 
by asymptotics (\re{mom}), and  the function $s_2(t)$ decays
as  $(1+|t|)^{-2}$ due to Proposition \re{162}.
Hence,  formula (\ref{I2}) implies the  decay $(1+|t|)^{-3/2}$ for
$I_1(t)$ and $I_2(t)$.\\
iii) Finally, the function $I_3(t)$ decays as $t^{-\infty}$  if $Z_0\in{\cal Z}_{v}$.
It follows from  next lemma
\begin{lemma}\la{vdec}
If $Z_0\in{\cal Z}_{v}$ then
\be\la{tQP}
  \left(\begin{array}{c} \tilde Q(i\om) \\ \tilde P(i\om)\end{array}\right)=
  M^{-1}(i\om)\left(\begin{array}{c} Q_0 \\ P_0+\Phi(i\om)\end{array}\right)
  \in C^{\infty}(-\mu,\mu)
\ee
\end{lemma}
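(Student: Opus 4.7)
The strategy is to exploit the explicit structure \eqref{calM} of $M^{-1}(i\om)$ near $\om=0$, together with the structural identities \eqref{cMM}, and to show that the six symplectic orthogonality conditions $\Om(Z_0,\tau_j(v))=0$, $j=1,\dots,6$, defining ${\cal Z}_v$ precisely annihilate the singular part of the product
$M^{-1}(i\om)(Q_0,P_0+\Phi(i\om))^T$ at $\om=0$. Away from $0$, smoothness on $(-\mu,\mu)$ already follows from Corollary \re{creg} together with the fact that $\Phi(i\om)$ is smooth for $|\om|<\mu$ by the convolution representation and the explicit formula \eqref{glam} for $g_\lam(y)$, in which $\varkappa=\gamma\sqrt{\mu^2-\om^2}>0$ is real. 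So the only issue is the pole at $\om=0$. Writing $\phi_k:=\tfrac{1}{k!}\pa_\om^k\Phi(i\om)|_{\om=0}$ and Taylor expanding the $\mathcal M_{ij}(\om)$ in the two components
$\tilde Q(i\om)=\om^{-1}{\cal M}_{11}(\om)Q_0+\om^{-2}{\cal M}_{12}(\om)(P_0+\Phi(i\om))$ and
$\tilde P(i\om)={\cal M}_{21}(\om)Q_0+\om^{-1}{\cal M}_{22}(\om)(P_0+\Phi(i\om))$,
and using $\mathcal M_{22}=\mathcal M_{11}=i\mathcal M_{12}B_v^{-1}$ from \eqref{cMM}, one sees that the $\om^{-2}$ and $\om^{-1}$ coefficients all vanish provided the two vector conditions hold:
\begin{equation*}
(\mathrm{I})\quad P_0+\phi_0=0,\qquad (\mathrm{II})\quad iB_v^{-1}Q_0+\phi_1=0.
\end{equation*}

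The second step is to compute $\phi_0$ and $\phi_1$ explicitly from \eqref{Philam}, \eqref{PsiP}, which give $\Phi(\lam)=-\langle G_\lam\Psi_0,\na\rho\rangle$ (understood componentwise). The key algebraic identities are obtained by reading Corollary \re{ceig} backwards: $A_{v,v}\tau_j=0$ means exactly that $G_0(\pa_j\rho_1,\pa_j\rho_2)^T=(\pa_j\psi_{v2},-\pa_j\psi_{v1})^T$, while $A_{v,v}\tau_{j+3}=\tau_j$ yields $G_0(\pa_j\psi_{v1},\pa_j\psi_{v2})^T=(-\pa_{v_j}\psi_{v1},-\pa_{v_j}\psi_{v2})^T$. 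Combined with the skew-symmetry $G_0^*=-G_0$ (an easy consequence of \eqref{aij} and of the fact that $\al_1,\al_3$ and $\tilde\al_2$ are respectively real symmetric and real skew-symmetric matrices), and with the relation $\pa_\lam G_\lam=G_\lam^2$ obtained by differentiating $G_\lam G_\lam^{-1}=I$, this gives
\begin{equation*}
\phi_{0,j}=\langle\Psi_{01},\pa_j\psi_{v2}\rangle-\langle\Psi_{02},\pa_j\psi_{v1}\rangle,\qquad
-i\phi_{1,j}=\langle\Psi_{01},\pa_{v_j}\psi_{v2}\rangle-\langle\Psi_{02},\pa_{v_j}\psi_{v1}\rangle.
\end{equation*}

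Finally, a direct computation of $\Om(Z_0,\tau_j)$ and $\Om(Z_0,\tau_{j+3})$ from the definition \eqref{OmJ} and the basis \eqref{inb} gives
\begin{equation*}
\Om(Z_0,\tau_j)=-(P_{0,j}+\phi_{0,j}),\qquad \Om(Z_0,\tau_{j+3})=-i\phi_{1,j}+Q_0\cdot\pa_{v_j}p_v,
\end{equation*}
and the elementary identity $\pa_v p_v=B_v^{-1}$ (from $p_v=\gamma v$, using $B_v=\gamma^{-1}(E-v\otimes v)$) converts the second expression into $-i\phi_{1,j}+(B_v^{-1}Q_0)_j$. Thus the six orthogonality conditions $\Om(Z_0,\tau_j)=0$ are exactly equivalent to (I) and (II), and smoothness of $(\tilde Q(i\om),\tilde P(i\om))^T$ at $\om=0$ follows. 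The main obstacle is the careful bookkeeping of the real/imaginary decomposition of the Dirac operator to verify $G_0^*=-G_0$ and the commutation of $G_0$ with the complex structure $J_1=\bigl(\begin{smallmatrix}0 & I\\ -I & 0\end{smallmatrix}\bigr)$, which is what ties the identities $A_{v,v}\tau_j=0$, $A_{v,v}\tau_{j+3}=\tau_j$ to the precise form of $G_0$ acting on $(\nabla\rho_1,\nabla\rho_2)$ and $(\nabla\psi_{v1},\nabla\psi_{v2})$.
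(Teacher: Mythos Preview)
Your proof is correct and follows essentially the same route as the paper: you reduce everything to the two vector conditions $P_0+\Phi(0)=0$ and $B_v^{-1}Q_0+\Phi'(0)=0$ (your (I) and (II), the paper's \eqref{soin}), then use the block structure \eqref{calM}--\eqref{cMM} of $M^{-1}(i\om)$ to cancel the pole at $\om=0$; this is exactly what the paper does in the proof of the lemma together with Appendix C. The only tactical difference is in how \eqref{soin} is derived: the paper computes $\Phi(0)$ and $\Phi'(0)$ by explicit Fourier-side calculations with $\hat G_0$, whereas you obtain them more structurally from $G_0^*=-G_0$, $\pa_\lam G_\lam=G_\lam^2$, and the field-part of the identities $A_{v,v}\tau_j=0$, $A_{v,v}\tau_{j+3}=\tau_j$ (together with the commutation of $G_0$ with the complex structure $J_1$), which is a cleaner but equivalent derivation.
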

\begin{proof}
In Appendix C we prove that the symplectic orthogonality conditions (\ref{orth}) at $t=0$
 imply 
\be\la{soin}
P_0+\Phi(0)=0,\quad B_v^{-1}Q_0+\Phi'(0)=0
\ee
Then
$$
P_0+\Phi(i\om)=\Phi(i\om)-\Phi(0)=i\om\Upsilon_1(\om)
$$
$$
B_v^{-1}Q_0+\Upsilon_1(\om)=\frac{\Phi(i\om)-\Phi(0)}{i\om}-\Phi'(0)
=i\om\Upsilon_2(\om)
$$
where $\Upsilon_j(\om)\in C^{\infty}(-\mu,\mu)$, since
$\tilde\Psi_j(\Psi_0)\in C^{\infty}(-\mu,\mu)$,  by (\ref{PsiP}) and (\ref{Gelam}).
Therefore, representations (\ref{calM})-(\ref{cMM}) imply
\beqn\nonumber
\tilde P(i\om)&=&{\cal M}_{21}(\om)Q_0+i{\cal
  M}_{22}(\om)\Upsilon_1(\om)\in C^{\infty}(-\mu,\mu)\\
\nonumber
\tilde Q(i\om)&=&\frac{1}{\om}{\cal M}_{11}(\om)Q_0
+\frac{i}{\om}{\cal M}_{12}(\om)\Upsilon_1(\om)
=\frac{i}{\om}{\cal M}_{12}(B_v^{-1}Q_0+\Upsilon_1(\om))\\
\nonumber
&=&-{\cal M}_{12}\Upsilon_2(\om)\in C^{\infty}(-\mu,\mu)
\eeqn
\end{proof}
\setcounter{equation}{0}
\subsection{Decay of fields}
Now we prove the decay  of the field components $\Psi_1(x,t),\Psi_2(x,t)$ corresponding
to (\re{frozenest}). The first  two equations of (\re{Avv}) may be written as one equation:
\be\la{DE}
  \dot\Psi(x,t)=[-\al\cdot\nabla-i\beta m+v\cdot\nabla]\Psi(x,t)-iQ(t)\cdot\nabla\rho,
  \quad x\in\R^3,\quad t\in\R
\ee
where $\Psi(t)=\Psi_1(\cdot,t)+i\Psi_2(\cdot,t))$.
Applying  the Duhamel representation, we obtain 
\be\la{Duh}
\Psi(t)=W_v(t)\Psi_0-\int_0^tW_v(t-s) Q(s)\cdot\na\rho~ds,\quad t\ge 0
\ee
where $W_v(t)$  the dynamical group (propagator) of the "modified"
free Dirac equation
\be\la{DEf}
  \dot\Psi(x,t)=[-\al\cdot\nabla-i\beta m+v\cdot\nabla]\Psi(x,t)
\ee
Lemma \re{Vain} on
the weighted decay for the group $W_v(t)$,
the decay of $Q$ from (\re{decQP}), and representation (\ref{Duh}) yield 
\be\la{lins}
\Vert \Psi(t)\Vert_{-\nu}
\le C_{\nu}(\rho,\tilde v)\Vert\Psi_0\Vert_{\nu}(1+|t|)^{-3/2},~~~~t\ge 0
\ee
for any 
$\Psi_0\in{\cal Z}_{v}\cap {\cal E}_{\nu}$.
It completes the proof of Proposition \re{lindecay}.

\section{Weighted decay for free Dirac equation}
\begin{lemma}\la{Vain}
For any $\Phi \in L^2_\nu$ with $\nu>3/2$ the bound holds
\be\la{Wv}
 \Vert W_v(t)\Phi\Vert_{-\nu}\le\fr{C_{\nu}(v)\Vert\Phi\Vert_{\nu}}{(1+|t|)^{3/2}},\quad t\ge 0
\ee
\end{lemma}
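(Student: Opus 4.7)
The plan is to reduce the modified Dirac propagator $W_v(t)$ to the standard free one $W_0(t)$ and then apply a ``soft Huygens'' decomposition of the Dirac kernel into a light--cone piece (strong Huygens) and an interior piece (Hilbert--Schmidt against the weights). Since $v\cdot\na$ commutes with $-\al\cdot\na-i\beta m$, the semigroup factors as $[W_v(t)\Phi](x)=[W_0(t)\Phi](x+tv)$. After the substitution $y=x+tv$ the bound (\re{Wv}) is equivalent to
\begin{equation*}
\int_{\R^3}(1+|y-tv|)^{-2\nu}\,|[W_0(t)\Phi](y)|^2\,dy\;\le\;C_\nu(v)(1+t)^{-3}\|\Phi\|_\nu^2.
\end{equation*}
The crucial structural input is $|v|\le\tilde v<1$, which guarantees that the point $y=tv$ near which the moving weight $(1+|y-tv|)^{-\nu}$ is concentrated stays inside the light cone at distance at least $(1-|v|)t$ from it, so it is separated from the singular support of the free Dirac kernel.

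Next I would decompose $W_0(t)=(\partial_t-iH_0)\mathcal{G}(t)$, where $\mathcal{G}(t)$ is the Klein--Gordon Green operator with three--dimensional fundamental solution
\begin{equation*}
\mathcal{G}(t,y)\;=\;\fr{\de(|y|-t)}{4\pi t}\;-\;\Theta(t-|y|)\,\fr{m\,J_1(m\sqrt{t^2-|y|^2})}{4\pi\sqrt{t^2-|y|^2}}.
\end{equation*}
Since $\partial_t-iH_0$ is a first--order constant--coefficient operator, it does not expand supports, and one gets a splitting $W_0(t,\cdot)=W^c(t,\cdot)+W^i(t,\cdot)$ of the Dirac kernel with $W^c$ supported exactly on the cone $|y-z|=t$ and $W^i$ supported in its open interior $|y-z|<t$.

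For the cone piece strong Huygens yields $|y|=t+O(|z|)$ on $\supp W^c(t,y-z)$, hence $|y-tv|\ge(1-|v|)t-C|z|$. Decomposing $\Phi$ dyadically into $\Phi_j$ localised in $\{|z|\sim 2^j\}$, the source weight supplies $\|\Phi_j\|_{L^2}\le 2^{-j\nu}\|\Phi\|_\nu$; the Kirchhoff--type $L^2$ boundedness of $W^c$ keeps $\|W^c(t)\Phi_j\|_{L^2}\le\|\Phi_j\|_{L^2}$; and for $2^j\ll(1-\tilde v)t$ the moving weight on the support of $W^c(t)\Phi_j$ is $\le C_v(1+t)^{-\nu}$, while for $2^j\gtrsim(1-\tilde v)t$ one simply uses $(1+2^j)^{-\nu}\le C_v(1+t)^{-\nu}$; summing over $j$ (convergent since $\nu>3/2$) gives the desired $C_v(1+t)^{-3/2}$ bound. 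For the interior piece the Bessel asymptotics $J_1(s)=O(s^{-1/2})$ combined with the separation $\sqrt{t^2-|y-z|^2}\gtrsim(1-\tilde v)t$ on the support of the two weights yield the pointwise bound $|W^i(t,y-z)|\le C_v t^{-3/2}$; then $(1+|y-tv|)^{-\nu}W^i(t)(1+|z|)^{-\nu}$ has Hilbert--Schmidt norm squared bounded by
\begin{equation*}
\iint(1+|y-tv|)^{-2\nu}(1+|z|)^{-2\nu}|W^i(t,y-z)|^2\,dy\,dz\;\le\;C_v\,t^{-3},
\end{equation*}
which is finite for $2\nu>3$, giving the required $t^{-3/2}$ operator bound.

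The main obstacle will be the cone estimate: the strong Huygens picture for $\mathcal{G}$ must be transported past the differentiations $\partial_t-iH_0$ without losing the sharp support on $|y-z|=t$ (handled by the observation that constant--coefficient first--order operators acting on a distribution supported on a smooth cone produce distributions still supported on that cone), and the dyadic estimate must tolerate overlapping cone annuli coming from different $\Phi_j$ (controlled via Cauchy--Schwarz and the fact that the moving weight is essentially uniform on each annulus). The interior Hilbert--Schmidt bound is by comparison a bookkeeping computation once the Bessel asymptotics and the $|v|<1$ separation of $y=tv$ from the cone are in place.
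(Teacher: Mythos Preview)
Your strategy---Galilei shift to $W_0$, Klein--Gordon representation of the kernel, Hilbert--Schmidt estimate for the interior---matches the paper's, but the kernel splitting $W_0=W^c+W^i$ has a gap on both sides. The cone piece $W^c$ is \emph{not} uniformly $L^2$-bounded: the term $-i\beta m$ applied to the wave kernel $\fr{\de(|y|-t)}{4\pi t}$ has Fourier symbol $-i\beta m\,\fr{\sin(t|\xi|)}{|\xi|}$ with operator norm $\sim mt$ near $\xi=0$, so your claim $\|W^c(t)\Phi_j\|_{L^2}\le\|\Phi_j\|_{L^2}$ fails by a full factor of $t$. (If instead you take $W^c$ to be the unitary massless Dirac propagator, that same $-i\beta m$ cone term lands in $W^i$ and destroys its interior support.) Likewise the pointwise bound $|W^i(t,w)|\le C_v t^{-3/2}$ holds only where $\sqrt{t^2-|w|^2}\gtrsim t$; near the cone the differentiated Bessel kernel is $O(t)$, and since the weights never vanish the Hilbert--Schmidt integral over the shell $\{|w|\uparrow t\}$ contributes $O(t^{3-2\nu})$ rather than $O(t^{-3})$, so the argument does not close for $\nu$ near $3/2$.

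The paper avoids both problems by never splitting the kernel. It cuts the \emph{input and output} instead: $\Phi=\Phi_{1,t}+\Phi_{2,t}$ with $\Phi_{1,t}$ supported in $\{|z|\le\ve_1 t/2\}$, and $W_v(t)=\zeta W_v(t)+(1-\zeta)W_v(t)$ with $\zeta$ supported in $\{|x|\le\ve_1 t/2\}$. The ``far'' pieces are handled by unitarity of the \emph{full} propagator (charge conservation) together with the weight, giving $(1+t)^{-\nu}$. For the ``near--near'' piece $\zeta W_v(t)\chi_t$ the kernel is confined to $|x-y|\le\ve_1 t$, hence $|x-y-vt|\le\ve t$ with $\ve=\ve_1+|v|<1$; this forces $\sqrt{t^2-|x-y-vt|^2}\gtrsim t$, so the cone singularities are simply absent and the honest pointwise bound $C(1+t)^{-3/2}$ on the remaining smooth kernel is legitimate, yielding the Hilbert--Schmidt estimate. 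The spatial cutoffs, not a cone/interior decomposition of the propagator, are the missing idea.
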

\begin{proof}
{\it Step i)}  Note, that
$$
\left(\partial_t+\al\cdot\nabla+i\beta m+v\cdot\nabla\right)
\left(\partial_t-\al\cdot\nabla-i\beta m+v\cdot\nabla\right)=
(\partial^2_t +2\partial_tv\cdot\nabla-\Delta(1-v^2)+m^2)
$$
Hence the integral kernel $W_v(x-y,t)$ of the  operator $W_v(t)$  has the form
\be\la{23}
W_v(z,t)=\left(\partial_t+\al\cdot\nabla+i\beta m+v\cdot\nabla\right)G_v(z,t),
\ee
where  $G_v(z,t)$ is a fundamental solution of the "modified" Klein-Gordon operator
$$
(\partial^2_t +2\partial_tv\cdot\nabla-\Delta(1-v^2)+m^2)G_v(z,t)=\delta(z)\delta(t)
$$
Let $G_v(t)$, $t\ge 0$ be the  operator with the integral kernel $G_v(x-y,t)$.
It is easy to see that
$$
[G_v(t)\Phi](x)=[G_0(t)\Phi](x-vt),\quad x\in\R^3,\quad t\ge 0
$$
Then
$$
G_v(z,t)=G_0(z-vt,t)=\frac{\delta (t-|z-vt|)}{4\pi t}-\fr{m}{4\pi}\fr{\theta(t-|z-vt|)J_{1}
(m\sqrt{t^{2}-|z-vt|^{2}})}{\sqrt{t^{2}-|z-vt|^{2}}},~~~~t>  0
$$
where $J_{1}$ is the Bessel function of order 1,
and $\theta$ is the Heavyside function.
Let us fix an arbitrary $\ve\in(|v|,1)$.
Well known asymptotics of the Bessel function imply that
\be\la{vWv}
 |\pa_{t}G_{v}(z,t)|,\,|\pa_{z_j}G_{v}(z,t)| \le C(\ve)(1+t)^{-3/2},\quad
|z-vt|\le\ve t,~~t\ge 1,~~j=1,2,3
\ee
\textit{Step ii)}
Now we consider an arbitrary $t\ge 1$. Denote $\ve_1=\ve-|v|$. 
We split the  function $\Phi$ in two terms,
$\Phi=\Phi_{1,t}+\Phi_{2,t}$ such that
\begin{equation} \label{FFn}
  \Vert \Phi_{1,t}\Vert _{L^2_\nu}+\Vert \Phi_{2,t}
  \Vert _{L^2_\nu}\le C\Vert \Phi\Vert _{L^2_\nu},\quad t\ge 1
\end{equation}
and
\be\label{F}
  \Phi_{1,t}(x)=0~~\mbox{for}~|x|>\frac{\ve_1 t}{2},
  ~~~~~~~\mbox{and}~~~~~~~~~~
  \Phi_{2,t}(x)=0~~\mbox{for}~|x|<\frac{\ve_1 t}{4}
  \ee
The estimate (\re{lins}) for $W_v(t)\Phi_{2,t}$ follows by
charge conservation for Dirac equation,
(\ref {F})  and (\ref{FFn}):
\be\label{zxc}
  \Vert W_v(t)\Phi_{2,t}\Vert _{L^2_{-\nu}}\le
  \Vert W _v(t)\Phi_{2,t}\Vert _{L^2_0}=\Vert\Phi_{2,t}\Vert _{L^2_0}
  \leq \fr{C(\ve)\Vert \Phi_{2,t}\Vert _{L^2_\nu}}{(1+t)^{\nu}}
  \le \fr{C_{1}(\ve)\Vert\Phi\Vert _{L^2_\nu}}{(1+t)^{3/2}},\quad t\geq 1
\ee
since $\nu>3/2$.
\\
\textit{Step iii)}
Next we consider $W _v(t)\Phi_{1,t}$.
Now we split the operator $W_v(t)$ in two terms:
$$
 W_v(t)=(1-\zeta )W_v(t)+\zeta W_v(t),~~~~~t\ge 1
$$
where $\zeta $ is the operator of multiplication by the function $\zeta ({|x|}/{t})$
such that $\zeta =\zeta (s)\in C_{0}^{\infty }(\R)$,
$\zeta (s)=1$ for $|s|<\ve_1/4$, $\zeta (s)=0$ for $ |s|>\ve_1/2$.
Since $1-\zeta ({|x|}/{t})=0$ for $|x|<\ve_1 t/4$, then
applying  the charge conservation and (\ref{FFn}), we have for $t\ge 1$
\be \label{zaz}
  \Vert(1-\zeta )W_v(t)\Phi_{1,t}\Vert_{L^2_{-\nu}}
  \!\le\!\fr{C(\ve)\Vert W_v(t)\Phi_{1,t}\Vert_{L^2_0}}{(1+t)^{\nu}}
\!=\!\fr{C(\ve)\Vert\Phi_{1,t}\Vert_{L^2_0}}{(1+t)^{\nu}}
\!\le\!\fr{C_{1}(\ve)\Vert\Phi_{1,t}\Vert_{L^2_\nu}}{(1+t)^{\nu}}
\!\le\!\fr{C_{2}(\ve)\Vert\Phi\Vert_{L^2_\nu}}{(1+t)^{3/2}} 
\ee
since $\nu>3/2$.
\\
\textit{Step iv)}
It remains to estimate $\zeta W_v(t)\Phi_{1,t}$.
Let $\chi_{t}$ be the characteristic function of the ball $ |x|\le \ve_1 t/2$.
We will use the same notation for the operator of multiplication by this characteristic
function. By (\ref{F}), we have
\be\la{zeze}
  \zeta W_v(t)\Phi_{1,t}=\zeta W_v(t)\chi_{t}\Phi
\ee
The matrix kernel of the operator $\zeta W_v(t)\chi_{t}$ is equal to
$$
  W'_v(x-y,t)=\zeta ({|x|}/{t})W_v(x-y,t)\chi_{t}(y)
$$
Since $\zeta ({|x|}/{t})=0$ for $|x|>\ve_1 t/2 $ and $\chi_{t}(y)=0$ 
for $|y|>\ve_1 t/2$. Therefore,  $W'_v(x-y,t)=0$ for $|x-y|>\ve_1 t$.
On the other hand,  $|x-y|\le\ve_1 t$ implies $|x-y-vt|\le\ve t$, 
since $\ve_1+|v|=\ve$ by definition of $\ve_1$. Hence, equality (\ref{23})
and bounds (\ref{vWv})  yield
\begin{equation}\label{qaz}
  |W'_v(x-y,t)|\le C(1+t)^{-3/2},\quad \quad t\ge 1
\end{equation}
The norm of the operator
$\zeta W_v(t)\chi_{t}: L^2_{\nu}\rightarrow L^2_{-\nu}$
is equivalent to the norm of the operator
\be\la{1234}
  \langle x\rangle^{-\nu}\zeta W_v(t)\chi_{t}(y)
\langle y\rangle^{-\nu}:L^2_0 \rightarrow L^2_0
\ee
Therefore,  (\ref{qaz}) implies that operator (\ref{1234}) is Hilbert-Schmidt
operator since $\nu >3/2,$ and its Hilbert-Schmidt norm does not
exceed $C(1+t)^{-3/2}$. Hence, by (\ref{zeze}) and (\ref{FFn}) 
\begin{equation} \label{HS}
  \Vert\zeta W_v(t)\Phi_{1,t}\Vert_{L^2_{-\nu}}
  \le C(1+t)^{-3/2}\Vert\Phi\Vert_{L^2_\nu},
\quad t\ge 1
\end{equation}
Finally, the estimates (\ref {HS}), (\ref{zaz}) and (\ref{zxc}) imply (\ref{lins}).
\end{proof}

\appendix
\section{Computing $\Om(\tau_i,\tau_j)$}
Here we  justify the formulas  (\re{Omega})-(\re{alpha}) for the matrix $\Om$.\\
1) First, the  Parseval identity implies
$$
   \Om(\tau_j,\tau_l)=\langle\pa_j\psi_{v1},\pa_l\psi_{v2}\rangle-
   \langle\pa_j\psi_{v2},\pa_l\psi_{v1}\rangle\!=\int\!\! k_jk_l~dk(\hat\psi_{v1}\cdot\hat\psi_{v2}
   -\hat\psi_{v2}\cdot\hat\psi_{v1})= 0,\quad j,l=1,2,3
$$
since the integrand is odd function.
\medskip\\
2) Second, we consider 
\be\la{j2}
   \Om(\tau_{j+3},\tau_{l+3})=\langle\pa_{v_j}\psi_{v1},\pa_{v_l}\psi_{v2}\rangle-
   \langle\pa_{v_j}\psi_{v2},\pa_{v_l}\psi_{v1}\rangle
\ee
Let us derive the formulas for $\psi_{v1}$ and $\psi_{v2}$.
The first equation of \eqref{stfch} implies
$$
[(v_j\pa_j)^2-\Delta+m^2]\psi_{v}
=[iv_j\pa_j+i\al_j\pa_j-\beta m]\rho_1
$$
Hence
$$
[(v_j\pa_j)^2-\Delta+m^2]\psi_{v1}=
-[v_j\pa_j+\al_1\pa_1+\al_3\pa_3]\rho_2 - [\tilde\al_2\pa_2+\beta m]\rho_1
$$
$$
[(v_j\pa_j)^2-\Delta+m^2]\psi_{v2}=
[v_j\pa_j+\al_1\pa_1+\al_3\pa_3]\rho_1-[\tilde\al_2\pa_2+\beta m]\rho_2
$$
Applying the Fourier transform, we obtain
\be\la{system}
\left.\begin{array}{l}  
\hat\psi_{v1}=\ds\frac{[ivk+i\al_1k_1+i\al_3k_3]\hat\rho_2+
[\al_2k_2-\beta m]\hat\rho_1}{-(vk)^2+k^2+m^2}\\
\hat\psi_{v2}=\ds\frac{-[ivk+i\al_1k_1+i\al_3k_3]\hat\rho_1+[\al_2k_2-\beta m]\hat\rho_2}
{-(vk)^2+k^2+m^2}
\end{array}\right|
\ee
Differentiating, we get
\be\la{derv}
\left.\begin{array}{l} 
   \pa_{v_j}\hat\psi_{v1}=\ds\fr{ik_j\hat\rho_2}{-(vk)^2+k^2+m^2}
   +\fr{2k_jvk\hat\psi_{v1}}{-(vk)^2+k^2+m^2}\\
    \pa_{v_l}\hat\psi_{v2}=\ds\fr{-ik_l\hat\rho_1}{-(vk)^2+k^2+m^2}
   +\fr{2k_lvk\hat\psi_{v2}}{-(vk)^2+k^2+m^2}
\end{array}\right|
\ee
Hence, (\re{j2}) implies
\beqn\nonumber
  \Om(\tau_{j+3},\tau_{l+3})\!\!&=&\!\!\int\!\frac
  {k_jk_l[\hat\rho_1\cdot\hat\rho_2-\hat\rho_2\cdot\hat\rho_1]dk}{(k^2+m^2-(vk)^2)^2}
  +\int\!\frac{4k_jk_l(vk)^2
  [\hat\psi_{v1}\cdot\hat\psi_{v2}-\hat\psi_{v2}\cdot\hat\psi_{v1}]dk}
  {(k^2+m^2-(vk)^2)^2}\\
  \nonumber
  \!\!&+&\!\!\int\!\frac{2ik_jk_lvk[\hat\rho_2\cdot\hat\psi_{v2}+\hat\psi_{v2}\cdot\hat\rho_2
  +\hat\rho_1\cdot\hat\psi_{v1}+\hat\psi_{v1}\cdot\hat\rho_1]}{(k^2+m^2-(vk)^2)^2}=0
\eeqn
since  all integrands are odd functions.
\medskip\\
3) Finally, (\ref{derv}) implies
\beqn \la{p}
   \Om(\tau_j,\tau_{l+3})\!\!\!&=&\!\!\!-\langle\pa_{j}\psi_{v1},\pa_{v_l}\psi_{v2}\rangle+
   \langle\pa_{j}\psi_{v2},\pa_{v_l}\psi_{v1}\rangle+e_j\cdot \pa_{v_l}p_v\\
   \nonumber   
   \!\!\!&=&\!\!\!\int \frac{ik_j\hat\psi_{v1}\cdot[-ik_l\hat\rho_1+2k_lvk\hat\psi_{v2})]
   -ik_j\hat\psi_{v2}\cdot[ik_l\hat\rho_2+2k_lvk\hat\psi_{v1})]}
   {k^2+m^2-(vk)^2}dk +e_j\cdot \pa_{v_l}p_v\\
\nonumber  
\!\!\! &=& \!\!\!\!\int\!\! k_jk_l\frac{-[\hat\psi_{v1}\cdot
\hat\rho_1+\hat\psi_{v2}\cdot\hat\rho_2]
+2ivk[\hat\psi_{v1}\cdot\hat\psi_{v2}-\hat\psi_{v2}\cdot\hat\psi_{v1}]}
{k^2+m^2-(vk)^2}dk +e_j\cdot \pa_{v_l}p_v
\eeqn
Recall, that  $\rho_j(x)$ are even, then $\hat\rho_j(k)$ are
real. Hence (\ref{aij})-(\ref{al-pr}) and (\re{system}) imply
\beqn\la{pp}
\!\!\!\!&&\!\!\!\!\!\!\!\!\!\!\!\!\!\!\!\!
(k^2+m^2-(vk)^2)(\hat\psi_{v1}\cdot\hat\rho_1+\hat\psi_{v2}\cdot\hat\rho_2)
=[\al_2k_2-\beta m]\hat\rho_1\cdot\hat\rho_1
+[\al_2k_2-\beta m]\hat\rho_2\cdot\hat\rho_\\
\nonumber
\!\!\!\!&&+[ivk+i\al_1k_1+i\al_3k_3]\hat\rho_2\cdot\hat\rho_1
-[ivk+i\al_1k_1+i\al_3k_3]\hat\rho_1\cdot\hat\rho_2=-{\cal B}\hat\rho\cdot\hat\rho
\eeqn
\beqn\la{ppp}
\!\!\!\!\!&&(k^2+m^2-(vk)^2)^2(\hat\psi_{v1}\cdot\hat\psi_{v2} -\hat\psi_{v2}\cdot\hat\psi_{v1})
=2i(k^2+m^2-(vk)^2)^2\Im(\hat\psi_{v1}\cdot\hat\psi_{v2})\\
\nonumber
\!\!\!\!\!&&=-2\beta m\hat\rho_1\cdot[ivk+i\al_1k_1+i\al_3k_3]\hat\rho_1
-2[ivk+i\al_1k_1+i\al_3k_3]\hat\rho_2\cdot\beta m\hat\rho_2
=-2ivk{\cal B}\hat\rho\cdot\hat\rho
\eeqn
Substituting (\re{pp}) and (\re{ppp}) into the right hand site of (\re{p}), we obtain
$$
\Om(\tau_j,\tau_{l+3})=\int k_jk_l\Big(\frac{{\cal B}(k)}{(k^2+m^2-(vk)^2)^2} 
+\fr{4(vk)^2{\cal B}(k)}{(k^2+m^2-(vk)^2)^3}\Big)dk+e_j\cdot \pa_{v_l}p_v
$$
that correspond to (\re{Omega}) - (\re{alpha}).
\section{Computing $M^{-1}(i\om)$}
Here we derive formula (\ref{calM}). Denote $F(\om):=-L+H(i\om+0)$ which is  diagonal.
Then by (\re{M}) for $\om\in\R$ we obtain
\be\la{detM}
{\rm det}\,M(i\om)\!=\!{\rm det}\left(\!\!\ba{ll}
i\om E  & -B_v \\
-F(\om) & i\om E
\ea\!\!\right)
\!=-\Big(\om^2+\fr{F_{11}(\om)}{\gamma^3}\Big)
\Big(\om^2+\fr{F_{22}(\om)}{\gamma}\Big)\Big(\om^2+\fr{F_{33}(\om)}{\gamma}\Big)
\ee
where
\be\la{Fjj}
F_{jj}(\om)=\int k_j^2{\cal B}dk\left(\fr1{m^2+k^2-(|v|k_1+\om)^2}-
\fr1{m^2+k^2-(|v|k_1)^2}\right),\,\,\,j=1,2,3
\ee
Formula (\re{detM})
is obvious since both matrices $F(\om)$ and $B_v$ are diagonal, hence the matrix
$M(i\om)$ is equivalent to three independent  matrices $2\times2$.
Namely, let us transpose the columns and rows of the matrix $M(i\om)$
in the order $(142536)$. Then we get the matrix with three $2\times 2$
blocks on the main diagonal.
Therefore, the determinant of $M(i\om)$ is  product of the determinants
of the three matrices. Further,
\be\la{inM}
M^{-1}(i\om)=\left(\ba{ll}
M_{11}(\om)  &  M_{12}(\om) \\
M_{21}(\om)  &  M_{22}(\om)
\ea\right)
\ee
where
$$
M_{11}(\om)=M_{22}(\om)
=\left(\ba{ccc}
\frac{-i\om\gamma^3}{\om^2\gamma^3+F_{11}(\om)}            &  0    & 0   \\
0  & \frac{-i\om\gamma}{\om^2\gamma+F_{22}(\om)}               & 0   \\
0  &           0   &  \frac{-i\om\gamma}{\om^2\gamma+F_{33}(\om)}
\ea\right)
$$
$$
M_{12}
=\left(\ba{ccc}
\frac{-1}{\om^2\gamma^3+F_{11}}             &  0    & 0   \\
0  & \frac{-1}{\om^2\gamma+F_{22}}                & 0   \\
0  &           0   &  \frac{-1}{\om^2\gamma+F_{33}} 
\ea\right),~~
M_{21}
=\left(\ba{ccc}
\frac{-\gamma F_{11}}{\om^2\gamma^3+F_{11}}            &  0    & 0   \\
0  & \frac{-\gamma F_{22}}{\om^2\gamma+F_{22}}               & 0   \\
0  &           0   &  \frac{-\gamma F_{33}}{\om^2\gamma+F_{33}}
\ea\right)
$$
Let us prove that for $\om\in(-\mu,\mu)$
\be\la{fjj}
F_{jj}(\om)=\om^2f_{jj}(\om),\quad
f_{jj}(\om)\in\C^{\infty}(-\mu,\mu), \quad f_{jj}(0)>0
\ee
Indeed,
formula (\ref{Fjj}) implies that $F_{jj}(0)=0$. Differentiating (\ref{Fjj}), we obtain
$$
F_{jj}'(0)=2\int k_j^2{\cal B}(k)dk\frac{|v|k_1}
{(k^2+m^2-(|v|k_1)^2)^2}=0
$$
since integrand is odd function in respect to $k_1$, and
$$
F_{jj}''(0)=2\int k_j^2{\cal B}(k)dk\frac{k^2+m^2+3(|v|k_1)^2}
{(k^2+m^2-(|v|k_1)^2)^3}>0
$$
By (\ref{fjj}) we can represent the matrices $M_{ij}(\om)$ as
$$
M_{11}(\om)=M_{22}(\om)
=\frac {1}{\om}\left(\ba{ccc}
\frac{-i\gamma^3}{\gamma^3+f_{11}(\om)}            &  0    & 0   \\
0  & \frac{-i\gamma}{\gamma+f_{22}(\om)}               & 0   \\
0  &           0   &  \frac{-i\gamma}{\gamma+f_{33}(\om)}
\ea\right)=\frac {1}{\om}{\cal M}_{11}(\om)
$$
\be\la{cMjj}
M_{12}(\om)=\frac {1}{\om^2}
\left(\ba{ccc}
\frac{-1}{\gamma^3+f_{11}(\om)}            &  0    & 0   \\
0  & \frac{-1}{\gamma+f_{22}(\om)}               & 0   \\
0  &           0   &  \frac{-1}{\gamma+f_{33}(\om)}
\ea\right)=\frac {1}{\om^2}{\cal M}_{12}(\om)
\ee
$$
M_{21}(\om)
=\left(\ba{ccc}
\frac{-\gamma^3f_{11}(\om)}{\gamma^3+f_{11}(\om)}            &  0    & 0   \\
0  & \frac{-\gamma f_{22}(\om)}{\gamma+f_{22}(\om)}               & 0   \\
0  &           0   &  \frac{-\gamma f_{33}(\om)}{\gamma+f_{33}(\om)}
\ea\right)={\cal M}_{21}(\om)
$$
where ${\cal M}_{ij}(\om)\in C^{\infty}(-\mu,\mu)$.
\section{Symplectic orthogonality conditions}
Here we derive  conditions (\ref{soin})  from the symplectic orthogonality conditions
(\ref{orth}). First let us compute $\Phi(0)$. Formulas  (\ref{PsiP}) and (\ref{Philam}) imply
$$
(\Phi(0))_j=\langle \hat G_0^{11}\hat\Psi_{01}
+\hat G_0^{12}\hat\Psi_{02},ik_j\hat\rho_1\rangle
+\langle\hat G_0^{11}\hat\Psi_{02}- \hat G_0^{12}\hat\Psi_{01},ik_j\hat\rho_2\rangle,
\quad j=1,2,3
$$
On the other hand, by  (\ref{Delam}) formulas  (\ref{system}) read
$$
\hat\psi_{v1}=-\hat G_0^{11}\hat\rho_2+\hat G_0^{12}\hat\rho_1,\quad
\hat\psi_{v2}=\hat G_0^{11}\hat\rho_1+\hat G_0^{12}\hat\rho_2
$$
Hence, for $j=1,2,3$
\beqn\nonumber
0&=&-\Om(Z_0,\tau_j)=\langle\Psi_{01},\pa_j\psi_{v2}\rangle-
\langle\Psi_{02},\pa_j\psi_{v1}\rangle+P_0\cdot e_j\\
\nonumber
&=&-\langle\Psi_{01},ik_j(\hat G_0^{11}\hat\rho_1+\hat G_0^{12}\hat\rho_2)\rangle
+\langle\Psi_{02},ik_j(\hat G_0^{12}\hat\rho_1-\hat G_0^{11}\hat\rho_2)\rangle
+P_0\cdot e_j=(\Phi(0)+P_0)_j
\eeqn
since $(\hat G_0^{11})^*=-\hat G_0^{11}$, $(\hat G_0^{12})^*=\hat G_0^{12}$.
Hence the first condition (\ref{soin}) follows.
Further, 
$$
\pa_{\lam}\hat G_{\lam}^{11}\Big|_{\lam=0}=\frac{-1-2ivk\hat G_{0}^{11}}{k^2+m^2-(vk)^2},
\quad  \pa_{\lam}\hat G_{\lam}^{12}\Big|_{\lam=0}=\frac{-2ivk\hat G_{0}^{12}}{k^2+m^2-(vk)^2}
$$
Then (\ref{PsiP}) and (\ref{Philam}) imply for $j=1,2,3$
$$
(\Phi'(0))_j\!=\!-\Big\langle \frac{\hat\Psi_{01}\!+\!2ivk(\hat G_{0}^{11}\hat\Psi_{01}+
\hat G_{0}^{12}\hat\Psi_{02})}{k^2+m^2-(vk)^2},ik_j\hat\rho_1\Big\rangle
\!-\!\Big\langle\frac{\hat\Psi_{02}\!+\!2ivk(\hat G_{0}^{11}\hat\Psi_{02}\!-\!
 G_0^{12}\hat\Psi_{01})}{k^2+m^2-(vk)^2},ik_j\hat\rho_2\Big\rangle
$$
On the other hand, from (\ref{system}) and (\ref{derv}) it follows that  for $j=1,2,3$
$$
\pa_{v_j}\hat\psi_{v1}=\frac{i k_j\hat\rho_2+2k_jvk
(-\hat G_0^{11}\hat\rho_2+\hat G_0^{12}\hat\rho_1)}{k^2+m^2-(vk)^2},\quad
\pa_{v_j}\hat\psi_{v2}=\frac{-i k_j\hat\rho_1+2k_jvk
(\hat G_0^{11}\hat\rho_1+\hat G_0^{12}\hat\rho_2)}{k^2+m^2-(vk)^2}
$$
Hence,
\beqn\nonumber
0\!\!\!&=&\!\!\!\Om(Z_0,\tau_{j+3})=\langle\Psi_{01},\pa_{v_j}\psi_{v2}\rangle
-\langle\Psi_{02},\pa_{v_j}\psi_{v1}\rangle+Q_0\cdot\pa_{v_j}p_v\\
\nonumber
\!\!\!&=&\!\!\!\Big\langle \Psi_{01},\frac{-ik_j\hat\rho_1+2k_jvk
(\hat G_0^{11}\hat\rho_1+\hat G_0^{12}\hat\rho_2)}{k^2+m^2-(vk)^2}\Big\rangle
-\Big\langle \Psi_{02},\frac{ik_j \hat\rho_2+2k_jvk
(-\hat G_0^{11}\hat\rho_2+\hat G_0^{12}\hat\rho_1)}{k^2+m^2-(vk)^2}\Big\rangle
\\
\nonumber
\!\!\!&+&\!\!\!Q_0\cdot\pa_{v_j}p_v=(\Phi'(0)+B_v^{-1}Q_0)_j,\quad j=1,2,3
\eeqn
since $Q_0\cdot\pa_{v_j}p_v=Q_0\cdot B_v^{-1}e_j=B_v^{-1}Q_0\cdot e_j$.
Hence the second condition (\ref{soin}) follows.

\end{document}